\theoremstyle{plain}
\newtheorem{theorem}{Theorem}
\newtheorem{lemma}{Lemma}
\newtheorem{proposition}[lemma]{Proposition}
\newtheorem{corollary}[lemma]{Corollary}
\theoremstyle{definition}
\newtheorem{definition}{Definition}
\numberwithin{equation}{section}
\setlist[enumerate,1]{label=\textnormal{(\roman*)}}
\definecolor{darkgreen}{rgb}{0,.3,0}
\definecolor{darkblue}{rgb}{0,0,.5}
\definecolor{darkred}{rgb}{.4,0,0}
\newcommand\scalemath[2]{\scalebox{#1}{\mbox{\ensuremath{\displaystyle #2}}}}
\crefname{figure}{figure}{figures}
\addcolon\space{ \url{#1}}
\space\url{\thefield{eprint:arxiv}}
\tikzstyle{edge} = [black,line width=.25mm]
\tikzstyle{vertex}=[circle,minimum size=1.5mm, draw=black, fill=black, inner sep=0mm]
\newcounter{example}
\NewDocumentEnvironment {example} {o}
{ \refstepcounter{example}
	\pagebreak[2]
  \begin{mdframed}[linewidth=0.8pt,  linecolor=black,
		bottomline=false,topline=false,rightline=false, startcode=\needspace{4\baselineskip}]
    \noindent \textbf{Example  \theexample.}~}
	{ \end{mdframed}
}
\newcommand{\Graph}{G}
\newcommand{\laplacian}{\mathbbm{L}}
\newcommand{\duallaplacian}{\Lambda}
\newcommand{\cyclelaplacian}{\duallaplacian}
\newcommand{\explaplacian}{\mathbbm{M}}
\newcommand{\identitymatrix}{\mathbb{1}}
\newcommand{\zeromatrix}{\mathbb{0}}
\newcommand{\incidencematrix}{\mathbbm{I}}
\newcommand{\edgematrix}{\mathbbm{D}}
\newcommand{\cycleincidencematrix}{\mathcal{C}}
\newcommand{\cyclebasis}{\cycleincidencematrix}
\newcommand{\fundamentalcbasis}{\mathcal{T}}
\newcommand{\pathmatrix}{\mathcal{P}}
\newcommand{\dodgson}{\psi}
\newcommand{\firstsymanzik}{\dodgson}
\newcommand{\loopnumber}{\ell}
\newcommand{\cyclespace}{{H_1}}
\newcommand{\cutspace}{{F}}
\DeclarePairedDelimiter{\abs}{\lvert}{\rvert}
\DeclarePairedDelimiter{\inner}{\langle}{\rangle}
\DeclareMathOperator{\Pf}{Pf}
\DeclareMathOperator{\Aut}{Aut}
\DeclareMathOperator{\sgn}{sgn}
\DeclareMathOperator{\haf}{haf}
\DeclareMathOperator{\diag}{diag}
\renewcommand{\d}{ \,\mathrm{d}}
\newcommand{\defas}{\coloneqq}
\newcommand{\Transpose}{\intercal}
\newcommand{\mb}{\mathbbm}
\newcommand{\mc}{\mathcal }
\newcommand{\Z}{\mathbbm{Z}}
\newcommand{\R}{\mathbbm{R}}
\newcommand{\Q}{\mathbbm{Q}}
\newcommand{\GL}{\mathsf{GL}}
\newcommand{\concatm}[2]{\left( #1 \,\vert\, #2 \right)}
\newcommand{\cGC}{\mathsf{GC}}
\newcommand{\longequation}{ \crefname{equation}{equation}{equations}}
\newcommand{\shortequation}{  \crefname{equation}{eq.}{eqs.}}
\newcommand{\penroseB}{
	\begin{tikzpicture}[
		every thick rhombus/.style={
			every rhombus/.try,
			draw=black,
			fill=white,
		},
		every tile pic/.style={scale=.08},
		]
		\path (-.02,0) -- (.05,0);
		\pic[rotate=-18,thick rhombus,name=a0];
		\foreach[evaluate=\k as \kmo using int(\k-1)] \k in {1,...,4}
		{
			\pic[thick rhombus,name=a\k,align with={a\kmo} along A];
		}
	\end{tikzpicture}
}
\newcommand{\penroseD}{
	\begin{tikzpicture}[
		every thick rhombus/.style={
			every rhombus/.try,
			draw=black,
			fill=white,
		},every thin rhombus/.style={
			every rhombus/.try,
			draw=black,
			fill=white,
		},
		every tile pic/.style={scale=.07},
		]
		\path (-.1,0) -- (.1,0);

		\pic[rotate= 126,thick rhombus,name=t0];

		\pic[ thin rhombus,name=f1,align with=t0 along a];
		\pic[ thin rhombus,name=f2,align with=t0 along A];
		\pic[ thick rhombus,name=t1,align with=f2 along B];
		\pic[ thick rhombus,name=t2,align with=t1 along A];
		\pic[ thick rhombus,name=t3,align with=t2 along A];
		\pic[ thin rhombus,name=f3,align with=t1 along a];
		\pic[ thin rhombus,name=f4,align with=t3 along B];
		\pic[ thin rhombus,name=f5,align with=f4 along a];
		\pic[ thick rhombus,name=t4,align with=t0 along b];
	\end{tikzpicture}
}
\def\@fnsymbol#1{\ensuremath{\ifcase#1   \or%
		\protect\penroseD %
		\or%
		\protect\penroseB %
		\or \dagger 		 \or \ddagger\or
		\mathsection \else\@ctrerr\fi}}
\title{The Topological  form is the   Pfaffian form}
\newcommand{\email}[1]{\href{mailto:#1}{#1}}
\author{
  Paul-Hermann Balduf%
  \,\thanks{Mathematical Institute, University of Oxford, OX2 6GG, UK, \orcidlink{0000-0003-4475-3031}  \email{paul-hermann.balduf@maths.ox.ac.uk}}~
  ~and
  Simone Hu%
  \thanks{Mathematical Institute, University of Oxford, OX2 6GG, UK, \orcidlink{0009-0002-0063-3695} \email{simone.hu@maths.ox.ac.uk}}
}
\begin{document}

\maketitle

\renewcommand{\thefootnote}{\arabic{footnote}}

\begin{abstract}
	\noindent
	For a given graph $\Graph$, 	Budzik, Gaiotto, Kulp, Wang, Williams, Wu,  Yu, and the first author studied a \enquote{topological} differential form $\alpha_\Graph$, which expresses violations of BRST-closedness of a quantum field theory along a single topological direction.
	In a seemingly unrelated context,   Brown, Panzer, and the second author studied a \enquote{Pfaffian} differential form $\phi_\Graph$, which is used to construct cohomology classes of the odd commutative graph complex.
	We  give an explicit combinatorial proof that  $\alpha_\Graph$ coincides with  $\phi_\Graph$.   We also discuss the equivalence of several  properties of these forms, which had been established independently for both contexts in previous work.

\end{abstract}

\tableofcontents

\section{Introduction and results}

In this paper, a \enquote{graph $\Graph$} will always mean  a connected, finite, not necessarily simple (self-loops and multiedges are allowed) graph, with directed edges $E_\Graph = \left \lbrace e_1, \ldots, e_{\abs{E_\Graph}} \right \rbrace $ and  vertices $V_\Graph = \left \lbrace v_1, \ldots, v_{\abs{V_\Graph}} \right \rbrace $. The  labellings and the directions of edges are arbitrary, but fixed. \\
The notation $A \defas B$ means that $A$ is defined to be equal to $B$.

\subsection{The main result}
The authors of \cite{gaiotto_higher_2024} introduced a \emph{topological differential form} associated to a graph $\Graph$
\begin{align*}
	\alpha_\Graph \defas \frac{1}{\pi^{\frac{\abs{E_\Graph}}{2}}}\idotsint \limits_{\R^{\abs{V_\Graph}-1}}\bigwedge_{e\in E_\Graph}e^{-s_e^2} \;  \d s_e,
\end{align*}
which is the parametric Feynman integrand to compute quantum corrections to BRST-closedness in a 1-dimensional topological QFT.
We will give more details and background of the topological form in \cref{sec:intro_alpha}.
\begin{example}{}\label{ex:dunces_cap}
  Let $G$ be the dunce's cap graph, shown on the right.  \\[-.12cm]
	\begin{minipage}{.65\linewidth}
		Its  edges and vertices are labelled as indicated.
		Each of the four edges corresponds to one Schwinger parameter $a_e$. The topological form of this graph has been computed in \cite[Example 13]{balduf_combinatorial_2024}.
	\end{minipage}\hfill
	\begin{minipage}{.25\linewidth}
		\vspace{-.3cm}
		 \begin{tikzpicture}[baseline=-3]
		 	\node[vertex, label=below:$v_1$](v1) at (1,1){};
		 	\node[vertex, label=below:$v_2$ ](v2) at (-.5,-.5){};
		 	\node[vertex, label=below:$v_3$ ](v3) at (2.5,-.5){};

		 	\draw[edge, >=Triangle, decoration={markings, mark=at position .85 with {\arrow{>}}}, postaction={decorate}, ](v2) to node[pos=.5, left]{$a_1$}   (v1) ;
		 	\draw[edge, >=Triangle, decoration={markings, mark=at position .25 with {\arrow{<}}}, postaction={decorate}, ](v3) to node[pos=.5, right]{$a_2$}(v1);
		 	\draw[edge, >=Triangle, decoration={markings, mark=at position .25 with {\arrow{<}}}, postaction={decorate},  bend angle=30,bend left](v3) to node[pos=.5, below]{$a_4$}(v2);
		 	\draw[edge, >=Triangle, decoration={markings, mark=at position .25 with {\arrow{<}}}, postaction={decorate},   bend angle=20, bend right](v3) to node[pos=.5, below]{$a_3$}(v2);
		 \end{tikzpicture}
	\end{minipage}

	\begin{align*}
    \scalemath{0.95}{\alpha_G = \frac{ -a_4 (\d a_1 \wedge \d a_3 + \d a_2 \wedge \d a_3) + a_3 (\d a_1 \wedge \d a_4 + \d a_2 \wedge \d a_4) -  (a_1 + a_2) \d a_3 \wedge \d a_4}{8\pi(a_1 a_3 + a_2 a_3 + a_1 a_4 + a_2 a_4 + a_3 a_4)^{3/2}}}.
	\end{align*}
\end{example}
On the other hand, the authors of \cite{brown_unstable_2024} introduced a \emph{Pfaffian differential form}
	\begin{align*}
	\phi_\Graph  &\defas \frac{1}{(-2\pi)^{\frac \loopnumber 2}} \frac{\Pf\left( \d \duallaplacian_{\cycleincidencematrix} \cdot \duallaplacian_{\cycleincidencematrix}^{-1} \cdot \d \duallaplacian_{\cycleincidencematrix} \right)   }{\sqrt{\det{\duallaplacian_{\cycleincidencematrix}}}},
\end{align*}
whose integrals contain information about the cohomology of the odd graph complex $\cGC_3$. We will review background and definitions in \cref{sec:intro_phi}.
\begin{example}\label{ex:dunces-pff}
  Using \cite[Example 2.3]{brown_unstable_2024}, the Pfaffian form for the dunce's cap graph with
  \[ \cyclelaplacian_{\cyclebasis} =
    \begin{pmatrix}
			a_1 + a_2 + a_4 & a_4 \\
			a_4 & a_3 + a_4
    \end{pmatrix}, \quad\text{turns out to be exactly}\quad \phi_G = (-4) \cdot \alpha_G. \]
\end{example}

The main result of the present work is that the topological form  and the Pfaffian form coincide.
\begin{theorem}\label{thm:main_thm}
	Let $\Graph$ be a connected graph with loop number $\loopnumber \geq 0$, and let $\phi_\Graph$ be its Pfaffian form (\cref{def:pff-form}) and $\alpha_\Graph$ its topological form (\cref{def:alpha}). Then
	\begin{align*}
		\alpha_\Graph &= \frac{\det\concatm{\cycleincidencematrix}{\pathmatrix}}{2^{\loopnumber}}\cdot \phi_\Graph.
	\end{align*}
	Here, $\concatm{\cycleincidencematrix}{\pathmatrix}  $ is the concatenation of the cycle incidence matrix $\cycleincidencematrix$ (\cref{def:cycle_incidence_matrix}) used to define $\phi_{\Graph}$ and a path matrix $\pathmatrix$ (\cref{def:pathmatrix}). The determinant ~$\det \concatm{\cycleincidencematrix}{\pathmatrix}  \in \left \lbrace +1,-1 \right \rbrace  $ depends only on choices of labellings, edge directions, and the cycle basis of $\Graph$, and is independent of the choice of $\pathmatrix$, see \cref{sec:concatenated_matrices}.
\end{theorem}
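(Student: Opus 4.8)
The plan is to bring both $\phi_\Graph$ and $\alpha_\Graph$ into the shape of an explicit sum over the spanning trees of $\Graph$, with the \emph{same} hafnian coefficients, so that the theorem follows by comparing the coefficient of each $\bigwedge_{e\in S}\d a_e$. Throughout write $\edgematrix=\diag(a_e)$, let $\incidencematrix$ be a reduced incidence matrix of size $(\abs{V_\Graph}-1)\times\abs{E_\Graph}$, and $\cycleincidencematrix$ the cycle incidence matrix of \cref{def:cycle_incidence_matrix}, so $\duallaplacian_{\cycleincidencematrix}=\cycleincidencematrix\,\edgematrix\,\cycleincidencematrix^{\Transpose}$ and $\det\duallaplacian_{\cycleincidencematrix}=\firstsymanzik_\Graph$ is the first Symanzik polynomial. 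For a spanning tree $T$ put $S\defas E_\Graph\setminus T$ (so $\abs S=\loopnumber$); $\incidencematrix_T$ and $\cycleincidencematrix_S$ denote the corresponding square submatrices, which have determinant $\pm1$ and are invertible precisely when $T$ is a spanning tree. For $\phi_\Graph$, substitute $\d\duallaplacian_{\cycleincidencematrix}=\cycleincidencematrix\,(\d\edgematrix)\,\cycleincidencematrix^{\Transpose}$ with $\d\edgematrix=\diag(\d a_e)$, so $\d\duallaplacian_{\cycleincidencematrix}\cdot\duallaplacian_{\cycleincidencematrix}^{-1}\cdot\d\duallaplacian_{\cycleincidencematrix}=\cycleincidencematrix\,(\d\edgematrix)\,\bigl(\cycleincidencematrix^{\Transpose}\duallaplacian_{\cycleincidencematrix}^{-1}\cycleincidencematrix\bigr)\,(\d\edgematrix)\,\cycleincidencematrix^{\Transpose}$; the inner matrix has entries $\d a_e\,M_{ef}\,\d a_f$ and is skew-symmetric as a matrix of $2$-forms, with vanishing diagonal since $\d a_e\wedge\d a_e=0$. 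The Cauchy--Binet formula for Pfaffians expands $\Pf$ over $\loopnumber$-subsets, and the elementary identity $\Pf\bigl((\d a_e\,M_{ef}\,\d a_f)_{e,f\in S}\bigr)=\haf\bigl((M_{ef})_{e,f\in S}\bigr)\bigwedge_{e\in S}\d a_e$ (move the anticommuting $\d a_e$ past the commuting $M_{ef}$) then gives
\begin{align*}
	\phi_\Graph=\frac{1}{(-2\pi)^{\loopnumber/2}\sqrt{\firstsymanzik_\Graph}}\sum_{T}\det\bigl(\cycleincidencematrix_S\bigr)\,\haf\bigl(\cycleincidencematrix_S^{\Transpose}\duallaplacian_{\cycleincidencematrix}^{-1}\cycleincidencematrix_S\bigr)\bigwedge_{e\in S}\d a_e,
\end{align*}
the sum running over spanning trees (the remaining $\loopnumber$-subsets contribute $\det\cycleincidencematrix_S=0$).

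For $\alpha_\Graph$, split each $\d s_e$ in $\bigwedge_e\d s_e$ into a \enquote{position part} (a combination of the $\abs{V_\Graph}-1$ integration differentials $\d x_v$, with a coefficient depending on $a_e$) and a \enquote{Schwinger part} proportional to $\d a_e$. A term of the wedge expansion survives only when the position parts are taken on a linearly independent column set $T$ of $\incidencematrix$, i.e.\ a spanning tree, and the Schwinger parts on the cotree $S$; its position factor is then $\pm\det\incidencematrix_T$ up to a monomial in the $a_e$ and the sign $\epsilon_T$ of the shuffle sorting the edge list into \enquote{$T$ then $S$}. Integrating the surviving Gaussian $e^{-\sum_e s_e^2}$ against the product of the $\loopnumber$ linear forms from the Schwinger parts, Wick's theorem produces $2^{-\loopnumber/2}\haf\bigl((\incidencematrix_e^{\Transpose}(\incidencematrix\edgematrix^{-1}\incidencematrix^{\Transpose})^{-1}\incidencematrix_f)_{e,f\in S}\bigr)$; collecting all powers of $\pi$, of $2$, and of the $a_e$ (using $\det(\incidencematrix\edgematrix^{-1}\incidencematrix^{\Transpose})=\firstsymanzik_\Graph/\prod_e a_e$), this yields
\begin{align*}
	\alpha_\Graph=\frac{1}{\pi^{\loopnumber/2}\,2^{3\loopnumber/2}\,\sqrt{\firstsymanzik_\Graph}}\sum_{T}\epsilon_T\,(-1)^{\loopnumber}\det\bigl(\incidencematrix_T\bigr)\,\frac{\haf\bigl((\incidencematrix_e^{\Transpose}(\incidencematrix\edgematrix^{-1}\incidencematrix^{\Transpose})^{-1}\incidencematrix_f)_{e,f\in S}\bigr)}{\prod_{e\in S}a_e}\bigwedge_{e\in S}\d a_e.
\end{align*}
(If \cref{def:alpha} normalises the $s_e$ differently, the same scheme applies with different but equally explicit prefactors.)

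The two sums are matched using one linear-algebra identity: $P\defas\edgematrix^{-1/2}\incidencematrix^{\Transpose}(\incidencematrix\edgematrix^{-1}\incidencematrix^{\Transpose})^{-1}\incidencematrix\,\edgematrix^{-1/2}$ is the orthogonal projector of $\R^{\abs{E_\Graph}}$ onto $\edgematrix^{-1/2}$ times the cut space, hence $\identitymatrix-P=\edgematrix^{1/2}\cycleincidencematrix^{\Transpose}\duallaplacian_{\cycleincidencematrix}^{-1}\cycleincidencematrix\,\edgematrix^{1/2}$ because the cut and cycle spaces are orthogonal complements; comparing entries gives $a_e a_f\,(\cycleincidencematrix^{\Transpose}\duallaplacian_{\cycleincidencematrix}^{-1}\cycleincidencematrix)_{ef}=a_e\delta_{ef}-\incidencematrix_e^{\Transpose}(\incidencematrix\edgematrix^{-1}\incidencematrix^{\Transpose})^{-1}\incidencematrix_f$. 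Restricting to $S\times S$ and using that a hafnian ignores the diagonal, that $\haf(-N)=(-1)^{\loopnumber/2}\haf(N)$, and that $\haf(\edgematrix_S^{-1}N\edgematrix_S^{-1})=(\prod_{e\in S}a_e)^{-1}\haf(N)$, one gets $\haf\bigl((\incidencematrix_e^{\Transpose}(\incidencematrix\edgematrix^{-1}\incidencematrix^{\Transpose})^{-1}\incidencematrix_f)_{e,f\in S}\bigr)/\prod_{e\in S}a_e=(-1)^{\loopnumber/2}\haf\bigl(\cycleincidencematrix_S^{\Transpose}\duallaplacian_{\cycleincidencematrix}^{-1}\cycleincidencematrix_S\bigr)$. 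Substituting this into the formula for $\alpha_\Graph$ makes the hafnian coefficients coincide with those of $\phi_\Graph$ term by term, and all numerical and $(-1)^{\loopnumber/2}$ factors combine so that the $T$-summand of $\alpha_\Graph$ equals $\tfrac{1}{2^{\loopnumber}}\cdot\tfrac{\epsilon_T\det\incidencematrix_T}{\det\cycleincidencematrix_S}$ times that of $\phi_\Graph$. It then remains to show that $\epsilon_T\det\incidencematrix_T/\det\cycleincidencematrix_S$ is independent of $T$ and equals $\det\concatm{\cycleincidencematrix}{\pathmatrix}$, which is exactly the determinant identity of \cref{sec:concatenated_matrices} (Laplace-expand $\det\concatm{\cycleincidencematrix}{\pathmatrix}$ along the $\cycleincidencematrix$-rows and use the total unimodularity of $\cycleincidencematrix$ and of the path matrix $\pathmatrix$, \cref{def:pathmatrix}). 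The main work — and the likeliest source of slips — is the bookkeeping of signs, powers of $2$, and monomials in the topological-form computation and the identification of $\epsilon_T$; the conceptual heart is the projector identity $\identitymatrix-P=\edgematrix^{1/2}\cycleincidencematrix^{\Transpose}\duallaplacian_{\cycleincidencematrix}^{-1}\cycleincidencematrix\,\edgematrix^{1/2}$ together with the fact that the hafnian is blind to diagonal entries.
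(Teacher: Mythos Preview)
Your approach is correct and closely parallels the paper's, with two structural differences worth noting. First, the paper does \emph{not} rederive the spanning-tree expansion of $\alpha_\Graph$ from \cref{def:alpha}; instead it quotes \cref{lem:dbarT} (Theorem~1 of \cite{balduf_combinatorial_2024}), which already packages the Gaussian/Wick computation you sketch. Your reconstruction of that result is correct in outline, but---as you yourself flag---the prefactor and sign bookkeeping is delicate, and your parenthetical escape clause about the normalisation of the $s_e$ signals that you have not pinned it down. Second, the paper inserts Dodgson polynomials as a common currency: via \cref{lem:laplacian_inverse_dodgson} it identifies $(\cycleincidencematrix\cyclelaplacian_{\cycleincidencematrix}^{-1}\cycleincidencematrix^\Transpose)_{e_i,e_j}=(-1)^{i+j}\dodgson_\Graph^{e_i,e_j}/\firstsymanzik_\Graph$, so that both \cref{lem:dbarT} and the Pfaffian expansion (\cref{lem:phi_dodgson}) land on the \emph{same} sum $\sum_\sigma\dodgson_\Graph^{\sigma(f_1),\sigma(f_2)}\cdots$, and the comparison reduces to the sign relation of \cref{lem:signs}. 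You bypass Dodgson polynomials and match the hafnians directly using the projector identity, which is exactly the content of \cref{lem:laplacian_cyclelaplacian_inverse}; the observation that the hafnian ignores diagonals absorbs the $\delta_{ef}$ term. This is arguably more streamlined, but it loses the explicit Dodgson-polynomial formula for $\phi_\Graph$ (\cref{lem:phi_dodgson}), which the paper regards as of independent interest. Finally, your conventions transpose $\incidencematrix$ and $\cycleincidencematrix$ relative to the paper's (so e.g.\ your $\cycleincidencematrix\edgematrix\cycleincidencematrix^\Transpose$ is the paper's $\cycleincidencematrix^\Transpose\edgematrix\cycleincidencematrix$); this is internally consistent but worth flagging when reading the two side by side.
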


The proof of the theorem requires some technical lemmas, and a new formula for the Pfaffian form in terms of Dodgson polynomials, which will be established in \cref{sec:definitions_proofs}.
In order to make the article accessible to a broad audience, that section also includes a rather elementary introduction and review of the required concepts and definitions from graph theory.

Before coming to the proof, we review the background of the topological form $\alpha_\Graph$ in \cref{sec:intro_alpha}, and the background of the Pfaffian form $\phi_\Graph$ in \cref{sec:intro_phi}. Then, in \cref{sec:consequences}, we  comment on the consequences that follow from \cref{thm:main_thm}. In particular, we demonstrate that statements that had been proven and/or observed on either side have their natural counterparts on the other side of the equivalence given by \cref{thm:main_thm}.

\subsection[The topological form alpha]{The topological form $\alpha_\Graph$}\label{sec:intro_alpha}

In \cite{budzik_feynman_2023,gaiotto_higher_2024,balduf_combinatorial_2024,wang_factorization_2024}, a differential form $\alpha_\Graph$ is examined, which computes quantum corrections to a BRST-differential in a topological theory.

The BRST-formalism is a way to systematically handle gauge redundancies in  field theories, see e.g. \cite{henneaux_lectures_1990} for a physical or \cite{figueroa-ofarrill_brst_2006} for a mathematical exposition. The idea is to construct a sequence of \enquote{ghost fields} and \enquote{antifields}, together with a \emph{BRST differential} $Q_\text{class}$ which acts as a differential operator on fields. $Q_\text{class}$ turns the space of fields into a chain complex such that physical, gauge-invariant observables $\mc O_\text{phys}$ are represented as the 0\textsuperscript{th} cohomology group, that is, $Q_\text{class}\mc O_\text{phys}=0$ and there is no  $\tilde{\mc O}$ with the property that $\mc O_\text{phys} = Q_\text{class} \tilde{\mc O}$.

Even when an observable is BRST-closed in a classical field theory, (self-) interactions in the quantized theory can potentially lead to violations of classical BRST closedness at the quantum level. In
\cite{gaiotto_higher_2024}, a formalism is developed to compute these anomalies perturbatively. Starting from a free quantum field theory with BRST differential $Q_\text{free}$, one adds an interaction term $\mc I$ to the action. This new interaction contributes to Feynman integrals, and the result may or may not be closed under the original operator $Q_\text{free}$. If $\mc O_1, \ldots, \mc O_n$ are local operators (which amount to the vertices of Feynman diagrams), the violation of $Q$-closedness schematically defines a \emph{bracket}
\begin{align}\label{def:bracket}
\left \lbrace \mc O_1, \ldots, \mc O_n \right \rbrace &\defas Q_\text{free} \left( \int_{\mb R^{d(n-1)}} \mc O_1 \cdots \mc O_n \right)  .
\end{align}

The first bracket $\left \lbrace \mc I \right \rbrace = Q_\text{free}\mc I$ describes leading (i.e. classical) violation of BRST symmetry by $\mc I$, and one typically only considers such $\mc I$ where $\left \lbrace \mc I \right \rbrace =0$. Then,   $\left \lbrace \mc I, \cdot \right \rbrace $ describes the BRST-violation introduced by $\mc I$ at leading quantum order, $\left \lbrace \mc I, \mc I, \cdot \right \rbrace $ at next order, etc. Conversely, these are the \enquote{correction} terms by which $Q_\text{free}$ needs to be modified in order to obtain a differential $Q_\text{quantum}$ which describes an unbroken symmetry in presence of $\mc I$, schematically
\begin{align*}
	Q_\text{quantum}X := Q_\text{free}- \left \lbrace \mc I,X \right \rbrace - \left \lbrace \mc I,\mc I, X \right \rbrace - \ldots.
\end{align*}
The BRST-formalism requires that $Q_\text{quantum}^2\equiv 0$. This amounts to an infinite set of \emph{quadratic relations} between the brackets $\left \lbrace \mc I, \ldots \right \rbrace $, which thereby  obtain the structure of a $L_\infty$-algebra.  One may also view the brackets from the perspective of factorization algebras \cite{costello_factorization_2023}, which, in the topological case, relate to the little-disks operad, see e.g. \cite{beem_secondary_2020,amabel_notes_2023} or the appendix of \cite{gaiotto_higher_2024}.

Generically, operators $\mc O_j$   depend on spacetime coordinates, and this dependence has to be taken into account in a particular way when computing the brackets $\left \lbrace \mc O_j, \ldots \right \rbrace $, see \cite[Sec.~5]{budzik_feynman_2023}. For the present case of a \emph{topological} theory, the only dependence is on how many (and which)  external legs a Feynman diagram has, but not their spacetime coordinates. Let $\varphi_{i,v}$ be the external (=non Wick-contracted) fields at vertices $v$, then the bracket corresponds to a sum of Feynman integrals $I_\Graph$ with symmetry factor $\abs{\Aut(\Graph)}^{-1}$,   schematically \cite[Eq.~(4.11)]{gaiotto_higher_2024},
\begin{align}\label{bracket_integral}
	\left \lbrace \mc O_1, \mc O_2, \ldots  \right \rbrace &=\sum_{\Graph} \frac{1}{\abs{\Aut(\Graph)}}I_\Graph \prod_{v\in V_\Graph} \prod_i \varphi_{i,v}.
\end{align}
In \cite{budzik_feynman_2023}, a particular choice of integration variables for the Feynman integrals $I_\Graph$ was introduced. On the one hand, it regularizes the integrals, and on the other hand, it realizes the quadratic relations between integrals $I_\Graph$  as an explicit geometric factorization of their integration domain, called the \emph{operatope}.
These variables start from a position-space representation, where $x^+_e, x^-_e \in \mb R$ are the positions of the end vertices of an edge $e$. Note that as $\alpha_\Graph$ describes a single topological direction, these $x^\pm_e$ are not vectors. One then introduces a Schwinger parameter $a_e$ for each edge, and lets
\begin{align}\label{def:se}
	s_e &\defas \frac{x^+_e-x^-_e}{\sqrt{a_e}} \in \R.
\end{align}

\begin{definition}\label{def:alpha}
	With $s_e$ from \cref{def:se}, the \emph{topological form} is the parametric integrand of the   integrals $I_\Graph=\int \alpha_\Graph$ in \cref{bracket_integral}, given explicitly by   \cite[Eq.~(3.53)]{gaiotto_higher_2024},
	\begin{align*}
		\alpha_\Graph \defas \frac{1}{\pi^{\frac{\abs{E_\Graph}}{2}}}\idotsint \limits_{\R^{\abs{V_\Graph}-1}}\bigwedge_{e\in E_\Graph}e^{-s_e^2} \;  \d s_e.
	\end{align*}
\end{definition}

This $\alpha_\Graph$ is a differential form in Schwinger parameters $\left \lbrace a_e \right \rbrace $ of degree equal to the loop number\footnote{The loop number is the number of linearly independent cycles (\cref{def:loopnumber}); a tree is a graph without cycles.} of the graph. The power of $\pi$ is chosen such that   $\alpha_\Graph=\pm 1$ if $\Graph$ is a tree, and the   integral over   parametric space, $I_\Graph =\int \alpha_\Graph$, does not involve any further normalization factors.

Despite its simple appearance, $\alpha_{\Graph}$ is generally a complicated polynomial in $a_e$ and $\d a_e$.
A key result of \cite{balduf_combinatorial_2024} was the following explicit formula (definitions will be introduced in \cref{sec:definitions_proofs}):
\begin{proposition}[Theorem 1 of \cite{balduf_combinatorial_2024}]\label{lem:dbarT}
  Let $T$ be a spanning tree in $G$ and denote the complement by  $\{f_1,\ldots,f_{\loopnumber}\} \defas \overline{T}$ , where the edges $f_j$ come in the increasing order induced by their labelling.
  Then,
  \begin{align*}
	\alpha_\Graph &=
  \frac{1}{ \pi^{\frac \loopnumber 2} 4^{\loopnumber} \left( \frac{\loopnumber}{2} \right) ! \cdot \firstsymanzik_{\Graph}^{\frac{\loopnumber+1}{2}}}
  \sum_{\substack{T \textnormal{ spanning} \\ \textnormal{tree}}}
  \det \left( \incidencematrix[T] \right)
    \left(\sum_{\sigma \in \mathfrak{S}_{\overline T}} \dodgson_{\Graph}^{ \sigma(f_1) ,\sigma(f_2)    }\cdots \dodgson_{\Graph}^{ \sigma(f_{\loopnumber-1}), \sigma(f_\loopnumber)   } \right) \bigwedge_{f\not\in T} \d a_f.
	\end{align*}
  Here, $\incidencematrix$ is the vertex incidence matrix of $G$ (\cref{def:vertex_incidence_matrix}), $\dodgson_{\Graph}^{e_i,e_j}$ are edge-indexed Dodgson polynomials (\cref{def:Dodgson_polynomial}), and the sum goes over all permutations $\sigma$ of the $\loopnumber$ edges in $\overline T$.
\end{proposition}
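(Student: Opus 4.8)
(If $\loopnumber$ is odd then both $\phi_\Graph$ and $\alpha_\Graph$ vanish and there is nothing to prove, so assume $\loopnumber$ even.) The plan is to rewrite $\phi_\Graph$ so that it has exactly the shape of the formula for $\alpha_\Graph$ in \cref{lem:dbarT}, and then compare the two expressions coefficient by coefficient. Write $\edgematrix\defas\diag(a_{e_1},\dots,a_{e_{\abs{E_\Graph}}})$ for the diagonal matrix of Schwinger parameters and $\d\edgematrix$ for the diagonal matrix with entries $\d a_e$, so that $\duallaplacian_\cycleincidencematrix=\cycleincidencematrix\edgematrix\cycleincidencematrix^\Transpose$ and $\d\duallaplacian_\cycleincidencematrix=\cycleincidencematrix(\d\edgematrix)\cycleincidencematrix^\Transpose$, and recall $\det\duallaplacian_\cycleincidencematrix=\firstsymanzik_\Graph$. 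Both $\phi_\Graph$ and $\alpha_\Graph$ turn out to be $\loopnumber$-forms in the $\d a_e$ with coefficients that are rational functions of common denominator $\firstsymanzik_\Graph^{(\loopnumber+1)/2}$; since the wedge products $\bigwedge_{f\in S}\d a_f$ over $\loopnumber$-element subsets $S\subseteq E_\Graph$ are linearly independent, it suffices to match the coefficient of each such $\bigwedge_{f\in S}\d a_f$, and both sides have zero coefficient unless $S=\overline T$ for a spanning tree $T$. So the theorem reduces to one scalar identity per spanning tree.

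\textbf{A Dodgson formula for the Pfaffian form.} The main lemma I would prove is that $\phi_\Graph$ has the structure of \cref{lem:dbarT}. First factor
\[ \d\duallaplacian_\cycleincidencematrix\cdot\duallaplacian_\cycleincidencematrix^{-1}\cdot\d\duallaplacian_\cycleincidencematrix=\cycleincidencematrix\,B\,\cycleincidencematrix^\Transpose,\qquad B\defas(\d\edgematrix)\,\mathcal{Q}\,(\d\edgematrix),\quad \mathcal{Q}\defas\cycleincidencematrix^\Transpose\duallaplacian_\cycleincidencematrix^{-1}\cycleincidencematrix, \]
where $\mathcal{Q}$ is a symmetric $\abs{E_\Graph}\times\abs{E_\Graph}$ matrix of rational functions; since $\d\edgematrix$ is diagonal and $\mathcal{Q}$ has scalar entries, $B$ is antisymmetric with $2$-form entries $B_{ee'}=\mathcal{Q}_{ee'}\,\d a_e\wedge\d a_{e'}$, so $\Pf(B)$ is defined. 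The Pfaffian analogue of the Cauchy--Binet formula gives $\Pf(\cycleincidencematrix B\cycleincidencematrix^\Transpose)=\sum_{\abs S=\loopnumber}\det(\cycleincidencematrix_S)\,\Pf(B_S)$ over column-subsets $S$, and $\det(\cycleincidencematrix_S)\in\{-1,0,+1\}$ is nonzero precisely when $\overline S$ is a spanning tree. A short computation then evaluates $\Pf(B_S)=\haf(\mathcal{Q}_S)\bigwedge_{f\in S}\d a_f$: in the permutation expansion of $\Pf(B_S)$, pulling the $1$-forms $\d a_e$ to the front contributes a second copy of the permutation sign, so the signs cancel in pairs, and the symmetry of $\mathcal{Q}$ collapses the remaining sum over ordered pairings into the \emph{unsigned} Hafnian. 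Finally, writing $c_e$ for the $e$-th column of $\cycleincidencematrix$, we have $\firstsymanzik_\Graph\,\mathcal{Q}_{ee'}=c_e^\Transpose\operatorname{adj}(\duallaplacian_\cycleincidencematrix)c_{e'}$, which I would identify with the edge-indexed Dodgson polynomial $\dodgson_\Graph^{e,e'}$ of \cref{def:Dodgson_polynomial}, up to a sign $\eta_e\eta_{e'}$ factoring over the two edges (very likely $\eta_e\equiv 1$ with the paper's conventions). Combining these,
\[ \phi_\Graph=\frac{1}{(-2\pi)^{\loopnumber/2}\,\firstsymanzik_\Graph^{(\loopnumber+1)/2}}\sum_{\substack{T\ \mathrm{spanning}\\ \mathrm{tree}}}\det(\cycleincidencematrix_{\overline T})\Bigl(\prod_{e\notin T}\eta_e\Bigr)\Bigl(\sum_{\mu}\ \prod_{\{e,e'\}\in\mu}\dodgson_\Graph^{e,e'}\Bigr)\bigwedge_{f\notin T}\d a_f, \]
the inner sum over perfect matchings $\mu$ of $\overline T$.

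\textbf{Matching the constants.} Since $\operatorname{adj}(\duallaplacian_\cycleincidencematrix)$ is symmetric, $\dodgson_\Graph^{e,e'}=\dodgson_\Graph^{e',e}$, so the permutation sum $\sum_{\sigma\in\mathfrak{S}_{\overline T}}\dodgson_\Graph^{\sigma(f_1),\sigma(f_2)}\cdots$ appearing in \cref{lem:dbarT} is $2^{\loopnumber/2}(\loopnumber/2)!$ times the Hafnian above; hence the polynomial content of the two coefficients agrees. Collecting the numerical prefactors — $(-2\pi)^{\loopnumber/2}$ for $\phi_\Graph$ against $\pi^{\loopnumber/2}4^{\loopnumber}(\loopnumber/2)!$ for $\alpha_\Graph$, together with that $2^{\loopnumber/2}(\loopnumber/2)!$ from the Hafnian — yields, for every spanning tree $T$,
\[ \alpha_\Graph=\frac{(-1)^{\loopnumber/2}\,\det(\incidencematrix[T])\,\det(\cycleincidencematrix_{\overline T})\,\prod_{e\notin T}\eta_e}{2^{\loopnumber}}\cdot\phi_\Graph, \]
using that $\det(\cycleincidencematrix_{\overline T})$ and $\prod_{e\notin T}\eta_e$ are $\pm 1$.

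\textbf{The main obstacle.} What remains — and where I expect essentially all of the technical work of \cref{sec:definitions_proofs} and \cref{sec:concatenated_matrices} to lie — is to prove that the sign $(-1)^{\loopnumber/2}\det(\incidencematrix[T])\det(\cycleincidencematrix_{\overline T})\prod_{e\notin T}\eta_e$ is independent of the spanning tree $T$ and equals $\det\concatm{\cycleincidencematrix}{\pathmatrix}$. I would do this by a Laplace expansion of the $\abs{E_\Graph}\times\abs{E_\Graph}$ determinant $\det\concatm{\cycleincidencematrix}{\pathmatrix}$ along its $\cycleincidencematrix$-rows: its non-vanishing maximal minors are $\pm\det(\cycleincidencematrix_{\overline T})\det(\pathmatrix_{T})$, indexed by spanning trees $T$, and the pairing between the path matrix (\cref{def:pathmatrix}) and the reduced vertex incidence matrix (\cref{def:vertex_incidence_matrix}) identifies $\det(\pathmatrix_{T})$ with $\pm\det(\incidencematrix[T])$. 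Tracking the three families of signs — the incidence determinant $\det(\incidencematrix[T])$, the cycle-matrix minor $\det(\cycleincidencematrix_{\overline T})$, and the per-edge Dodgson signs $\eta_e$ — consistently for every $T$, and checking that the dependence on the auxiliary path matrix $\pathmatrix$ drops out, is the genuinely laborious step; by contrast the Pfaffian computation itself is short once the factorization $\cycleincidencematrix B\cycleincidencematrix^\Transpose$ is in hand.
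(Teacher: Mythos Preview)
Your proposal does not prove the stated proposition. \Cref{lem:dbarT} is a \emph{cited} result (Theorem~1 of \cite{balduf_combinatorial_2024}) which this paper does not prove; it expresses $\alpha_\Graph$, defined as a Gaussian integral over $\R^{\abs{V_\Graph}-1}$, in terms of Dodgson polynomials, and its proof requires carrying out that integration. What you have written is instead a proof of the \emph{main theorem} (\cref{thm:main_thm}), taking \cref{lem:dbarT} as given.

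That said, your argument for \cref{thm:main_thm} is essentially the paper's own. Your ``Dodgson formula for the Pfaffian form'' is precisely \cref{lem:phi_dodgson}: the factorization $\d\duallaplacian_\cycleincidencematrix\cdot\duallaplacian_\cycleincidencematrix^{-1}\cdot\d\duallaplacian_\cycleincidencematrix=\cycleincidencematrix^\Transpose B\,\cycleincidencematrix$ (note the paper's convention $\duallaplacian_\cycleincidencematrix=\cycleincidencematrix^\Transpose\edgematrix\cycleincidencematrix$, transposed from yours), the Pfaffian minor-summation formula (\cref{lem:minor-summ-pf}), and the identification of $\mathcal Q_{ee'}$ with Dodgson polynomials (\cref{lem:laplacian_inverse_dodgson}). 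Two points where the paper is sharper than your sketch: first, your sign $\eta_e$ is \emph{not} $1$ --- the paper shows $(\cycleincidencematrix\duallaplacian_\cycleincidencematrix^{-1}\cycleincidencematrix^\Transpose)_{e_i,e_j}=(-1)^{i+j}\dodgson_\Graph^{e_i,e_j}/\firstsymanzik_\Graph$, so $\eta_e=(-1)^e$ and $\prod_{e\notin T}\eta_e=(-1)^{\sum_{e\notin T}e}$. Second, for the sign identity $\det(\incidencematrix[T])=(-1)^{\loopnumber(\loopnumber+1)/2}\det\concatm{\cycleincidencematrix}{\pathmatrix}\cdot(-1)^{\sum_{e\notin T}e}\det(\cycleincidencematrix[\overline T])$ the paper (\cref{lem:signs}) avoids a general Laplace expansion: it chooses $\pathmatrix$ to consist of the paths \emph{inside $T$}, so that $\pathmatrix[\overline T]=0$ and $\concatm{\cycleincidencematrix}{\pathmatrix}$ becomes block triangular after a row permutation, making the computation immediate.
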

\noindent
A Mathematica program implementing \cref{lem:dbarT} has been published together with \cite{balduf_combinatorial_2024} and is available from the first author's homepage\footnote{\href{https://paulbalduf.com/research}{paulbalduf.com/research}  }.

\begin{example}{}\label{ex:dunces_cap_tree}
  The dunce's cap graph has five spanning trees.
  For example, with edge directions and labels as in \cref{ex:dunces_cap},  the contribution of the spanning tree $T=\left \lbrace 2,4 \right \rbrace $ is
    \begin{align*}
      \frac{(+1)}{16\pi  (a_1 a_3 + a_2 a_3 + a_1 a_4 + a_2 a_4 + a_3 a_4)^{3/2}}  \cdot (-a_4- a_4) \d a_1 \wedge \d a_3.
		\end{align*}
 This produces the summand $- a_4 \d a_1 \wedge \d a_3$ of the numerator in \cref{ex:dunces_cap}, and a sum over all five spanning trees recovers the full $\alpha_\Graph$.
\end{example}

\subsection[The Pfaffian form phi]{The Pfaffian form $\phi_\Graph$}\label{sec:intro_phi}

The \emph{Pfaffian} of a skew-symmetric $2n\times 2n$ matrix $M$ is defined as the following homogeneous polynomial of degree $n$ in the entries of $M$:
\begin{equation} \label{eq:pfaff-defn}
	\Pf(M) = \frac{1}{2^n n!} \sum_{\sigma\in\mathfrak{S}_{2n}} \sgn \sigma \cdot M_{\sigma(1),\sigma(2)}\cdots M_{\sigma(2n-1),\sigma(2n)},
\end{equation}
where the sum is over the symmetric group $\mathfrak{S}_{2n}$ of permutations of $\{1,\ldots,2n\}$. For a square matrix of odd dimension $(2n+1)\times(2n+1)$, we set $\Pf(M)=0$. It follows that $\Pf(M)^2 = \det(M)$.

\begin{definition}\label{def:pff-form}
	Let $\cyclelaplacian_\cycleincidencematrix$ be the cycle Laplacian  (\cref{def:cycle_laplacian}) of a graph and $\d \cyclelaplacian_\cycleincidencematrix$ its differential with respect to Schwinger parameters $\left \lbrace a_e \right \rbrace _{e\in \Graph} $.
	 The \emph{Pfaffian form} (\cite[Def.~2.1,~5.4]{brown_unstable_2024}) is
	\begin{align*}
		\phi_\Graph = \phi_{\duallaplacian_{\cycleincidencematrix}} &\defas \frac{1}{(-2\pi)^{\frac \loopnumber 2}} \frac{\Pf\left( \d \duallaplacian_{\cycleincidencematrix} \cdot \duallaplacian_{\cycleincidencematrix}^{-1} \cdot \d \duallaplacian_{\cycleincidencematrix} \right)   }{\sqrt{\det{\duallaplacian_{\cycleincidencematrix}}}}.
  \end{align*}\\[-0.25cm]
\end{definition}

As with $\alpha_\Graph$, this definition includes an appropriate normalization factor such that   when integrating over Schwinger parameters, $I_{\Graph}=\int \phi_\Graph$ does not require any further normalization (cf.~\cite[Def.~6.2]{brown_unstable_2024}). When $\Graph$ is a tree, then   $\phi_\Graph=+1$. Recall that we demand $\Graph$ to be connected. For disconnected $\Graph$, the Pfaffian form factorizes, while the \cref{def:alpha} of $\alpha_\Graph$ would need to be slightly changed to factorize.
Specifically, to allow for disconnected graphs, the integration domain should be modified to $\mathbb{R}^{\abs{V_{\Graph}}-c}$, where $c$ is the number of connected components of $\Graph$.

\begin{example}{}\label{ex:pff-dipole}
	The theta graph $D_3$, or 3-edge dipole, has $\abs{V_{D_3}}=2$ vertices, $\abs{E_{D_3}}=3$ edges and $\ell=2$ loops.
	The cycle Laplacian for the cycle basis $\cyclebasis=(C_1,C_2)$ as indicated in
	\begin{equation*}
		D_3=
		\begin{tikzpicture}[baseline=-3]
			\node[vertex,label={[label distance=-1mm]180:{\small$v_1$}}](v0) at (0,0){};
			\node[vertex, label={[label distance=-1mm]360:{\small$ v_2$}}](v1) at (2,0){};
			
			\draw[edge, >=Triangle, decoration={markings, mark=at position .55 with {\arrow{>}}}, postaction={decorate},  bend angle=75,looseness=1.35,bend left](v0)     to  node[pos=.1, above]{\small $1$} (v1);
			\draw[edge,  >=Triangle, decoration={markings, mark=at position .55 with {\arrow{>}}}, postaction={decorate},   bend angle=0,bend left](v0) to node[pos=.15, label={[label distance=-2mm]90:{\small$ 3$}}  ]{ }(v1);
			\draw[edge,  >=Triangle, decoration={markings, mark=at position .55 with {\arrow{>}}}, postaction={decorate},  bend angle=75,looseness=1.35,bend right](v0) to node[pos=.1, below]{\small $2$}(v1);
			
			\draw[edge,red,>=Triangle] (1,.42) +(330:.38 and .28) arc ( 330:10:.38 and .28);
			\draw[edge,red, >=Triangle, ->] (1,.42)++(20:.36) -- ++(.06,-.1);
			\draw[edge,red,>=Triangle] (1,-.42) +(10:.38 and .28) arc ( 10:330:.38 and .28);
			\draw[edge,red, >=Triangle, ->] (1,-.42)++(330:.36) -- ++(.09,.1);
			\node[red] at (1,.40){\small $C_1$};
			\node[red] at (1,-.40){\small $C_2$};
		\end{tikzpicture}
		\quad\text{is}\quad
		\cyclelaplacian_{\cyclebasis} =
		\begin{pmatrix}
			a_1 + a_3 & a_3 \\
			a_3 & a_2 + a_3
		\end{pmatrix},
		\quad\text{where}\quad
		\cyclebasis =
		\begin{pmatrix}
			1 & 0 \\
			0 & 1 \\
			-1 & -1
		\end{pmatrix},
	\end{equation*}
	in terms of the Schwinger parameters $a_1,a_2,a_3$.
	The Pfaffian form of $D_3$ is thus the following projective 2-form:
	\begin{equation}\label{eq:pff-D3}
		\phi_{D_3} = \frac{1}{2\pi} \cdot \frac{a_1 \d a_2 \wedge \d a_3 - a_2 \d a_1 \wedge \d a_3 + a_3 \d a_1 \wedge \d a_2}{(a_1a_2 + a_2a_3 + a_1a_3)^{3/2}}
		.
	\end{equation}
\end{example}

In~\cite{brown_unstable_2024}, the Pfaffian form is studied in two different contexts.
On the one hand, the definition of the Pfaffian form can be generalized to any positive definite symmetric matrix $X$ (in place of $\cyclelaplacian_{\cyclebasis}$) and in this context, $\phi_X$ is used to construct non-zero classes in the cohomology of $\GL_{2n}(\Z)$, the group of $2n\times 2n$ matrices over $\mb Z$.

On the other hand, and more relevant to this paper, when applied to graphs and cycle Laplacian matrices, $\phi_\Graph$ is used to construct cocycles for the \emph{odd commutative graph complex}.

The commutative graph complexes $\cGC_N$  are combinatorially defined chain complexes introduced by Kontsevich in~\cite{kontsevich_formal_1993, kontsevich_feynman_1994}.
These graph complexes, and their variants, have been studied in numerous different contexts; to name a few, they have appeared in the study of Lie algebras and deformation theory~\cite{willwacher_kontsevichs_2015}, of perturbative Chern-Simons theory and knot invariants~\cite{bar-natan_vassiliev_1995,duzhin_algebra_1998} and of certain moduli spaces.
The different variants of graph complexes arise from the types of decorations allowed on a graph, for example graphs with hairs, ribbon graphs, pairs of graphs and spanning forests etc., as well as the type of \emph{orientation data} associated to each graph.
For example, one type of orientation is specifying an ordering of the edges of the graph.
The commutative graph complexes are the simplest versions, where the graphs come with no extra decorations.

Up to isomorphism and degree shifts, there are only two commutative graph complexes denoted by $\cGC_N$, differing only in the type of orientation data being considered.
Specifically, $\cGC_N$ is a $\Q$-vector space spanned by isomorphism classes of \emph{oriented graphs}, graphs equipped with orientation data.
The subscript $N$ determines the degree grading of the graphs, and its parity determines which orientation data is being used.
This vector space is turned into a chain complex by the differential $\partial \Graph = \sum_{e} \Graph / e$, which sums over all ways to contract an edge in an oriented graph.
It is bigraded by \emph{degree} $k = \abs{E_{\Graph}} - N\loopnumber$ and loop number $\loopnumber$, and its dual cochain complex has the structure of a differential graded Lie algebra (DGLA).

The \emph{odd graph complex} $\cGC_3$ is then the (commutative) graph complex where an orientation on a graph is specified by an ordering of the vertices and giving a direction to each edge, or equivalently~\cite[Proposition 1]{conant_theorem_2003}, an ordering of the edges and of a chosen cycle basis.
It is taken with the degree grading $k = \abs{E_{\Graph}} - 3\loopnumber$.
Due to the relations imposed on oriented graphs, for example, the class of a graph with a self-loop vanishes in $\cGC_3$ as it has an automorphism that only reverses the direction of the self-loop.
One of the major questions in current research is to compute the homology $H_{k}(\cGC_3)$, that is, to find linear combinations $\Graph$ of oriented graphs of degree $k$ such that $\partial G=0$ and there is no $F$ with $\partial F=G$.
Of what is currently known about $H^{\bullet}(\cGC_3)$ (see~\cite{willwacher_kontsevichs_2015,khoroshkin_differentials_2017,brun_graph_2024}), we briefly highlight the cohomology group $H^{-3}(\cGC_3)$. In particular, it forms a commutative algebra~\cite{duzhin_algebra_1998} and many non-trivial classes are known to exist due to its role in the study of knot invariants~\cite{bar-natan_vassiliev_1995}.

The perspective taken in~\cite{brown_unstable_2024} is to view integrals of the form $ I\defas \int \phi_{\Graph} \wedge \omega_{\Graph}$ as linear functionals on the odd graph complex, where $\omega$ is a \emph{canonical form}~\cite[Section~2.6]{brown_invariant_2021}, a differential form that is invariant under the action of the general linear group $\GL_n(\Z)$. That is, for a fixed choice of $\omega$, the integral $I$ assigns a transcendental number to each linear combination $G$ of oriented graphs.

Under certain conditions, these integrals define cocycles, i.e. $\delta I=0$, where $(\delta I)_F=I_{\partial F}=\sum_e I _{\Graph/e}$.
The idea is to use this property to check, for a closed linear combination $G$ of graphs, whether there exists a $F$ such that $G=\partial F$: If such $F$ exists, the integral $I_G= I_{\partial F}$ evaluates to zero.
At the same time, by pairing the integrals with such $G$, we can detect non-trivial \emph{co}homology classes $H^k(\cGC_3)$ of the odd graph complex.

In the simplest case one sets $\omega = 1$ and the integral is over the Pfaffian form by itself. $\int \phi_{\Graph}$, when restricted to graphs with loop number $\loopnumber = 2$, is a well-defined cocycle.
As it pairs non-trivially with the \emph{theta graph} (\cref{ex:pff-dipole}), which is closed, this cocycle thus corresponds to the only non-trivial class in the $\loopnumber=2$ graded piece of $H^{-3}(\cGC_3)$ as it is known to be of dimension $1$. In particular, it is dual to the class represented by the theta graph in homology.

The crucial properties of $\phi_{\Graph}$ that enabled their use in studying the cohomology of the odd graph complex is summarized by the following.
\begin{lemma}[Lemma 2.2, Theorem 6.7 of \cite{brown_unstable_2024}]\label{lem:pff-int-props}
  The Pfaffian form $\phi_{\Graph}$ for a graph $\Graph$ with $\loopnumber$ loops is a smooth, closed,  projective differential form on the open simplex of positive Schwinger parameters
  \begin{equation}\label{eq:simplex}
  \sigma_{\Graph} \defas \left\{ [a_{e_1}:\cdots : a_{e_{\abs{E_{\Graph}}}}]\ \text{with all}\  a_e>0 \right\} \subset \mathbbm{P}(\R^{E_{\Graph}}),
  \end{equation}
and transforms under the action of $P \in \GL_{\loopnumber}(\Z)$ via
\begin{equation}\label{eq:pff-change-basis}
  \phi_{\cyclelaplacian_{\cyclebasis'}}
    = \phi_{P^{\Transpose}\cyclelaplacian_{\cyclebasis}P}
    = \phi_{\cyclelaplacian_{\cyclebasis}} \cdot \det(P)
    = \pm \phi_{\cyclelaplacian_{\cyclebasis}}
\end{equation}
  for any two cycle bases $\cyclebasis' = \cyclebasis \cdot P$.
 Furthermore, its integral $I_{\Graph} = \int_{\sigma_{\Graph}} \phi_G$ is absolutely convergent.
\end{lemma}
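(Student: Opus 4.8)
The plan is to split \cref{lem:pff-int-props} into three independent parts --- smoothness and projectivity of $\phi_\Graph$, the change-of-basis rule \eqref{eq:pff-change-basis}, and absolute convergence --- and to derive all of them from the single fact that, on the open simplex $\sigma_\Graph$ of strictly positive Schwinger parameters, the cycle Laplacian $X \defas \duallaplacian_{\cyclebasis} = \cyclebasis^{\Transpose}\diag(a_{e_1},\dots,a_{e_{\abs{E_\Graph}}})\cyclebasis$ (\cref{def:cycle_laplacian}) is a genuinely positive definite symmetric $\loopnumber\times\loopnumber$ matrix, since $\cyclebasis$ has full column rank. It follows that $\det X = \dodgson_\Graph > 0$ there (the Kirchhoff polynomial), so $X^{-1}$ and $\sqrt{\det X}$ are smooth on $\sigma_\Graph$; since $\Pf\bigl(\d X\cdot X^{-1}\cdot\d X\bigr)$ is, by \eqref{eq:pfaff-defn}, a polynomial in the entries of $\d X\cdot X^{-1}\cdot\d X$, the form $\phi_\Graph$ is smooth on $\sigma_\Graph$. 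A weight count --- $\d X$ has weight $1$ in the $a_e$, $X^{-1}$ weight $-1$, the Pfaffian of a $\loopnumber\times\loopnumber$ skew matrix weight $\loopnumber/2$, and $\sqrt{\det X}$ weight $\loopnumber/2$ --- shows $\phi_\Graph$ is homogeneous of weight $0$, equivalently $\mathcal{L}_{\mathcal E}\phi_\Graph = 0$ for the Euler field $\mathcal E \defas \sum_{e\in E_\Graph} a_e\,\partial_{a_e}$.

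For projectivity it then suffices to check $\iota_{\mathcal E}\phi_\Graph = 0$: together with $\mathcal{L}_{\mathcal E}\phi_\Graph = 0$, this is exactly the condition for $\phi_\Graph$ to descend to a smooth differential form on $\sigma_\Graph \subset \mathbbm{P}(\R^{E_\Graph})$. Because $X$ is linear in the $a_e$, Euler's identity gives $\iota_{\mathcal E}\,\d X = X$, and hence, using that $\iota_{\mathcal E}$ is an odd derivation and $X^{-1}$ is a matrix of $0$-forms,
\[
\iota_{\mathcal E}\bigl(\d X\cdot X^{-1}\cdot\d X\bigr) = (\iota_{\mathcal E}\,\d X)\,X^{-1}\,\d X - \d X\,X^{-1}\,(\iota_{\mathcal E}\,\d X) = \d X - \d X = 0 .
\]
Writing $\Pf$ as in \eqref{eq:pfaff-defn} as a wedge product of entries of $\d X\cdot X^{-1}\cdot\d X$ and applying the Leibniz rule then yields $\iota_{\mathcal E}\phi_\Graph = 0$.

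Closedness is the heart of the matter, and the plan is to identify $\phi_\Graph$, up to the normalising constant, with a Chern--Weil form. Set $g \defas X^{1/2}$ (the positive definite square root) and $\eta \defas g^{-1}\,\d g$, a matrix of $1$-forms. The identity $\d(X^{-1}) = -X^{-1}\,\d X\,X^{-1}$ gives the Maurer--Cartan equation $\d\eta = -\eta\wedge\eta$; splitting $\eta = \eta_{\mathrm a} + \eta_{\mathrm s}$ into its antisymmetric and symmetric parts and separating the symmetric and antisymmetric components of this equation yields $\d\eta_{\mathrm a} + \eta_{\mathrm a}\wedge\eta_{\mathrm a} = -\eta_{\mathrm s}\wedge\eta_{\mathrm s}$, i.e.\ the curvature of the $\mathfrak{so}(\loopnumber)$-valued connection $\eta_{\mathrm a}$ is $F = -\eta_{\mathrm s}\wedge\eta_{\mathrm s}$. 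Since $g^{-1}\,\d X\,g^{-1} = 2\eta_{\mathrm s}$ and $g\,X^{-1}\,g = \identitymatrix$, the congruence property $\Pf(B^{\Transpose}MB) = \det(B)\,\Pf(M)$ applied with $B = g$ to $g^{-1}\bigl(\d X\cdot X^{-1}\cdot\d X\bigr)g^{-1} = (2\eta_{\mathrm s})\wedge(2\eta_{\mathrm s}) = -4F$ gives $\Pf\bigl(\d X\cdot X^{-1}\cdot\d X\bigr) = \sqrt{\det X}\cdot(-4)^{\loopnumber/2}\Pf(F)$, so $\phi_\Graph$ is a constant multiple of $\Pf(F)$. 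As $\Pf$ is an $\mathrm{Ad}$-invariant polynomial on $\mathfrak{so}(\loopnumber)$, the Chern--Weil mechanism --- the Bianchi identity for $F$ together with invariance of $\Pf$ --- shows $\Pf(F)$ is closed and unchanged under $F\mapsto h^{-1}Fh$ for $h\in\mathrm{SO}(\loopnumber)$, so it depends only on $X$; pulling back along $a\mapsto X(a)$ gives $\d\phi_\Graph = 0$. Equivalently, $\Pf(F)$ is the Euler form of the tautological metrized rank-$\loopnumber$ bundle over the symmetric space of positive definite symmetric matrices, restricted to $\sigma_\Graph$. I expect this step --- essentially re-deriving that the Pfaffian of a curvature is closed --- to be the main obstacle; a more self-contained alternative is to prove $\d\phi_\Graph = 0$ by a direct exterior-calculus computation, differentiating $\Pf\bigl(\d X\cdot X^{-1}\cdot\d X\bigr)/\sqrt{\det X}$ with the identity $\d(\d X\cdot X^{-1}) = (\d X\cdot X^{-1})\wedge(\d X\cdot X^{-1})$ and the form-valued analogue of $\d\log\Pf(M) = \tfrac12\tr(M^{-1}\,\d M)$, and checking that the resulting terms cancel.

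Finally, \eqref{eq:pff-change-basis} is a direct computation: a change of cycle basis $\cyclebasis' = \cyclebasis\cdot P$ with $P\in\GL_\loopnumber(\Z)$ gives $\duallaplacian_{\cyclebasis'} = P^{\Transpose}\duallaplacian_{\cyclebasis}P$, and since $P$ is constant, $\d\duallaplacian_{\cyclebasis'} = P^{\Transpose}\,\d\duallaplacian_{\cyclebasis}\,P$ and $\duallaplacian_{\cyclebasis'}^{-1} = P^{-1}\duallaplacian_{\cyclebasis}^{-1}P^{-\Transpose}$, so $\d\duallaplacian_{\cyclebasis'}\cdot\duallaplacian_{\cyclebasis'}^{-1}\cdot\d\duallaplacian_{\cyclebasis'} = P^{\Transpose}\bigl(\d\duallaplacian_{\cyclebasis}\cdot\duallaplacian_{\cyclebasis}^{-1}\cdot\d\duallaplacian_{\cyclebasis}\bigr)P$; then $\Pf(P^{\Transpose}MP) = \det(P)\,\Pf(M)$ and $\sqrt{\det(P^{\Transpose}\duallaplacian_{\cyclebasis}P)} = \abs{\det P}\sqrt{\det\duallaplacian_{\cyclebasis}}$ combine, using $\det P = \pm 1$, to $\phi_{\duallaplacian_{\cyclebasis'}} = \det(P)\,\phi_{\duallaplacian_{\cyclebasis}}$. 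For absolute convergence of $\int_{\sigma_\Graph}\phi_\Graph$, the integrand is smooth on the interior of the simplex because $\dodgson_\Graph > 0$ there, so only the behaviour near the boundary faces $\{a_e = 0\}$ matters; writing $\phi_\Graph = N/\dodgson_\Graph^{(\loopnumber+1)/2}$ with $N$ a polynomial differential form, one checks along each coordinate stratum that the order of vanishing of $\dodgson_\Graph$ is outweighed by those of $N$ and of the volume form --- the usual power-counting controlling convergence of parametric integrals with a Symanzik-type denominator --- so every boundary contribution is integrable; equivalently, $\phi_\Graph$ extends to a smooth form on an iterated blow-up of the closed simplex along its boundary strata, where finiteness of the integral is manifest. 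This boundary analysis is the secondary technical point.
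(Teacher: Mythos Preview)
The paper does not prove \cref{lem:pff-int-props}; it is quoted verbatim from \cite{brown_unstable_2024} (Lemma~2.2 and Theorem~6.7 there) and used as input. There is therefore no ``paper's own proof'' to compare against.

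That said, your proposal is correct and is essentially the argument given in the cited reference. Smoothness follows, as you say, from positive definiteness of $\duallaplacian_{\cyclebasis}$ on $\sigma_\Graph$; the projectivity check via $\iota_{\mathcal E}(\d X\cdot X^{-1}\cdot \d X)=0$ is clean and exactly right. Your identification of $\phi_\Graph$ with a constant multiple of $\Pf(F)$ for the curvature $F$ of the $\mathfrak{so}(\loopnumber)$-connection $\eta_{\mathrm a}$ obtained from $g=X^{1/2}$ is precisely the Chern--Weil/Euler-form argument used in \cite{brown_unstable_2024}; one small point worth making explicit is that $g$ varies smoothly because the positive square root is smooth on positive definite matrices, and that the congruence identity $\Pf(B^\Transpose M B)=\det(B)\Pf(M)$ applies here since $B=g$ has $0$-form entries commuting with the $2$-form entries of $M$. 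The change-of-basis computation is straightforward and matches. For convergence, your sketch via extension to a blow-up of the closed simplex is again the method of the reference (the Feynman polytope of \cite{bloch_motives_2006,brown_feynman_2017}); the actual work lies in verifying that $\phi_\Graph$ extends smoothly across the exceptional divisors, which needs the factorisation of $\firstsymanzik_\Graph$ and of the numerator at each boundary stratum --- this is where the details you defer are genuinely needed, but the strategy is the right one.
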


Several properties of the Pfaffian form $\phi_\Graph$ have been established in \cite{brown_unstable_2024}, among them:
\begin{lemma}[Lemma 2.2,~Lemma 5.6,~Corollary 5.7 of \cite{brown_unstable_2024}] \label{lem:phi_properties}
	~\\[-.5cm]
	\begin{enumerate}[ref=\thelemma(\roman*)]
		\item $\phi_\Graph \wedge \phi_\Graph=0$,
		\item\label[lemma]{lem:odd-loop} $\phi_\Graph=0$ if $\Graph$, or any of its 1PI components, has odd loop number,
    \item\label[lemma]{lem:2-valent} $\phi_\Graph$ is compatible with subdividing edges. More precisely, let $\Graph'$ be the graph obtained from $\Graph$ by subdividing an edge $e$ into edges $e'$ and $e''$. This operation is realized by a map $s_e \colon \R^{E_{\Graph'}}_+ \to \R^{E_{\Graph}}_+, a=s_e(a')$ on Schwinger parameters $a'$, where   $a_e=s_e(a'_{e'}+a'_{e''})$ and $a_f = s_e(a'_f)$ for $f\neq e$.  Then
      $\phi_{\Graph'} = s_e^{*}\left( \phi_{\Graph}\right)$.
      Here we are taking the cycle basis for $ \Graph'$ induced from $\Graph$ by replacing any occurrence of $e$ by the path $e' e''$.
	\end{enumerate}
\end{lemma}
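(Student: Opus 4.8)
The plan is to read off all three properties directly from the algebraic shape $\phi_\Graph=(-2\pi)^{-\loopnumber/2}\Pf(M)/\sqrt{\det\cyclelaplacian_{\cyclebasis}}$, where $M\defas\d\cyclelaplacian_{\cyclebasis}\cdot\cyclelaplacian_{\cyclebasis}^{-1}\cdot\d\cyclelaplacian_{\cyclebasis}$, after two preliminary observations. First, since $\cyclelaplacian_{\cyclebasis}$ is symmetric and the entries of $\d\cyclelaplacian_{\cyclebasis}$ are $1$-forms, a one-line computation ($M_{ji}=-M_{ij}$, using symmetry of $\cyclelaplacian_{\cyclebasis}$ together with anticommutativity of $1$-forms) shows that $M$ is skew-symmetric as a matrix of $2$-forms, so $\Pf(M)$ is defined and $\phi_\Graph$ has degree $\loopnumber$. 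Second, by \cref{lem:pff-int-props} the form is independent of the cycle basis up to sign, so throughout we may pick whatever convenient basis $\cyclebasis$ we like.

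For (ii): if $\loopnumber$ is odd then $M$ is an odd-sized skew matrix, so $\Pf(M)=0$ by definition and $\phi_\Graph=0$. For the refinement to 1PI components, recall that a bridge lies on no cycle, so the cycle space of $\Graph$ is the direct sum of the cycle spaces of its $2$-edge-connected (1PI) blocks. Choosing a cycle basis adapted to this decomposition makes $\cyclebasis$, hence $\cyclelaplacian_{\cyclebasis}=\cyclebasis^{\Transpose}\diag(a_e)\cyclebasis$, block-diagonal with one block per 1PI component; consequently $M$ is block-diagonal, $\Pf(M)=\bigwedge_i\Pf(M_i)$ (the blocks are even-degree forms and commute), and $\det\cyclelaplacian_{\cyclebasis}=\prod_i\det(\cyclelaplacian_{\cyclebasis})_i$, whence $\phi_\Graph=\pm\bigwedge_i\phi_{\Graph_i}$. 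Any 1PI block of odd loop number contributes a vanishing factor by the first case, so $\phi_\Graph=0$.

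For (iii): subdividing $e$ into $e'e''$ and using the induced cycle basis, every basis cycle that used $e$ now traverses $e'$ and $e''$ in immediate succession, so the rows of $\cyclebasis'$ indexed by $e'$ and by $e''$ both coincide with the row $c_e$ of $\cyclebasis$ indexed by $e$. Writing $\cyclelaplacian_{\cyclebasis'}=\sum_f a'_f\, c_f c_f^{\Transpose}$, the two new edges contribute $(a'_{e'}+a'_{e''})\,c_e c_e^{\Transpose}$, so $\cyclelaplacian_{\cyclebasis'}(a')=\cyclelaplacian_{\cyclebasis}(s_e(a'))$ as matrix-valued functions, i.e. $\cyclelaplacian_{\cyclebasis'}=s_e^{*}\cyclelaplacian_{\cyclebasis}$. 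Because $\d$ commutes with the linear pullback $s_e^{*}$ and $\phi$ is built naturally from $\cyclelaplacian_{\cyclebasis}$ and $\d\cyclelaplacian_{\cyclebasis}$ (with loop number, and hence normalization, unchanged), we conclude $\phi_{\Graph'}=s_e^{*}\phi_\Graph$.

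Property (i) is the substantive point. Since $2$-forms commute in the exterior algebra, the polynomial identity $\Pf(M)^2=\det M$ remains valid, so $\phi_\Graph\wedge\phi_\Graph=(-2\pi)^{-\loopnumber}\det(M)/\det\cyclelaplacian_{\cyclebasis}$ and it suffices to show $\det M=0$. Factoring $\cyclelaplacian_{\cyclebasis}=gg^{\Transpose}$ and setting $\theta\defas g^{-1}\d g$, a direct computation gives the congruence $M=g\,\Sigma\,g^{\Transpose}$ with $\Sigma\defas(\theta+\theta^{\Transpose})\wedge(\theta+\theta^{\Transpose})$ (matrix product with wedged entries), whence $\det M/\det\cyclelaplacian_{\cyclebasis}=\det\Sigma$ and $\phi_\Graph=\pm(-2\pi)^{-\loopnumber/2}\Pf(\Sigma)$ is the pullback under $a\mapsto\cyclelaplacian_{\cyclebasis}(a)$ of a $\GL_{\loopnumber}(\R)$-invariant form $\nu$ on the symmetric space $\GL_{\loopnumber}(\R)/O_{\loopnumber}$ of positive-definite matrices --- the Pfaffian of its canonical curvature. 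It then suffices to prove $\nu\wedge\nu=0$, equivalently $\det\Sigma=0$. For $\loopnumber=2$ this is immediate: $\Pf(M)$ lies in the second exterior power $\bigwedge^2 W$ of the three-dimensional space $W$ spanned by the entries of $\d\cyclelaplacian_{\cyclebasis}$, so $\Pf(M)\wedge\Pf(M)\in\bigwedge^4 W=0$. \textbf{The main obstacle is general even $\loopnumber$}, where this dimension count fails ($2\loopnumber\le\binom{\loopnumber+1}{2}$) and one must exploit the rigidity of $\Sigma$ as a symmetric-space curvature. I would establish $\nu\wedge\nu=0$ either inside the algebra of $\GL_{\loopnumber}(\R)$-invariant forms (whose ring structure matches $H^{\bullet}(\mathsf{U}(\loopnumber)/\mathsf{O}(\loopnumber))$ and forces the self-wedge of this class to vanish), or --- via the main equivalence \cref{thm:main_thm} --- by reducing to a sign cancellation among the disjoint cotree pairs produced by \cref{lem:dbarT} in $\alpha_\Graph\wedge\alpha_\Graph$; pinning down this cancellation is where the real work lies.
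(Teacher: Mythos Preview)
The paper does not give its own proof of this lemma; it is quoted verbatim from \cite{brown_unstable_2024}, so there is no in-paper argument to compare against. I will therefore assess your proposal on its own terms and against what the paper says about the cited proof.

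Your arguments for (ii) and (iii) are correct and are essentially the natural ones: the block-diagonalisation of $\cyclelaplacian_{\cyclebasis}$ along the 1PI decomposition (bridges carry no cycle) immediately gives the factorisation of $\Pf(M)$, and for the subdivision you correctly observe that the two new rows of $\cyclebasis'$ coincide with the old row of $\cyclebasis$, so $\cyclelaplacian_{\cyclebasis'}=s_e^{*}\cyclelaplacian_{\cyclebasis}$ as matrix-valued functions and naturality of $\phi$ in $\cyclelaplacian$ finishes it.

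For (i), however, you have not given a proof. Your reduction to $\det\Sigma=0$ with $\Sigma=(\theta+\theta^{\Transpose})^2$ is sound, and the dimension count disposes of $\loopnumber=2$, but for general even $\loopnumber$ you only list two strategies without executing either. The first (identifying the ring of invariant forms on $\GL_{\loopnumber}(\R)/O_{\loopnumber}$ with $H^{\bullet}(\mathsf{U}(\loopnumber)/\mathsf{O}(\loopnumber))$ and arguing that the Pfaffian class self-annihilates there) may well work, but it is far from the ``straightforward consequence of linear algebra'' the paper attributes to \cite[Lemma~2.2]{brown_unstable_2024}, which suggests you are missing a much more elementary identity at the level of $\Pf(M)$ itself. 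The second strategy---going through \cref{thm:main_thm} and the Dodgson expansion of $\alpha_\Graph\wedge\alpha_\Graph$---runs against the logical flow of the paper: \cref{lem:alpha_square} is \emph{deduced} from $\phi_\Graph\wedge\phi_\Graph=0$, and the paper explicitly contrasts the ease of the $\phi$-side with the effort required on the $\alpha$-side in \cite{balduf_combinatorial_2024,wang_factorization_2024}. Invoking that direction would make the argument circular within this paper (or at best outsource the hard work you yourself flag as ``where the real work lies'').
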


\subsection{Acknowledgements}
We thank Erik Panzer and Jingxiang Wu for multiple discussions regarding the broader context and interpretation of these constructions.
We also thank Francis Brown, Erik Panzer, Jingxiang Wu and Karen Yeats for valuable comments on earlier drafts.
SH is also grateful to Karen Yeats and the hospitality of Perimeter Institute, with whom and where some of this work was initiated.

PHB is funded through the Royal Society grant {URF{\textbackslash}R1{\textbackslash}201473}.
SH acknowledges the support of the Natural Sciences and Engineering Research Council of Canada (NSERC), [funding reference number 578060-2023]. 

\section{Discussion}\label{sec:consequences}

\subsection{Immediate algebraic properties}

\begin{corollary}
  By \cref{thm:main_thm}, the statements of \cref{lem:pff-int-props,lem:phi_properties} also hold for the topological form $\alpha_\Graph$, in particular
	\begin{enumerate}[ref=\thecorollary(\roman*)]
		\item\label[corollary]{lem:alpha_finite} $\alpha_\Graph$ is smooth and projective, and the integral $\int \alpha_\Graph$ is finite.
		\item \label[corollary]{lem:alpha_stokes}$\alpha_\Graph$ is closed and the integral $\int \alpha_\Graph$ satisfies Stokes relations.
		\item\label[corollary]{lem:alpha_2-valent} $\alpha_\Graph$ is compatible with subdivision of edges as in \cref{lem:2-valent}. More precisely, if $\Graph'$ is the graph obtained by subdividing an edge $e$, with labels and directions as specified in \cref{lem:subdivision_sign}, then $\alpha_{\Graph'} = (-1)^{e+1} s_e^*(\alpha_{\Graph})$.
		\item\label[corollary]{lem:alpha_square} $\alpha_\Graph \wedge \alpha_\Graph=0$ unless $\Graph$ is a tree.
	\end{enumerate}
\end{corollary}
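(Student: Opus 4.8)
The plan is to derive all four items from \cref{thm:main_thm}, which gives $\alpha_\Graph = \lambda_\Graph\,\phi_\Graph$ with $\lambda_\Graph \defas \det\concatm{\cycleincidencematrix}{\pathmatrix}\cdot 2^{-\loopnumber}$. The decisive point, already recorded in \cref{thm:main_thm}, is that $\lambda_\Graph\in\left\lbrace +2^{-\loopnumber},-2^{-\loopnumber}\right\rbrace$ is a nonzero \emph{constant}: it does not depend on the Schwinger parameters $a_e$. Consequently any property of $\phi_\Graph$ that survives multiplication by a nonzero real number transfers verbatim to $\alpha_\Graph$.

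This settles \cref{lem:alpha_finite} and \cref{lem:alpha_stokes} immediately. By \cref{lem:pff-int-props}, $\phi_\Graph$ is a smooth, closed, projective $\loopnumber$-form on the open simplex $\sigma_\Graph$ with absolutely convergent integral; since $\alpha_\Graph = \lambda_\Graph\,\phi_\Graph$ with $\lambda_\Graph$ constant, $\alpha_\Graph$ inherits smoothness and projectivity, $\int\alpha_\Graph = \lambda_\Graph\int\phi_\Graph$ is finite, and $\d\alpha_\Graph = \lambda_\Graph\,\d\phi_\Graph = 0$, so $\alpha_\Graph$ is closed and its integral obeys Stokes' relations. For \cref{lem:alpha_square} I would split into cases: if $\Graph$ is not a tree then $\alpha_\Graph\wedge\alpha_\Graph = \lambda_\Graph^{2}\,\phi_\Graph\wedge\phi_\Graph = 0$ by \cref{lem:phi_properties}(i); if $\Graph$ is a tree then $\loopnumber = 0$, so $\alpha_\Graph = \pm 1$ is a $0$-form and $\alpha_\Graph\wedge\alpha_\Graph = 1 \neq 0$.

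The only item requiring genuine work is \cref{lem:alpha_2-valent}, because \cref{lem:2-valent} carries \emph{no} sign, $\phi_{\Graph'} = s_e^*(\phi_\Graph)$, whereas the statement for $\alpha$ has the prefactor $(-1)^{e+1}$. First I would note that subdividing $e$ into $e'$ and $e''$ adds one edge and one vertex, hence leaves the loop number unchanged, $\loopnumber' = \loopnumber$; denote by $\cycleincidencematrix'$ and $\pathmatrix'$ a cycle incidence matrix (for the cycle basis induced from $\Graph$) and a path matrix of $\Graph'$. Then \cref{thm:main_thm} and \cref{lem:2-valent} give
\begin{align*}
  \alpha_{\Graph'} &= \lambda_{\Graph'}\,\phi_{\Graph'} = \lambda_{\Graph'}\, s_e^*(\phi_\Graph) \\
  &= \frac{\lambda_{\Graph'}}{\lambda_\Graph}\, s_e^*(\alpha_\Graph) = \frac{\det\concatm{\cycleincidencematrix'}{\pathmatrix'}}{\det\concatm{\cycleincidencematrix}{\pathmatrix}}\, s_e^*(\alpha_\Graph),
\end{align*}
so it remains to identify the ratio of these two $\pm 1$-valued determinants as $(-1)^{e+1}$. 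With the labelling and orientation conventions fixed in \cref{lem:subdivision_sign}, the matrix $\concatm{\cycleincidencematrix'}{\pathmatrix'}$ is obtained from $\concatm{\cycleincidencematrix}{\pathmatrix}$ by duplicating the row indexed by $e$ into the two rows indexed by $e'$ and $e''$ (with equal signs, since every cycle and every path through $e$ now traverses the segment $e'e''$), by inserting one further column for a path ending at the new $2$-valent vertex, and by reindexing; expanding the determinant along the new row and column and counting the transpositions that move them into place against the edges labelled below $e$ produces $(-1)^{e+1}$. I expect this last sign computation to be the main obstacle: the other three items are formal once \cref{thm:main_thm} is in hand, whereas the $(-1)^{e+1}$ depends on pinning down the subdivision conventions and on how they interact with the extra column of the path matrix, which is why it is convenient to isolate it as \cref{lem:subdivision_sign} and merely quote it here.
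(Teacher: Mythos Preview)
Your proposal is correct and follows the same route as the paper: items (i), (ii), (iv) are immediate from $\alpha_\Graph=\lambda_\Graph\phi_\Graph$ with $\lambda_\Graph$ a nonzero constant, while (iii) is reduced to the determinant identity $\det\concatm{\cycleincidencematrix'}{\pathmatrix'}=(-1)^{e+1}\det\concatm{\cycleincidencematrix}{\pathmatrix}$, which is exactly the content of \cref{lem:subdivision_sign}. The paper's argument for that sign expands along the new \emph{column} only (the row-$e$ entry is the sole surviving cofactor because all other minors contain the two identical rows $e'$ and $e''$), giving $(-1)^{e+\loopnumber+1}$, and then uses that $\loopnumber$ is even (else both forms vanish) to obtain $(-1)^{e+1}$; you may want to record that parity step explicitly.
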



Convergence of $\alpha_\Graph$ (\cref{lem:alpha_finite}) is clear in the operatope-formulation \cite{gaiotto_higher_2024} in terms of the variables $s_e$ (\cref{def:se}), but much less obvious from the parametric integrand in \cref{lem:dbarT}.
Via this connection to the Pfaffian form, the constructions in~\cite[Section 6.1]{brown_unstable_2024} explicitly realize the finiteness of $\int \alpha_{\Graph}$ as an integral over positive Schwinger parameters, as hinted at in~\cite[Sections 3.4, 3.8]{gaiotto_higher_2024}.
As we will explain further in \cref{sec:quadratic_relations},  quadratic relations between the integrals essentially correspond to Stokes relations (\cref{lem:alpha_stokes}).

\Cref{lem:alpha_2-valent} highlights one of the graph-related properties which is apparent when viewed through the Pfaffian form, but not obvious in the $\alpha_{\Graph}$ representation.
It is thus a shortcut for computing $\alpha_\Graph$ when $\Graph$ contains a 2-valent vertex. Moreover, it implies that  the integral $\int \alpha_\Graph$ vanishes for such graphs  because the resulting $\alpha_{\Graph'}$ has the same form degree, but one more integration variable, compared to $\alpha_{\Graph}$.

\begin{example}{}
  As an application of \cref{lem:alpha_2-valent}, we can use the Pfaffian form $\phi_{D_3}$ for the theta graph (\cref{ex:pff-dipole}) to compare with the topological forms for the dunce's cap $G'$ and for the double triangle $G''$, as computed in~\cite[Example 13]{balduf_combinatorial_2024} and~\cite[Eq. (3.56)]{gaiotto_higher_2024} respectively.

  \begin{center}
		\begin{tikzpicture}
		\node[vertex, label=below:$v_1$](v1) at (1,1){};
		\node[vertex, label=below:$v_2$ ](v2) at (-.5,-.5){};
		\node[vertex,label=below:$v_3$ ](v3) at (2.5,-.5){};

		\draw[edge, >=Triangle, decoration={markings, mark=at position .85 with {\arrow{>}}}, postaction={decorate}, ](v2) to node[pos=.5, above left]{$a'_1$}   (v1) ;
		\draw[edge, >=Triangle, decoration={markings, mark=at position .25 with {\arrow{<}}}, postaction={decorate}, ](v3) to node[pos=.5, above right]{$a'_2$}(v1);
		\draw[edge, >=Triangle, decoration={markings, mark=at position .25 with {\arrow{<}}}, postaction={decorate},  bend angle=30,bend left](v3) to node[pos=.5, below]{$a'_4$}(v2);
		\draw[edge, >=Triangle, decoration={markings, mark=at position .25 with {\arrow{<}}}, postaction={decorate},   bend angle=20, bend right](v3) to node[pos=.5, below]{$a'_3$}(v2);

		 \draw[line width=.3mm, |->] (4.0,0) --(3.4,0);

		\node[vertex,label=below:$v_1$](v2) at (4.9,0){};
		\node[vertex,label=below:$v_2$](v3) at (7.7,0){};

		\draw[edge, >=Triangle, decoration={markings, mark=at position .25 with {\arrow{<}}}, postaction={decorate},  bend angle=50,bend right](v3) to node[pos=.5, below]{$a_1$}(v2);
		\draw[edge, >=Triangle, decoration={markings, mark=at position .25 with {\arrow{<}}}, postaction={decorate},  bend angle=0,bend left](v3) to node[pos=.5, below]{$a_2$}(v2);
		\draw[edge, >=Triangle, decoration={markings, mark=at position .25 with {\arrow{<}}}, postaction={decorate},   bend angle=50, bend left](v3) to node[pos=.5, below]{$a_3$}(v2);

		\draw[line width=.3mm, |->] (8.6,0) --(9.2,0);

		\node[vertex, label=below:$v_1$](v1) at (11.5,1){};
		\node[vertex, label=below:$v_3$](v2) at (10,0){};
		\node[vertex, label=below:$v_4$](v3) at (13,0){};
		\node[vertex, label=below:$v_2$](v4) at (11.5,-1){};

		\draw[edge, >=Triangle, decoration={markings, mark=at position .25 with {\arrow{<}}}, postaction={decorate},](v1) to node[pos=.6, label={[label distance=-.5]above:{$t_{12}$}}]{}   (v2) ;

		\draw[edge, >=Triangle, decoration={markings, mark=at position .85 with {\arrow{>}}}, postaction={decorate},](v1) to node[pos=.6,label={[label distance=-2]above:{$t_{31}$}}]{}   (v3) ;

		\draw[edge, >=Triangle, decoration={markings, mark=at position .85 with {\arrow{>}}}, postaction={decorate},](v2) to node[pos=.5, below]{$t_{23}$}   (v3) ;

		\draw[edge, >=Triangle, decoration={markings, mark=at position .85 with {\arrow{>}}}, postaction={decorate},](v2) to node[pos=.4, label={[label distance=-.7]below:{$t_{24}$}}]{}   (v4) ;
		\draw[edge, >=Triangle, decoration={markings, mark=at position .25 with {\arrow{<}}}, postaction={decorate},](v3) to node[pos=.4, label={[label distance=0]below:{$t_{43}$}}]{}   (v4) ;

		\node at (1,-2){$G'$};
		\node at (6.3,-2){$D_3$};
		\node at (11.5,-2){$G''$};

	\end{tikzpicture}
	\end{center}

  Notice that the graphs have different numbers of spanning trees, as well as different denominators in the expansion of their topological forms via \cref{lem:dbarT}.
	Hence, the observation that these differential forms coincide up to changes of variables is a structural property of $\alpha_\Graph$ resp. $\phi_\Graph$.
  Explicitly, the transformations for the two cases are:
  \begin{align*}
    \frac{a_3 \d a_1 \wedge \d a_2}{(a_1a_2 + a_2a_3 + a_1a_3)^{3/2}}
      &\mapsto \frac{a'_4 (\d a'_1 + \d a'_2) \wedge \d a'_3}{\big((a'_1 +a'_2)a'_3 + (a'_1 + a'_2)a'_4 + a'_3 a'_4\big)^{3/2}} \\
      &\mapsto \frac{(t_{24} + t_{43}) (\d t_{12} + \d t_{31}) \wedge \d t_{23}}{\big(  (t_{12} + t_{31}) t_{23} + (t_{12}+ t_{31})(t_{24}+ t_{43}) + t_{23}(t_{24}+t_{43}) \big)^{3/2}},
  \end{align*}
  for the terms corresponding to the spanning tree contributions of $T=\{3\}$, $T' \in \{\{1,4\},\{2,4\}\}$, and $T'' \in \{\{12,24,43\}, \{31,24,43\}\}$ respectively (\cref{ex:dunces_cap_tree} computes the $T'=\{2,4\}$ term).

  The first line, splitting the edge $e=1$ into $a_1 = a'_1 + a'_2$ and relabelling $a_2 = a'_3, a_3 = a'_4$, produces the dunce's cap $G'$. By \cref{lem:alpha_2-valent}, the relative sign is $(-1)^{1+1}=+1$ for the transformation from $\alpha_{D_3}$ to $\alpha_{G'}$, and the result coincides with \cref{ex:dunces_cap}. If we work with Pfaffian forms $\phi_{G'}= s_1^*(\phi_{D_3})$, given by \cref{eq:pff-D3}, there is a factor $(-4)$ from \cref{thm:main_thm} to relate them to the topological form, so that overall  $\alpha_{G'} = (-4)^{-1} \phi_{G'}= (-4)^{-1}s_1^*(\phi_{D_3})$.  
  
  The topological form $\alpha_{\Graph''}$ for the double triangle is
	\begin{align*}
	\scalemath{.85}{\alpha_{G''}  = -\frac{(t_{12}+ t_{31}) \d t_{23} \wedge (\d t_{24} + \d t_{43}) - t_{23}(\d t_{12} + \d t_{31})\wedge (\d t_{24} + \d t_{43}) + (t_{24} + t_{43}) (\d t_{12} + \d t_{31}) \wedge \d t_{23}  }{8 \pi  \big(  (t_{12} + t_{31}) t_{23} + (t_{12}+ t_{31})(t_{24}+ t_{43}) + t_{23}(t_{24}+t_{43}) \big)^{3/2}  }.}
	\end{align*}
    \Cref{lem:alpha_2-valent} with the change of variables $a_1 = t_{12} + t_{31}$, $a_2 = t_{23}$ and $a_3 = t_{24} + t_{43}$ gives
  $\alpha_{G''} = (+1) s_1^*\big((+1) s_3^*(\alpha_{D_3})\big)$.
  We can then compare with $\phi_{D_3}$ from \cref{ex:pff-dipole}, which verifies that indeed
  $\alpha_{G''} = (-4)^{-1} s_{1}^*\big(s_{3}^*(\phi_{D_3})\big),$
  as predicted by \cref{thm:main_thm}.

  Note that our  expression $\alpha_{G''}$ has an overall minus compared to \cite[Eq.~(3.56)]{gaiotto_higher_2024} because in the latter case, the vertex $v_1$ is used as the special vertex $v_\star$, while \cref{lem:dbarT} uses the highest vertex, $v_\star=v_4$, and we have chosen a different labelling in order to directly apply \cref{lem:alpha_2-valent}. When comparing, note that the drawing in \cite{gaiotto_higher_2024} shows a different labelling of the double triangle graph than what is used in the computation.
\end{example}

\Cref{lem:alpha_square} had been proved for $\alpha_\Graph$, with some effort, in \cite{balduf_combinatorial_2024,wang_factorization_2024}, whereas this result for the Pfaffian form $\phi_\Graph$  is a straightforward consequence of linear algebra in \cite[Lemma 2.2]{brown_unstable_2024}.
To understand its significance  in the topological case, recall that $\int \alpha_\Graph$ computes the \emph{violation} of BRST-closedness of observables. $\alpha_\Graph \wedge \alpha_\Graph=0$ therefore implies that a theory which has two topological directions, and \emph{arbitrarily many} more topological or holomorphic directions, is free of quantum BRST anomalies. The special case of just two topological directions amounts to integrals of the form
\begin{align*}
I_\Graph &= \int_{\left \lbrace a_e \right \rbrace  }\alpha_\Graph \wedge \alpha_\Graph =  \int_{\left \lbrace a_e \right \rbrace   } \int_{\left \lbrace x^{(1)}_v \right \rbrace } \int_{ \left \lbrace x^{(2)}_v \right \rbrace   } W^{(1)}_\Graph \wedge W^{(2)}_\Graph,
\end{align*}
where $x_v= (x^{(1)}_v, x^{(2)}_v)$ is a 2-component real vector, and $W_\Graph^{(i)} = \bigwedge_e e^{-{s^{(i)}_e}^2} \d s^{(i)}_e$. If, instead of solving the position integrals first, we were to compute the $\alpha_e$ integrals first, the integrand would be a product of terms
\begin{align*}
e^{-{s_e^{(1)}}^2 - {s_e^{(2)}}^2 } \d s^{(1)}_e \wedge \d s^{(2)}_e = e^{-\frac{\abs{\vec x}^2}{a}} \left( -2 a_e^{-2} \d a_e \left( x^{(2)} \d x^{(1)} - x^{(1)} \d x^{(2)} \right) + a_e^{-1} \d x^{(1)} \wedge \d x^{(2)} \right) .
\end{align*}
The last term in the parenthesis does not contribute to an integral over $a_e$, and the first term  equals the angle differential in 2-dimensional polar coordinates, $ x^{(1)} \d x^{(2)} - x^{(2)} \d x^{(1)} = \left( (- x^{(2)})^2 + (x^{(1)})^2 \right) \d \phi = \abs{\vec x}^2 \d \phi$. Consequently, the parametric integral of a single edge is
\begin{align*}
	\int_{a_e=0}^\infty 	e^{- {s^{(1)}}^2 - {s^{(2)}}^2} \d s^{(1)} \wedge \d s^{(2)} &= 	\int_{a_e=0}^\infty e^{-\frac{\abs{\vec x}^2}{a_e}} 2 a_e^{-2}  \abs{\vec x}^2 \d \phi_e \wedge \d a_e = 2 \d \phi_e .
\end{align*}
The remaining position-space integral is therefore over the angle of edge $e$ in the plane. These are the integrals whose vanishing, proved in \cite{kontsevich_deformation_2003}, asserts the absence of obstructions to deformation quantization of Poisson manifolds (\emph{Formality theorem} \cite{kontsevich_formality_1997}).

\medskip

We want to briefly mention that in \cite[Sec.~5]{kontsevich_formality_1997}, Kontsevich remarked that \emph{failure} of the formality theorem would imply the non-vanishing of the cohomology group   $H^1(\cGC_2)$ in the \emph{even graph complex}\footnote{What is called \emph{even} graph complex in our notation was called \emph{odd} graph complex in \cite{kontsevich_formality_1997}.}. Conversely, the formality theorem does not imply the vanishing,   but still,  $H^1(\cGC_2)$ is conjectured to be trivial. However, the Pfaffian form $\phi_\Graph$ (\cref{def:pff-form}) is being used on the \emph{odd} graph complex, therefore our result that $\alpha_{\Graph}=\phi_\Graph$ is not directly related to this question.

\subsection{Dipole / multiedge graphs and the Moyal product}

On both sides of the equivalence, the differential forms $\alpha_{\Graph}$ resp. $\phi_\Graph$ are of interest for arbitrary connected graphs because we can take exterior products with further differential forms.

When $\alpha_{\Graph}$ or $\phi_\Graph$ are considered by themselves, their form degree implies that their integral is only non-vanishing for connected graphs with two vertices and $\loopnumber+1$ edges. As \cref{lem:odd-loop} implies that these forms vanish if $\Graph$ has a self-loop, the only such graphs are the \emph{dipole} graphs. 
Let   $D_{2i+1}$ be the dipole graph with $2i+1$ edges, all oriented from $v_0$ to $v_1$, it has $2i$ loops.
The case when $i=1$ is given in~\cref{ex:pff-dipole}.
Its Pfaffian form is~\cite[Corollary~5.7]{brown_unstable_2024},
\begin{align}\label{dipole_pfaffian}
 \phi_{D_{2i+1}} &=  \frac{(2i)!}{(4\pi)^i\cdot i!}\frac{(a_1 \cdots a_{2i+1})^{i-1}}{\firstsymanzik^{i+\frac 12}} \sum_{e=1}^{2i+1} (-1)^{e-1} a_e \d a_1 \cdots \wedge \widehat {\d a_e} \wedge \cdots \wedge \d a_{2i+1}.
\end{align}
and its parametric integral was solved explicitly in \cite[Example~6.2]{brown_unstable_2024},
\begin{align}\label{dipole_integral}
I_{D_{2i+1}} = \int \phi_{D_{2i+1}} = 1 \qquad \forall i.
\end{align}
Conversely, the integral of the topological form $\alpha_\Graph$ has been computed in its position-space formulation in \cite[Eq.~(3.28)]{gaiotto_higher_2024}. The result $I_{D_{2i+1}} = 2^{-2i} = 2^{-\loopnumber}$ is compatible with \cref{thm:main_thm}, $\alpha_\Graph = 2^{-\loopnumber} \phi_\Graph$, where the relative sign is $+1$ for this choice of edge directions (\cite[Example 4.18]{brown_unstable_2024}).

\medskip
In the setting of topological field theory, the sum over all dipole (Feynman-) integrals amounts to  the two-argument bracket (\cref{bracket_integral}). This setup is discussed in \cite[Sec.~3.7.1 and 4.7.1]{gaiotto_higher_2024} for the example of   1-particle bosonic topological quantum mechanics. In that case, the \enquote{fields} are the quantum mechanical operators $q,p$, which in ordinary quantum mechanics (in the Heisenberg picture) would be functions of time, $q(t),p(t)$. In a  topological theory, they are independent of time. Hence, the observables $\mc O$ are monomials in $p$ and $q$, taken at any fixed time.
 The Feynman graphs $I_\Graph$ are built by Wick-contracting two such monomials. At tree-level, there is a single propagator, connecting one factor of $p$ to one factor of $q$, and all other factors remain. Since Feynman rules imply a sum over all choices to pick the particular $p$ out of $\mc O_1$ and $q$ out of $\mc O_2$, the combinatorial prefactor is produced correctly by the derivative $\partial_p \mc O_1$ and $\partial_q \mc O_2$ (without setting $p=0$ nor $q=0$). Hence, the tree-level term corresponds to
\begin{align}\label{Moyal_1}
	\begin{tikzpicture}[baseline=-.1cm]
	\node[vertex,label=below:$\mc O_1$](v1) at (0,0) {};
	\node[vertex,label=below:$\mc O_2$](v2) at (1.5,0) {};
	\draw[edge] (v1) -- (v2);
\end{tikzpicture} &=\left( \partial_p \mc O_1 \right) \left( \partial_q \mc O_2 \right) -  \left( \partial_q \mc O_1 \right) \left( \partial_p \mc O_2 \right)=: \eta^{ij} (\partial_i \mc O_1)(\partial_j \mc O_2), \quad \eta\defas \begin{pmatrix}
0 & -1 \\ 1 & 0
\end{pmatrix}.
\end{align}
We have introduced the $2\times 2$ matrix $\eta^{ij}=-\eta^{ji}$ and vectors $(\partial_q, \partial_p)^\Transpose$, summation over $i$ and $j$ is implied. By \cref{bracket_integral}, this expression is weighted by the integral $\frac{I_\Graph}{\abs{\Aut(\Graph)}}$, which is   unity since $\int \alpha_\Graph=\pm 1$ for all tree graphs.

The next term in the series vanishes since it has odd loop order. The third term is given by the two-loop dipole integral (\cref{ex:pff-dipole}), with $I_{D_3}= 2^{-2}$ and $\abs{\Aut(D_3)}= 3!$. It amounts to a sum over all ways of Wick-contracting three pairs of $(p,q)$ variables.
\begin{center}
	\begin{tikzpicture}
		\node[vertex,label=below:$\mc O_1$](v1) at (0,0) {};
		\node[vertex,label=below:$\mc O_2$](v2) at (1.5,0) {};
		\draw[edge, bend angle=30, bend left] (v1) to (v2);
		\draw[edge, bend angle=30, bend right] (v1) to (v2);
		\draw[edge] (v1) -- (v2);
		\node at (2.2,-.1){$=$};
		\node[font=\small] at (8.7,-.1){$\left( \partial^3_p \mc O_1 \right) \left( \partial^3_q \mc O_2 \right) -  \left( \partial^2_q \partial_p \mc O_1 \right) \left( \partial^2_p \partial_q \mc O_2 \right)\pm \ldots = \eta^{ij} \eta^{kl}\eta^{mn} (\partial_i \partial_k \partial_m \mc O_1)(\partial_j \partial_l \partial_n \mc O_2).  $};
	\end{tikzpicture}
\end{center}
Continuing this way, we obtain a sum of the form
\begin{align}\label{moyal_commutator}
\left \lbrace \mc O_1, \mc O_2 \right \rbrace &= \sum_{n=0}^\infty \frac{1}{4^n} \frac{1}{(2n+1)!} \left( \eta^{ij} \right) ^{2n+1} \left( \partial^{2n+1} \mc O_1 \right) \left( \partial^{2n+1} \mc O_2 \right)  ,
\end{align}
which is, by inspection of the terms, identified as the Moyal commutator between $\mc O_1$ and $\mc O_2$,
\begin{align*}
\left \lbrace \mc O_1, \mc O_2 \right \rbrace &= \mc O_1 \star \mc O_2 - \mc O_2 \star \mc O_1=\left[ \mathcal O_1, \mathcal O_2 \right] _\star.
\end{align*}
In view of BRST cohomology (\cref{sec:intro_alpha}), the bracket expresses the violation of BRST-closedness by arbitrary operators $\mc O_1, \mc O_2$. If one adds an interaction monomial $\mc I$ to the action, then, setting   $\mc O_1=\mc I$, the choice $Q_\text{quantum}=Q_\text{free}- \left \lbrace \mc I, \cdot \right \rbrace $ would be the corresponding BRST differential at quantum level. In particular, all higher-order brackets $\left \lbrace \mc I, \mc I, \cdot \right \rbrace $ vanish.

\bigskip
In the world of graph complexes,  the analogous sum over dipoles weighted by automorphism factors\footnote{Here and in~\cite{brown_unstable_2024}, we are viewing graphs as defined via half-edges and thus the dipole graphs (including the single edge $D_1$) have an extra automorphism that swaps the two vertices i.e. $\abs{\Aut(D_{2i+1})} = 2(2i+1)!$} plays a special role, too.  Recall that one interprets an integral $I_\Graph$ as an element of the dual.
The fact that the integral of $\phi_\Graph$ is non-zero only for dipole graphs means that it can be viewed as the pairing~\cite[Corollary 6.3]{brown_unstable_2024}
\begin{align}\label{eq:m-dipole}
  I_{\Graph} = \inner{\Graph, \mathfrak{m}}
  \qquad\text{with the dipole sum }\qquad \mathfrak{m} \defas \sum_{i=1}^{\infty} \frac{D_{2i+1}}{2 (2i+1)!}.
\end{align}
Here, the pairing $\left \langle \cdot, \cdot \right \rangle $ is defined such that it identifies the chain complex $\cGC_3$ with its dual cochain complex, i.e. the differential $\partial$ is dual to the codifferential $\delta$. On graphs, $\delta G$ is the sum of all graphs obtained by inserting an edge $e\in \Graph$ in turn (with appropriate signs).
The cochain complex has the structure of a DGLA, where the Lie bracket $[\cdot,\cdot]$ is defined via graph insertions. It is dual to the anti-symmetrization of the map\footnote{The reader might recognize this construction from the Hopf algebra theory of renormalization \cite{kreimer_hopf_1998}, where a similar-looking \emph{renormalization coproduct} describes how divergent subgraphs $\gamma$ are to be replaced by counterterms. However, for the latter, the sum goes over disjoint unions of 2-connected (=1PI) subgraphs $\gamma$, which in general is not equivalent to $\gamma$ being in $\cGC_3$ as requested in our case.}
\begin{align}\label{def:coproduct}
  \Delta \Graph \defas \sum_{\gamma} \gamma \otimes \Graph / \gamma,
\end{align}
where the sum runs over all subgraphs $\gamma$ where $\gamma$ and $\Graph/\gamma$ are both in $\cGC_3$.
On the level of chain complexes, the anti-symmetrization of $\Delta$ thus gives a map to the tensor product $\cGC_3 \otimes \cGC_3$ of complexes where the differential is given by $\partial \otimes 1 + 1 \otimes \partial$.

The pairing $\inner{\Graph, \mathfrak m}$ is only non-zero, and equal to $\pm 1$, when $\Graph$ is isomorphic to a dipole with even loop number.
As first discovered combinatorially in \cite{khoroshkin_differentials_2017}, $\mathfrak{m}$ is a Maurer-Cartan element for the odd graph complex and thus defines a \emph{twisted differential} $\delta + [\cdot, \mathfrak{m}]$ on $\cGC_3$.
 Insertion of a single edge is a special case of inserting a graph, therefore, by extending the dipole sum $\mathfrak{m}$ to include the single edge $D_1$ (i.e. including $i=0$), which we denote by $\mathfrak{m'}$, we can rewrite the twisted differential as a Lie bracket with $\mathfrak m'$,
\begin{align}\label{twisted_differential}
 \delta + [\cdot, \mathfrak{m}] = [\cdot , \mathfrak{m'}].
\end{align}
The cohomology of $\cGC_3$ with respect to \emph{this}  differential (as opposed to the usual differential $\delta$) was completely computed in~\cite[Theorem 2]{khoroshkin_differentials_2017}, with the only non-trivial class being a similar sum over dipoles, but weighted with a slightly different factor.
As we will explain in~\cref{sec:quadratic_relations}, this twisted differential arises naturally in the study of $I_{\Graph}$.

\subsection{Quadratic /  Stokes relations}\label{sec:quadratic_relations}

The \emph{operatope} $\Delta_\Graph$ introduced in \cite[Eq.~(2.30)]{budzik_feynman_2023} and \cite[Eq.~3.16]{gaiotto_higher_2024} is the integration domain of $\alpha_\Graph$ in terms of the variables $s_e$ (\cref{def:se}). If $S\subset V_\Graph$ is a subset of vertices, then denote by $G[S]\subset G$   the corresponding induced subgraph, and by $ \Graph/S$   the cograph, that is, the graph that is obtained when $G[S]$ is contracted to a single vertex in $\Graph$. The authors of \cite{gaiotto_higher_2024} observe that the holomorphic-topological integrals $I_\Graph$ satisfy \emph{quadratic identities}, which correspond to nilpotency of the BRST differential. Let $\Graph$ be a connected graph with $\abs{V_\Graph}=3$ vertices\footnote{\cite{gaiotto_higher_2024,budzik_feynman_2023} consider a more general case of $T$ topological and $H$ holomorphic directions, then the corresponding graphs are $(H+T)$-Laman \cite{pollaczek-geiringer_uber_1927,henneberg_graphische_1911,laman_graphs_1970}. In our case, $H=0$ and $T=1$, a Laman graph has $\abs{V_\Graph}=2$. }, then
\begin{align}\label{quadratic_identity}
	\sum_{ S \subset G, ~\abs{V_S}=2} \sgn(\Graph,S) \Delta_{\Graph[S]} \times \Delta_{\Graph/S} &=0.
\end{align}
In the present topological case, as the integral $I_\Graph=\int \alpha_\Graph$ is independent of kinematic data, this identity directly corresponds to an equivalent factorization of the integrals $I_{\Graph[S]} \times I_{\Graph/S}$ themselves. The connected graphs with $\abs{V_\Graph}=3$  are triangles with multiple edges. Their induced subgraphs are dipoles, so \cref{quadratic_identity} asserts that the integrals $I_{D_{2i+1}}$ of the $2i$-loop dipole graphs are related for different values of $i$.

On the other hand, for a suitable compactification of the simplex $\sigma_{\Graph}$ (\cref{eq:simplex}) called the \emph{Feynman polytope} $\widetilde{\sigma}_{\Graph}$~\cite{bloch_motives_2006, brown_feynman_2017}, the boundary components are in bijection with, and factorize into,
\[ \widetilde{\sigma}_{\gamma} \times \widetilde{\sigma}_{\Graph / \gamma }\]
for all bridgeless (1PI) subgraphs $\gamma$ of $\Graph$ and single edges $\gamma = \{e\}$ (in which case the polytope $\widetilde{\sigma}_{e}$ is a point).
Based on the closedness ($\d \phi_\Graph=0$) of the Pfaffian form and how it factorizes at the boundary of $\widetilde{\sigma}_{\Graph}$, Stokes' theorem gives \cite[Section 6.2]{brown_unstable_2024}
\begin{align}\label{Stokes_relation}
  0 &= \delta I_{\Graph} + \frac{1}{2} [I_{\Graph}, \mathfrak{m}] = \delta\mathfrak{m} + \frac{1}{2} [\mathfrak{m}, \mathfrak{m}],
\end{align}
which is equivalent to the Maurer-Cartan equation for the dipole sum $\mathfrak{m}$ of \cref{eq:m-dipole}.
That is, the twisted differential $\delta + [\cdot, \mathfrak{m}]$ of \cref{twisted_differential} manifests geometrically due to the boundary structure of $\widetilde{\sigma}_{\Graph}$.

We thereby see that in the case of the Pfaffian/topological form, the \emph{quadratic identity} for $I_{\Graph}$ resulting from \cref{quadratic_identity} for the operatope directly corresponds to the Stokes relation $0=\delta I_{\Graph} + \frac{1}{2} [I_{\Graph},I_{\Graph}]$. In both cases, this identity only gives non-trivial relations when applied to graphs with 3 vertices, and the relations in both cases lead to the identity $I_{D_{2i+1}} = \left( I_{D_3} \right) ^i$ (see \cite[Example 6.9]{brown_unstable_2024} and \cite[Sec.~3.7.1]{gaiotto_higher_2024}).

\subsection{Outlook}

The proof of the equivalence of the Pfaffian form $\phi_\Graph$ and the topological form $\alpha_\Graph$ given in the present work is an explicit combinatorial construction, based on their respective definitions.

As pointed out above, the BRST-brackets and the topological form  $ \alpha_\Graph$ are intimately related to the formality theorem \cite{kontsevich_deformation_2003} for topological field theory \cite{cattaneo_path_2000}, and the graph complex $\cGC_3$ where the form $\phi_\Graph$ operates, has originally been introduced in \cite{kontsevich_formal_1993} in a related context.
Conversely,  in the path integral formulation of Chern Simons theory, the sign computed by a Pfaffian of the field differential operator is crucial to relate the QFT picture to Donaldson invariants, as  discussed extensively in \cite[Section~3]{witten_topological_1988}.  Of course, a priori the latter Pfaffian is a totally different object from our Pfaffian form $\phi_\Graph$. Nevertheless, one can speculate whether  it would be possible to track the appearance of Pfaffians all the way from path integrals in TQFT to their series expansion in terms of Feynman diagrams to their role in the graph complex.

Secondly,  the sum of dipole Feynman integrals (\cref{moyal_commutator}) represents the only quantum correction $ \left \lbrace \mc I, \cdot  \right \rbrace $ to the BRST differential  for the case of topological quantum mechanics.  Moreover, it follows from the cohomology of the Poisson operad  that $d>1$-dimensional topological quantum field theory has no non-trivial higher brackets either \cite[Sect.~3.3]{beem_secondary_2020}. At the same time, the dipole sum $\mathfrak m$ defines a twisted differential in \cref{twisted_differential}, and the entire cohomology of $\cGC_3$ with respect to this differential is itself given by a sum of dipoles. It would be interesting to understand if there is any relation between these observations (cf.~\cite[Remark 2, Appendix A]{khoroshkin_differentials_2017}).

Thirdly,  the topological form $\alpha_\Graph$ arises from integrals where $Q_\text{free}$ acts on a product of propagators.
The free BV-BRST operator $Q_\text{free}$ satisfies \emph{descent relations} \cite{beem_secondary_2020,gaiotto_higher_2024} $Q_\text{free} \mc O = -\d \mc O$, where $\d$ is the free-field differential operator. On the other hand, the Feynman propagator $P(x)$ is by definition a solution of $\d P(x) = \pi \delta(x)$. The integrals $I_\Graph$ therefore consist of a sum of terms where one propagator in turn is \enquote{contracted}, up to regularization. Having made the explicit relation with the graph complex $\cGC_3$, a natural question is to what extent this contraction of propagators corresponds to the differential $\partial$ in the graph complex, which is defined as the signed sum of all ways to contract one edge of $\Graph$. One approach would be to   construct an explicit  position-space representation of $\alpha_{\Graph}$ that carefully takes into account regularization.

\medskip

Lastly, as alluded to earlier,
in the context of graph complexes one often wants to take products of  $\phi_{\Graph}$ with further differential forms $\omega_\Graph$ \cite{brown_unstable_2024}, whose integrals could then pair with graphs outside of the dipole family. Similarly, in~\cite{gaiotto_higher_2024}, a second class of differential forms, $\rho_\Graph$, has been introduced to study  the analogous corrections in theories with holomorphic directions.
In an upcoming article, we will give a description of the holomorphic form $\rho_\Graph$ in terms of graph matrices.

\section{Definitions and proofs}\label{sec:definitions_proofs}

The remainder of the paper is devoted to explaining all graph-theoretic notions that appear in the formulas of Pfaffian- and topological differential forms, and to establishing the lemmas that ultimately lead to a proof of \cref{thm:main_thm}.

\subsection{Incidence and path matrices} \label{sec:incidence_path_matrices}

We first recall some standard definitions of graph theory. We will often have to fix one special vertex $v_\star$. This choice is arbitrary, but we usually take $v_\star = v_{\abs{V_\Graph}}$ in order to have a natural labelling on the remaining vertices $\overline V_\Graph \defas V_\Graph \setminus \left \lbrace v_\star \right \rbrace $.
\begin{definition}\label{def:vertex_incidence_matrix}
	The reduced \emph{vertex incidence matrix} $\incidencematrix$ of $\Graph$ is a $\abs{E_\Graph}\times (\abs{V_\Graph}-1)$--matrix where rows are indexed by edges $e\in E_\Graph$ and the columns are indexed by vertices $v\in \overline V_\Graph$, and the entries are
	\begin{align*}
	\left( \incidencematrix \right) _{e,v} \defas \begin{cases}
		-1 & \text{if $e$ starts at $v$, that is, tail$(e) = v$},\\
		1 & \text{if $e$ ends at $v$, that is, head$(e) = v$},\\
		0 &\text{else}.
	\end{cases}
	\end{align*}
\end{definition}

 We introduce one formal parameter $a_e$ for every edge $e$. In physics, these are the \emph{Schwinger parameters}. From these parameters, we define the \emph{edge variable matrix}
\begin{align}\label{def:edge_variable_matrix}
	\edgematrix \defas \diag(\vec a)=\begin{pmatrix}
		a_1 & \\
		& \ddots & \\
		&& a_{\abs{E_\Graph}}
	\end{pmatrix}.
\end{align}
Powers of the matrix $\edgematrix$ are understood in terms of 
the corresponding power of the diagonal entries. In particular, we often use
\begin{align*}
	\edgematrix^{-1} = \diag\left(  \vec {a^{-1}} \right).
\end{align*}

A \emph{path} from $v$ to $w$ is a connected sequence of edges that does not visit any vertex more than once.
A \emph{cycle} is a closed path, i.e. a path where $v=w$.
 The number of linearly independent cycles in a graph $\Graph$ is, by Euler's formula,
\begin{align}\label{def:loopnumber}
\loopnumber=\abs{E_\Graph}-\abs{V_\Graph}+1,
\end{align}
it is called the \emph{loop number} of the graph.
As a cycle can be identified with the set of edges it contains, we can denote them by a column vector $C_j$ with $\abs{E_\Graph}$ entries, where the entry $(C_j)_k$ is $(+1)$ if the cycle $j$ passes through edge $k$ in positive direction, $(-1)$ if cycle $j$ passes through edge $k$ in negative direction, and zero otherwise.
The set of cycles form a vector space over $\Z$ of dimension $\loopnumber$ called \emph{cycle space} $\cyclespace$ and any set of $\loopnumber$ independent cycles forms a basis of $\cyclespace$.

\begin{definition}\label{def:cycle_incidence_matrix}
	Let $\Graph$ be a connected graph and let $ \{C_1, \ldots, C_{\loopnumber}\}$ be an ordered cycle basis of $\Graph$. 
  The corresponding \emph{cycle incidence matrix} $\cycleincidencematrix$ is an $\abs{E_\Graph}\times \loopnumber$--matrix whose columns are these edge vectors $C_j$.
\end{definition}

A \emph{tree} is a connected graph without cycles. A \emph{spanning tree} $T \subseteq \Graph$ is a tree that is incident to all vertices $V_\Graph$ of $\Graph$. A choice of spanning tree $T \subseteq \Graph$ induces a choice of cycle basis since, by \cref{def:loopnumber}, there are exactly $\loopnumber$ edges $e\in E_\Graph$ which are not in $T$. Adding one of these edges to $T$ forms a cycle, and these cycles are necessarily linearly independent since the defining edge $e$ is only contained in the cycle it defines. The so-obtained set of cycles is called the \emph{fundamental cycle basis} induced by $T$.
\begin{definition}\label{def:fundamentalcyclebasis} We denote by $\fundamentalcbasis$ the cycle incidence matrix (\cref{def:cycle_incidence_matrix}) corresponding to a fundamental cycle basis induced by a spanning tree $T$.
\end{definition}

A graph typically allows for several distinct choices of paths between two given vertices. Choosing one path to each vertex gives rise to a \emph{path matrix}:

\begin{definition}\label{def:pathmatrix}
	A \emph{path matrix} $\pathmatrix$ is a $\abs{E_\Graph} \times (\abs{V_\Graph}-1)$--matrix where the column $j$ represents a (directed) path from $v_\star$ to $v_j$. That is, $\pathmatrix_{e,j}=+1$ if the edge $e$ appears in the path in its natural direction, $-1$ when it is used in reverse direction, and $\pathmatrix_{e,j}=0$ when $e$ is not part of the path $j$.
\end{definition}

The columns of a path matrix are linearly independent, and they span a subspace of edge space of dimension $\abs{V_\Graph}-1$, and the columns of $\pathmatrix$ are linearly independent of the columns of $\cycleincidencematrix$.

\begin{lemma}\label{lem:pathmatrix_incidencematrix_orthogonal}
	Let $\incidencematrix$ be the vertex incidence matrix (\cref{def:vertex_incidence_matrix}), $\cycleincidencematrix$ any choice of cycle incidence matrix (\cref{def:cycle_incidence_matrix}), and $\pathmatrix$ any choice of path matrix. Then
	\begin{enumerate}
		\item $\pathmatrix^\Transpose \incidencematrix =\identitymatrix_{(\abs{V_\Graph}-1)\times (\abs{V_\Graph}-1)}$,
		\item $	\incidencematrix^\Transpose \cycleincidencematrix = \zeromatrix_{(\abs{V_\Graph}-1) \times \loopnumber},\quad \text{and equivalently} \qquad
		\cycleincidencematrix^\Transpose \incidencematrix = \zeromatrix_{\loopnumber \times (\abs{V_\Graph}-1)} $.
	\end{enumerate}
\end{lemma}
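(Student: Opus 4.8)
The plan is to evaluate each matrix product entrywise and recognise every entry as the coefficient of a fixed vertex in the boundary of the $1$-chain represented by the relevant path or cycle; the telescoping structure of a walk then makes all three identities immediate.

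First I would fix the interpretation of $\incidencematrix$. Its column indexed by $v\in\overline V_\Graph$ records, for each edge $e$, the coefficient $+1$ if $v=\mathrm{head}(e)$ and $-1$ if $v=\mathrm{tail}(e)$. Equivalently, regarding an oriented subgraph with multiplicities in $\{+1,0,-1\}$ on its edges as a $1$-chain $H=\sum_e H_e\, e$, the number $\sum_{e}H_e\,\incidencematrix_{e,v}$ is precisely the coefficient of $v$ in the boundary $\partial H=\sum_e H_e\bigl(\mathrm{head}(e)-\mathrm{tail}(e)\bigr)$, with the caveat that $v_\star$ is not recorded because it does not index a column of $\incidencematrix$.

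For part (i): the $(i,j)$ entry of $\pathmatrix^\Transpose\incidencematrix$ is $\sum_{e\in E_\Graph}\pathmatrix_{e,i}\,\incidencematrix_{e,j}$, which by the observation above is the coefficient of $v_j$ in $\partial P_i$, where $P_i$ is the chosen directed path from $v_\star$ to $v_i$ encoded by the $i$-th column of $\pathmatrix$. Traversing $P_i$ as a signed sequence of edges, every vertex strictly between $v_\star$ and $v_i$ is entered along one edge and left along the next, so its two contributions $+1$ and $-1$ cancel; only $-1$ at $v_\star$ and $+1$ at $v_i$ survive, i.e. $\partial P_i=v_i-v_\star$. Since $v_\star\notin\overline V_\Graph$, this gives $(\pathmatrix^\Transpose\incidencematrix)_{i,j}=\delta_{ij}$, hence $\pathmatrix^\Transpose\incidencematrix=\identitymatrix_{(\abs{V_\Graph}-1)\times(\abs{V_\Graph}-1)}$. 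For part (ii): the $(k,j)$ entry of $\cycleincidencematrix^\Transpose\incidencematrix$ is, in the same way, the coefficient of $v_j$ in $\partial C_k$; but $C_k$ is a closed walk, so the telescoping argument leaves no endpoints and $\partial C_k=0$. Thus $\cycleincidencematrix^\Transpose\incidencematrix=\zeromatrix_{\loopnumber\times(\abs{V_\Graph}-1)}$, and transposing (the zero matrix is its own transpose) gives $\incidencematrix^\Transpose\cycleincidencematrix=\zeromatrix_{(\abs{V_\Graph}-1)\times\loopnumber}$ as well.

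There is no real obstacle here; the only point to be careful about is the sign bookkeeping when a path or cycle traverses an edge against its fixed orientation, which is exactly what the $\pm1$ entries of $\pathmatrix$ and $\cycleincidencematrix$ are designed to absorb. Alternatively one could argue purely combinatorially by induction on the length of the path (resp. cycle), peeling off one edge at a time, but the chain-boundary formulation collapses the cancellation to a single line.
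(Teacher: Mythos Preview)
Your proof is correct and follows essentially the same approach as the paper's: both compute the matrix entries as the signed count of edges of a path (or cycle) incident to a given vertex and use the telescoping cancellation at intermediate vertices. You phrase this via the boundary operator on $1$-chains ($\partial P_i=v_i-v_\star$, $\partial C_k=0$), whereas the paper argues directly that a non-endpoint vertex is met by either zero or two edges of opposite sign, but the content is identical.
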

\begin{proof}
	We only show point 1, the second point follows from an analogous argument.

	The matrix product $\pathmatrix^\Transpose \incidencematrix$ is the dot product between columns of $\incidencematrix$ and rows of $\pathmatrix^\Transpose$. Column $i$ of $\incidencematrix$ consists of the edges incident to vertex $i$. Row $j$ of $\pathmatrix^\Transpose$ is column $j$ of $\pathmatrix$, it consists of edges that form a path from $v_\star$ to $v_j$.

	Notice that the path to $v_j$ contains exactly one edge adjacent to $v_j$, with sign $+1$ if the edge points towards $v_j$. If $i=j$, this edge is the only one that simultaneously appears in column $i$ of $\incidencematrix$, and it appears with the same sign. Therefore, the diagonal entries of the matrix product are $+1$.
	If $i\neq j$, there are two possibilities: Either none, or exactly two, of the edges in the path to $v_j$ is contained in column $i$. In the first case, the dot product vanishes. In the second case, the two edges appear with opposite relative signs, and the sum is zero.
\end{proof}

\subsection{Laplacians}

We will often  consider  minors of graph matrices.
Let $M$ be an $n \times m$-matrix and consider sets $A$ and $B$ corresponding to subsets of rows and columns of $M$ respectively.
We define two types of submatrices:
\begin{align}\label{def:mat-minors}
	M( A,B ) &\defas M \text{ where rows $A$ and columns $B$ have been removed}\\
  M[ A,B ] &\defas M \text{ where only rows $A$ and columns $B$ are present}. \nonumber
\end{align}
If $\abs{A}=\abs{B}$, then these matrices are square and it makes sense to compute their determinant.
We use \enquote{$-$} in place of $A$ or $B$ to denote all rows or all columns of $M$ and use the shorthand
\begin{align}\label{submatrix_rows}
M[A] \defas M[A, -]
\end{align}
for the submatrix obtained by taking the rows of $M$ corresponding to $A$.

We need two different notions of Laplacian matrices. In both cases, they are \emph{reduced} in the sense that they do not involve the removed vertex $v_\star$.
\begin{definition}\label{def:Laplacian}
	The \emph{expanded vertex Laplacian} is a $(\abs{E_\Graph}+\abs{V_\Graph}-1)\times (\abs{E_\Graph}+\abs{V_\Graph}-1)$--matrix consisting of the reduced incidence matrix $\incidencematrix$ (\cref{def:vertex_incidence_matrix}) and the edge matrix (\cref{def:edge_variable_matrix}):
	\begin{align*}
		\explaplacian &\defas \begin{pmatrix}
			\edgematrix & \incidencematrix \\
			-\incidencematrix^{\Transpose} & \zeromatrix
		\end{pmatrix}.
	\end{align*}
	The  \emph{vertex Laplacian} of   $\Graph$ is the $(\abs{V_\Graph}-1)\times (\abs{V_\Graph}-1)$--matrix
	\begin{align*}
		\laplacian &\defas  {\incidencematrix}^{\Transpose} \edgematrix^{-1}  {\incidencematrix} =\laplacian^{\Transpose}.
	\end{align*}
\end{definition}

By construction, the determinants of the Laplacian and of the expanded Laplacian coincide up to a factor.
\begin{definition}\label{def:Symanzik_polynomial}
	The  \emph{Symanzik polynomial} of the graph $\Graph$ is
	\begin{align*}
		\firstsymanzik_\Graph\defas \det \explaplacian = \det \laplacian \cdot \prod_{e\in E_\Graph} a_e
	\end{align*}
\end{definition}

The following statement about determinants of minors of incidence matrices goes back to Kirchhoff \cite{kirchhoff_ueber_1847}. The cycle incidence version we attribute to Tutte \cite{tutte_lectures_1965}.
\begin{lemma}[\cite{kirchhoff_ueber_1847}, \cite{tutte_lectures_1965}~Theorem 5.46]\label{lem:matrix-tree}
  Let $U$ denote a subset of edges of $\Graph$ such that $\abs{U} = \abs{V_{\Graph}}-1$.
  Let $\incidencematrix[U]$ and $\cyclebasis[\overline{U}]$ denote the square submatrices as in \cref{submatrix_rows}, where the complement $\overline{U} \defas E_{\Graph} \setminus U$.
  Then
  \begin{align*}
    \det\incidencematrix[U] = \begin{cases} \pm 1 & \text{ if $U$ is a spanning tree} \\ 0 & \text{otherwise} \end{cases},
      \qquad
    \det\cycleincidencematrix[\overline{U}] = \begin{cases} \pm 1 & \text{ if $U$ is a spanning tree} \\ 0 & \text{otherwise} \end{cases}
  \end{align*}
\end{lemma}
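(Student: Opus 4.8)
The plan is to treat both statements by the same dichotomy. If $U$ is a spanning tree, we show that the relevant determinant is a unit in $\Z$, hence $\pm 1$; if $U$ is not a spanning tree, then since $\abs{U}=\abs{V_\Graph}-1$ and a forest on a vertex set with that many edges would have to be connected and to span, $U$ must contain a cycle, and this cycle produces a linear dependence that kills the determinant. Throughout, note that $\abs{\overline U}=\abs{E_\Graph}-(\abs{V_\Graph}-1)=\loopnumber$ by \cref{def:loopnumber}, so $\cycleincidencematrix[\overline U]$ is indeed a square $\loopnumber\times\loopnumber$ matrix.

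\textbf{The vanishing cases.} Suppose $U$ contains a cycle $C$, viewed as an edge vector supported on $U$. For $\incidencematrix[U]$: adding up the rows of $\incidencematrix[U]$ indexed by the edges of $C$, each weighted by the sign with which $C$ traverses it, gives $0$ in every column, because the cycle enters and leaves each vertex equally often (this is the computation behind the second part of \cref{lem:pathmatrix_incidencematrix_orthogonal}). This is a nontrivial relation among the rows, so $\det\incidencematrix[U]=0$. For $\cycleincidencematrix[\overline U]$: since the columns of $\cycleincidencematrix$ form a basis of the cycle space $\cyclespace$, we may write $C=\cycleincidencematrix\,\vec c$ with $\vec c\neq 0$; restricting to the rows in $\overline U$, on which $C$ vanishes, yields $0=\cycleincidencematrix[\overline U]\,\vec c$, so $\cycleincidencematrix[\overline U]$ has nontrivial kernel and $\det\cycleincidencematrix[\overline U]=0$.

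\textbf{The spanning-tree cases.} Now let $U=T$ be a spanning tree. For $\incidencematrix[T]$, take the path matrix $\pathmatrix$ of \cref{def:pathmatrix} whose $j$-th column is the unique path from $v_\star$ to $v_j$ inside $T$; all of these paths use only edges of $T$, so $\pathmatrix$ has zero rows outside $T$, and the identity $\pathmatrix^\Transpose\incidencematrix=\identitymatrix$ from \cref{lem:pathmatrix_incidencematrix_orthogonal} restricts to $\pathmatrix[T]^\Transpose\incidencematrix[T]=\identitymatrix$. Since both matrices have integer entries, $\det\pathmatrix[T]\cdot\det\incidencematrix[T]=1$ forces $\det\incidencematrix[T]\in\{+1,-1\}$. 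For $\cycleincidencematrix[\overline T]$, consider first the fundamental cycle incidence matrix $\fundamentalcbasis$ induced by $T$ (\cref{def:fundamentalcyclebasis}): each fundamental cycle meets $\overline T$ only in its own defining non-tree edge, so $\fundamentalcbasis[\overline T]$ is a signed permutation matrix and $\det\fundamentalcbasis[\overline T]\in\{+1,-1\}$. Any other cycle basis is another $\Z$-basis of $\cyclespace$, hence $\cycleincidencematrix=\fundamentalcbasis P$ for some $P\in\GL_\loopnumber(\Z)$; restricting rows to $\overline T$ gives $\cycleincidencematrix[\overline T]=\fundamentalcbasis[\overline T]P$, so $\det\cycleincidencematrix[\overline T]=\det\fundamentalcbasis[\overline T]\cdot\det P\in\{+1,-1\}$.

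\textbf{Main obstacle.} There is no serious obstacle here; the statement is classical, and the only care needed is in the bookkeeping: verifying that a subgraph with $\abs{V_\Graph}-1$ edges is either a spanning tree or contains a cycle, that the path matrix built from the tree paths is a legitimate path matrix supported on $T$, and that \enquote{cycle basis} is understood as a $\Z$-basis of $\cyclespace$, so that the change-of-basis matrix $P$ lies in $\GL_\loopnumber(\Z)$ rather than merely $\GL_\loopnumber(\Q)$. As an alternative self-contained route to $\det\incidencematrix[T]=\pm1$, one can induct on $\abs{V_\Graph}$: a spanning tree with at least two vertices has a leaf $v\neq v_\star$, the column of $\incidencematrix[T]$ indexed by $v$ has a single nonzero entry $\pm1$ (the edge at $v$), and Laplace expansion along that column reduces to the incidence matrix of the spanning tree $T\setminus\{e\}$ of $\Graph\setminus\{v\}$.
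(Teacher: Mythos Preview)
Your proof is correct. The paper does not actually supply a proof of this lemma; it is stated as a classical result and attributed to Kirchhoff and Tutte, so there is no argument in the text to compare against directly.

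Your approach is nonetheless well adapted to the paper's setup: rather than proving total unimodularity of $\incidencematrix$ from scratch, you exploit the tools the paper has already built. For $\det\incidencematrix[T]$ you use the tree-supported path matrix together with $\pathmatrix^\Transpose\incidencematrix=\identitymatrix$ from \cref{lem:pathmatrix_incidencematrix_orthogonal} (this is exactly the observation $\det\pathmatrix[T]=\det\incidencematrix[T]$ that the paper later uses in the proof of \cref{lem:signs}); for $\det\cycleincidencematrix[\overline T]$ you reduce to the fundamental basis $\fundamentalcbasis$ and then invoke the $\GL_\loopnumber(\Z)$ change of basis, which is the same mechanism the paper appeals to around \cref{eq:lapmat-trans}. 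The vanishing in the non-tree case via a cycle-induced linear dependence is standard and cleanly argued. The one caveat you already flag---that ``cycle basis'' must mean a $\Z$-basis of $\cyclespace$ so that $P\in\GL_\loopnumber(\Z)$---is consistent with how the paper uses the term throughout.
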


The (weighted) matrix tree theorem is   the well-known consequence of \cref{lem:matrix-tree} (cf.~\cite{moon_counting_1970,maurer_matrix_1976})  that the Symanzik polynomial is given by a sum over all spanning trees of the graph,
\begin{align}\label{symanzik_polynomial_trees}
  \det\laplacian = \sum_{\substack{T \\ \text{spanning tree}}} \prod_{e \in T} \frac{1}{a_e}, \qquad \qquad
  \firstsymanzik_{\Graph} = \sum_{\substack{T \\ \text{spanning tree}}} \prod_{e \not\in T} a_e .
\end{align}

We can also consider another version of a Laplacian matrix, this time defined via a cycle incidence matrix.
\begin{definition}\label{def:cycle_laplacian}
	The \emph{cycle Laplacian}\footnote{This matrix is called \emph{dual Laplacian} in \cite{brown_unstable_2024}, whereas in \cite{brown_invariant_2021} it is  just \emph{Laplacian} and $\laplacian$ is called dual Laplacian. We use the names \enquote{cycle Laplacian} and \enquote{vertex Laplacian} to avoid any potential confusion.  } of a graph $\Graph$, with respect to a basis $\mathcal{C}=(C_1,\ldots,C_{\loopnumber})$ of the cycle space $\cyclespace(\Graph)$ (\cref{def:cycle_incidence_matrix}), is the symmetric $\loopnumber\times\loopnumber$ matrix
	\begin{equation*}
		\duallaplacian_\mathcal{C} \defas \cycleincidencematrix^\Transpose \edgematrix \cycleincidencematrix.
	\end{equation*}
\end{definition}

The cycle Laplacian has an analogue of \cref{symanzik_polynomial_trees}. As this statement is much less known in the literature, we give an explicit proof.
\begin{lemma}\label{lem:dual-det}
	For any choice of cycle incidence matrix $\cycleincidencematrix$ of the graph $\Graph$,
	\begin{align*}
    \det{\duallaplacian_{\cycleincidencematrix}} =\sum_{\substack{T \\ \textnormal{spanning tree}}} \prod_{e \not\in T} a_e = \firstsymanzik_{\Graph}
	\end{align*}
	coincides with the Symanzik polynomial (\cref{def:Symanzik_polynomial}.)
\end{lemma}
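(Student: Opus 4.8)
The plan is to apply the Cauchy--Binet formula to the product $\duallaplacian_{\cycleincidencematrix} = \cycleincidencematrix^\Transpose \edgematrix \cycleincidencematrix$, in direct analogy with the standard proof of the matrix-tree theorem for $\laplacian$ given in \cref{symanzik_polynomial_trees}. Writing $\edgematrix = \diag(\vec a)$ and noting that $\cycleincidencematrix$ is an $\abs{E_\Graph}\times\loopnumber$ matrix, we expand
\begin{align*}
  \det\duallaplacian_{\cycleincidencematrix}
  = \det\!\left( \cycleincidencematrix^\Transpose \edgematrix \cycleincidencematrix \right)
  = \sum_{\substack{W \subseteq E_\Graph \\ \abs{W} = \loopnumber}}
    \det\!\left( \cycleincidencematrix[W]^\Transpose \right) \cdot \Big(\prod_{e\in W} a_e\Big) \cdot \det\!\left( \cycleincidencematrix[W] \right)
  = \sum_{\substack{W \subseteq E_\Graph \\ \abs{W} = \loopnumber}} \left( \det \cycleincidencematrix[W] \right)^2 \prod_{e\in W} a_e .
\end{align*}
Here I would first absorb the diagonal factor $\edgematrix$ by Cauchy--Binet applied to $(\cycleincidencematrix^\Transpose \edgematrix)$ and $\cycleincidencematrix$, picking the same index set $W$ on both sides because $\edgematrix$ is diagonal, which is what produces the square and the monomial $\prod_{e\in W} a_e$.

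The second step is to identify which subsets $W$ contribute. By \cref{lem:matrix-tree} (the Tutte half), $\det \cycleincidencematrix[W] = \pm 1$ if the complement $\overline W = E_\Graph \setminus W$ is a spanning tree of $\Graph$, and $\det \cycleincidencematrix[W] = 0$ otherwise. Since $\abs{W} = \loopnumber$ forces $\abs{\overline W} = \abs{E_\Graph} - \loopnumber = \abs{V_\Graph} - 1$ by \cref{def:loopnumber}, the subsets $\overline W$ of the required cardinality are exactly the candidates for spanning trees. Substituting $\left(\det \cycleincidencematrix[W]\right)^2 = 1$ for those $W$ with $\overline W$ a spanning tree and $0$ otherwise, and re-indexing the sum by $T \defas \overline W$, gives
\begin{align*}
  \det\duallaplacian_{\cycleincidencematrix}
  = \sum_{\substack{T \\ \text{spanning tree}}} \prod_{e\notin T} a_e
  = \firstsymanzik_\Graph ,
\end{align*}
where the last equality is the matrix-tree expansion of the Symanzik polynomial in \cref{symanzik_polynomial_trees} together with \cref{def:Symanzik_polynomial}. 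This also makes manifest that the answer does not depend on the choice of cycle basis $\cycleincidencematrix$, consistent with the $\GL_\loopnumber(\Z)$-invariance noted in \cref{lem:pff-int-props}.

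There is no serious obstacle here; the only point requiring a little care is making sure the cardinality bookkeeping in Cauchy--Binet matches the hypothesis of \cref{lem:matrix-tree} — namely that summing over $\loopnumber$-subsets $W$ of edges is the same as summing over $(\abs{V_\Graph}-1)$-subsets $\overline W$ via complementation — and checking that the edge orientations and cycle-basis signs enter only through the squared determinant, so that all surviving terms have coefficient $+1$ regardless of sign conventions. One could alternatively deduce the identity from $\Pf(\cdots)^2 = \det(\cdots)$ arguments or from the known equality $\det\laplacian \cdot \prod_e a_e = \firstsymanzik_\Graph$ combined with a direct comparison $\duallaplacian_{\cycleincidencematrix}$ versus $\laplacian$, but the Cauchy--Binet route is the cleanest and most self-contained.
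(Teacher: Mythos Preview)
Your proposal is correct and follows essentially the same route as the paper's own proof: apply Cauchy--Binet to $\cycleincidencematrix^\Transpose \edgematrix \cycleincidencematrix$, reduce the sum over $\loopnumber$-subsets $W\subseteq E_\Graph$ to $\sum_W (\det\cycleincidencematrix[W])^2\prod_{e\in W}a_e$, and then invoke \cref{lem:matrix-tree} to restrict to spanning-tree complements. The paper's argument is identical in structure and in the lemmas it cites.
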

\begin{proof}
	Expanding the determinant of $\duallaplacian_{\cycleincidencematrix}$ via the Cauchy-Binet formula,
	\begin{align*}
		\det\duallaplacian_{\cycleincidencematrix}
		= \det\left(\cycleincidencematrix^{\Transpose}\edgematrix\cycleincidencematrix\right)
		&= \sum_{\substack{U \subseteq E_{\Graph}\\ \abs{U} = \loopnumber}} \det\left(\cycleincidencematrix^{\Transpose}[-,U] \right) \cdot \det\left(\left(\edgematrix\cycleincidencematrix\right)[U,-]\right) \\
		&= \sum_{\substack{U \subseteq E_{\Graph}\\ \abs{U} = \loopnumber}} \det\left(\cycleincidencematrix[U] \right) \cdot \det\left(\cycleincidencematrix[U]\right) \cdot \prod_{e \in U} a_e
		= \sum_{\substack{T \\ \text{spanning tree}}} \prod_{e \not\in T} a_e
		= \firstsymanzik_{\Graph}
	\end{align*}
  where the second to last equality follows from  \cref{lem:matrix-tree}, and the last equality is \cref{symanzik_polynomial_trees}.
\end{proof}

It might appear that the properties of the cycle Laplacian (\cref{def:cycle_laplacian}) could depend on the particular choice of cycle incidence matrix $\cycleincidencematrix$ used in its definition. Indeed, choosing a different $\cycleincidencematrix'$ amounts to choosing a different basis in cycle space, hence a transformation   $\mathcal{C}'=\mathcal{C}P$ for some invertible matrix $P\in \GL_\loopnumber(\Z)$. This implies a change of the corresponding cycle Laplacian by conjugation,
\begin{equation}\label{eq:lapmat-trans}
	\duallaplacian_{\mathcal{C}'} = P^\Transpose \duallaplacian_{\mathcal{C}} P.
\end{equation}
The determinant of the cycle Laplacian then changes according to
\begin{align*}
\det \left( \cyclelaplacian_{\cycleincidencematrix'} \right) = \det \left( P \right) ^2 \cdot \det \left( \cyclelaplacian_\cycleincidencematrix \right) .
\end{align*}
The fact that both $\cycleincidencematrix$ and $\cycleincidencematrix'$ have integer entries $\pm 1$, and that $P$ is invertible, means that   $\det(P)=\pm 1$. Consequently,  the determinant of the cycle Laplacian, which is the Symanzik polynomial (\cref{def:Symanzik_polynomial}), does not depend on the choice of $\cycleincidencematrix$.

As every non-empty connected graph has a spanning tree, when all the Schwinger parameters are strictly positive $a_e > 0$ by \cref{symanzik_polynomial_trees} we have that the Symanzik polynomial is non-zero and thus both Laplacian matrices are invertible. This is also easy to see from the fact that, by definition or as a corollary of \cref{lem:matrix-tree}, $\incidencematrix$ has rank $(\abs{V_\Graph}-1)$ and $\cycleincidencematrix$ has rank $\loopnumber$.

\begin{proposition}\label{lem:laplacian_cyclelaplacian_inverse}
	The inverses of the vertex Laplacian $\laplacian$ (\cref{def:Laplacian}) and the cycle Laplacian $\cyclelaplacian_\cycleincidencematrix$ (\cref{def:cycle_laplacian}), regardless of the choice of $\cycleincidencematrix$, are related via
	\begin{align*}
    \edgematrix -   \incidencematrix \laplacian^{-1} \incidencematrix^\Transpose = \edgematrix \cycleincidencematrix \cyclelaplacian^{-1}_{\cyclebasis} \cycleincidencematrix^\Transpose \edgematrix.
	\end{align*}
\end{proposition}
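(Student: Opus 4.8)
The key identity to establish is
\begin{align*}
\edgematrix - \incidencematrix \laplacian^{-1} \incidencematrix^\Transpose = \edgematrix \cycleincidencematrix \cyclelaplacian^{-1}_{\cyclebasis} \cycleincidencematrix^\Transpose \edgematrix,
\end{align*}
where $\laplacian = \incidencematrix^\Transpose \edgematrix^{-1} \incidencematrix$ and $\cyclelaplacian_\cyclebasis = \cycleincidencematrix^\Transpose \edgematrix \cycleincidencematrix$. The natural strategy is to recognize both sides as the same projection operator on edge space $\R^{E_\Graph}$, expressed in two dual ways. The plan is to work with the (non-orthogonal) direct sum decomposition of edge space induced by the columns of $\edgematrix^{1/2}\cycleincidencematrix$ (the \emph{cycle subspace}, suitably rescaled) and the columns of $\edgematrix^{-1/2}\incidencematrix$ (the \emph{cut subspace}, suitably rescaled): using \cref{lem:pathmatrix_incidencematrix_orthogonal}(2), $\incidencematrix^\Transpose \cycleincidencematrix = \zeromatrix$, so after conjugating by $\edgematrix^{1/2}$ these two subspaces become genuinely orthogonal and together span all of $\R^{E_\Graph}$ (their dimensions are $\loopnumber$ and $\abs{V_\Graph}-1$, summing to $\abs{E_\Graph}$ by \cref{def:loopnumber}, and both matrices have full rank since $\firstsymanzik_\Graph \neq 0$ for positive Schwinger parameters).

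First I would substitute $\laplacian = \incidencematrix^\Transpose \edgematrix^{-1} \incidencematrix$ and $\cyclelaplacian_\cyclebasis = \cycleincidencematrix^\Transpose \edgematrix \cycleincidencematrix$ and conjugate the asserted identity by $\edgematrix^{-1/2}$ on both sides, reducing it to
\begin{align*}
\identitymatrix - \edgematrix^{-1/2}\incidencematrix\left(\incidencematrix^\Transpose \edgematrix^{-1}\incidencematrix\right)^{-1}\incidencematrix^\Transpose \edgematrix^{-1/2} = \edgematrix^{1/2}\cycleincidencematrix\left(\cycleincidencematrix^\Transpose \edgematrix \cycleincidencematrix\right)^{-1}\cycleincidencematrix^\Transpose \edgematrix^{1/2}.
\end{align*}
Writing $X \defas \edgematrix^{-1/2}\incidencematrix$ and $Y \defas \edgematrix^{1/2}\cycleincidencematrix$, the claim becomes $\identitymatrix - X(X^\Transpose X)^{-1}X^\Transpose = Y(Y^\Transpose Y)^{-1}Y^\Transpose$, i.e., the orthogonal projector onto $(\mathrm{col}\, X)^\perp$ equals the orthogonal projector onto $\mathrm{col}\, Y$. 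Now $X^\Transpose Y = \incidencematrix^\Transpose \cycleincidencematrix = \zeromatrix$ by \cref{lem:pathmatrix_incidencematrix_orthogonal}(2), so $\mathrm{col}\, Y \subseteq (\mathrm{col}\, X)^\perp$; since $\dim \mathrm{col}\, Y = \loopnumber = \abs{E_\Graph} - (\abs{V_\Graph}-1) = \dim (\mathrm{col}\, X)^\perp$, these subspaces coincide, and two orthogonal projectors onto the same subspace are equal. This finishes the proof; one should also remark that the right-hand side is manifestly independent of the choice of $\cycleincidencematrix$ because $\cycleincidencematrix \cyclelaplacian_\cyclebasis^{-1}\cycleincidencematrix^\Transpose$ is invariant under $\cycleincidencematrix \mapsto \cycleincidencematrix P$ for $P \in \GL_\loopnumber(\Z)$.

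The main obstacle is not conceptual but bookkeeping: one must be careful that $\edgematrix^{1/2}$ is well-defined and invertible, which is fine since we work with strictly positive Schwinger parameters $a_e > 0$ where $\edgematrix = \diag(\vec a)$ has positive diagonal, and that both $X^\Transpose X = \laplacian$ and $Y^\Transpose Y = \cyclelaplacian_\cyclebasis$ are invertible, which follows from full rank of $\incidencematrix$ and $\cycleincidencematrix$ as noted in the text before the proposition. An alternative, purely algebraic route avoiding square roots is to verify the identity by multiplying both sides on the left by $\incidencematrix^\Transpose$ (getting $\zeromatrix = \zeromatrix$ using \cref{lem:pathmatrix_incidencematrix_orthogonal}(2)) and on the left by $\pathmatrix^\Transpose$ for a path matrix $\pathmatrix$ (using $\pathmatrix^\Transpose \incidencematrix = \identitymatrix$ from \cref{lem:pathmatrix_incidencematrix_orthogonal}(1) and $\pathmatrix^\Transpose \edgematrix \cycleincidencematrix$ computations), then invoking that $\concatm{\cycleincidencematrix}{\pathmatrix}$ or $(\incidencematrix \mid \pathmatrix)$ is invertible so that agreement after these two projections forces equality; I would present the projector argument as the main proof since it is cleanest, and perhaps mention the $\incidencematrix^\Transpose$-multiplication check as a sanity remark.
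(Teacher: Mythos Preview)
Your proof is correct and follows essentially the same approach as the paper: both recognize the identity as the statement that two complementary projection operators on edge space sum to the identity. The only cosmetic difference is that you conjugate by $\edgematrix^{1/2}$ to obtain \emph{orthogonal} projectors $X(X^\Transpose X)^{-1}X^\Transpose$ and $Y(Y^\Transpose Y)^{-1}Y^\Transpose$, whereas the paper works directly with the oblique projectors $P_\cutspace = \edgematrix^{-1}\incidencematrix\laplacian^{-1}\incidencematrix^\Transpose$ and $P_\cyclespace = \cycleincidencematrix\cyclelaplacian_\cyclebasis^{-1}\cycleincidencematrix^\Transpose\edgematrix$, verifying $P_\cutspace^2=P_\cutspace$, $P_\cyclespace^2=P_\cyclespace$, $P_\cutspace P_\cyclespace = P_\cyclespace P_\cutspace = \zeromatrix$ by direct computation and then invoking the rank count; this avoids introducing $\edgematrix^{1/2}$ altogether and is closer to the purely algebraic alternative you mention at the end.
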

\begin{proof}
	Consider the following two matrix products:
	\begin{align*}
		P_\cutspace &\defas   \edgematrix^{-1} \incidencematrix \laplacian^{-1} \incidencematrix^\Transpose,\qquad
    P_\cyclespace   \defas  \cycleincidencematrix \cyclelaplacian_{\cyclebasis}^{-1} \cycleincidencematrix^\Transpose \edgematrix.
	\end{align*}
	$P_F$ has rank $(\abs{V_\Graph}-1)$ and $P_\cyclespace$ has rank $\loopnumber$.
	From \cref{def:Laplacian}, one finds
	\begin{align*}
	P_\cutspace ^2 &=  \edgematrix^{-1} \incidencematrix \laplacian^{-1}~ \incidencematrix^\Transpose \edgematrix^{-1} \incidencematrix ~\laplacian^{-1} \incidencematrix^\Transpose =  \edgematrix^{-1} \incidencematrix \laplacian^{-1} \incidencematrix^\Transpose = P_\cutspace.
	\end{align*}
	Likewise, it follows from \cref{def:cycle_laplacian} and \cref{lem:pathmatrix_incidencematrix_orthogonal} that $P_\cyclespace^2=P_\cyclespace$ and $P_\cutspace P_\cyclespace = \zeromatrix = P_\cyclespace P_\cutspace$. This certifies that $P_\cutspace$ and $P_\cyclespace$ are orthogonal projection operators onto subspaces of edge space. Moreover, by \cref{def:loopnumber}, the sum of the two projectors has rank $\abs{E_\Graph}$, and therefore $	P_\cutspace + P_\cyclespace = \identitymatrix_{\abs{E_\Graph}}$.
	Multiplying from the left by $\edgematrix^{-1}$ yields the claimed equation.
\end{proof}

\subsection{Concatenated matrices and signs} \label{sec:concatenated_matrices}

We now consider square matrices which arise from concatenating pairs of the   matrices defined in \cref{sec:incidence_path_matrices}. These combined matrices have size $\abs{E_\Graph}\times \abs{E_\Graph}$, and they have full rank.

\begin{lemma}\label{lem:determinant_concatenated_1}
	Let $\incidencematrix$ be the vertex incidence matrix and $\cycleincidencematrix$ the cycle incidence matrix. Then
	\begin{enumerate}
		\item   { \hfil 	$ \displaystyle
			\det  \concatm{\cycleincidencematrix}{\edgematrix^{-1}\incidencematrix}   ^2 =   \firstsymanzik^2_\Graph \prod_e a_e^{-2}$.  }
		\item Let $R$ be any matrix such that $R^\Transpose \incidencematrix= \identitymatrix_{\abs{V_\Graph} -1 }$. Then
		\begin{align*}
			\det   \concatm{\cycleincidencematrix}{R}   \det   \concatm{\cycleincidencematrix}{\edgematrix^{-1} \incidencematrix}   &=   \psi_\Graph \prod_e a_e^{-1} .
		\end{align*}
    In particular, by (1), $\det\concatm{\cycleincidencematrix}{R} = \pm 1$.
		\item Let $S$ be any matrix such that $S^\Transpose \edgematrix \cycleincidencematrix= \identitymatrix_{\loopnumber }$. Then
		\begin{align*}
			\det  \concatm{S}{\edgematrix^{-1} \incidencematrix}   \det   \concatm{\cycleincidencematrix}{\edgematrix^{-1} \incidencematrix}   &=    \psi_\Graph \prod_e a_e^{-2}.
		\end{align*}
    In particular, by (1), $\det\concatm{S}{\edgematrix^{-1}\incidencematrix} = \pm \prod_e a_e^{-1}$.
	\end{enumerate}
\end{lemma}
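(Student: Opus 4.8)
The plan is to reduce all three statements to a single computation of a block determinant, namely
\begin{align*}
	\det \begin{pmatrix} \cycleincidencematrix^\Transpose \cycleincidencematrix & \cycleincidencematrix^\Transpose R \\ \incidencematrix^\Transpose \edgematrix^{-1} \cycleincidencematrix & \incidencematrix^\Transpose \edgematrix^{-1} R \end{pmatrix} = \det\left( \concatm{\cycleincidencematrix}{R}^\Transpose \concatm{\cycleincidencematrix}{\edgematrix^{-1}\incidencematrix} \right) = \det\concatm{\cycleincidencematrix}{R} \cdot \det\concatm{\cycleincidencematrix}{\edgematrix^{-1}\incidencematrix},
\end{align*}
and then to identify the left-hand side using the orthogonality relations of \cref{lem:pathmatrix_incidencematrix_orthogonal} together with \cref{lem:laplacian_cyclelaplacian_inverse}. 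The key observation is that $\cycleincidencematrix^\Transpose \edgematrix^{-1} \incidencematrix = \zeromatrix$ is \emph{not} what we have, but rather $\incidencematrix^\Transpose \cycleincidencematrix = \zeromatrix$; so the off-diagonal block $\incidencematrix^\Transpose \edgematrix^{-1} \cycleincidencematrix$ does not vanish in general. The trick is instead to left-multiply the second factor by the right object so that this block becomes a clean projection.

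For part (1), I would take $R = \edgematrix^{-1}\incidencematrix$ itself. Then the block matrix above becomes
\begin{align*}
	\concatm{\cycleincidencematrix}{\edgematrix^{-1}\incidencematrix}^\Transpose \concatm{\cycleincidencematrix}{\edgematrix^{-1}\incidencematrix} = \begin{pmatrix} \cycleincidencematrix^\Transpose \cycleincidencematrix & \cycleincidencematrix^\Transpose \edgematrix^{-1}\incidencematrix \\ \incidencematrix^\Transpose \edgematrix^{-1} \cycleincidencematrix & \incidencematrix^\Transpose \edgematrix^{-2}\incidencematrix \end{pmatrix},
\end{align*}
whose determinant is $\det\concatm{\cycleincidencematrix}{\edgematrix^{-1}\incidencematrix}^2$. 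To evaluate it, conjugate the concatenated matrix $N \defas \concatm{\cycleincidencematrix}{\edgematrix^{-1}\incidencematrix}$ on the left by $\edgematrix$: then $\det(\edgematrix N) = \det\edgematrix \cdot \det N$, and $\edgematrix N = \concatm{\edgematrix\cycleincidencematrix}{\incidencematrix}$. Now I would expand $\det\concatm{\edgematrix\cycleincidencematrix}{\incidencematrix}$ by the Cauchy–Binet / generalized Laplace expansion along the first $\loopnumber$ columns: the contribution of a subset $U$ of $\loopnumber$ rows is $\pm \det(\edgematrix\cycleincidencematrix)[U]\cdot\det\incidencematrix[\overline U] = \pm \big(\prod_{e\in U}a_e\big)\det\cycleincidencematrix[U]\det\incidencematrix[\overline U]$, and by \cref{lem:matrix-tree} this is nonzero exactly when $\overline U$ is a spanning tree, giving $\pm\prod_{e\notin T}a_e$. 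The signs all align (this is the point that needs care — see below), yielding $\det\concatm{\edgematrix\cycleincidencematrix}{\incidencematrix} = \pm\firstsymanzik_\Graph$ by \cref{symanzik_polynomial_trees}, hence $\det N = \pm\firstsymanzik_\Graph \prod_e a_e^{-1}$ and squaring gives the claim.

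For parts (2) and (3) I would again form the product $\concatm{\cycleincidencematrix}{R}^\Transpose\concatm{\cycleincidencematrix}{\edgematrix^{-1}\incidencematrix}$ for (2), and $\concatm{S}{\edgematrix^{-1}\incidencematrix}^\Transpose\concatm{\cycleincidencematrix}{\edgematrix^{-1}\incidencematrix}$ for (3). In case (2), the bottom-right block is $R^\Transpose\edgematrix^{-1}\incidencematrix$ and the bottom-left is $R^\Transpose\cycleincidencematrix$. Here I would multiply the second factor on the left by a suitable matrix built from a path matrix $\pathmatrix$ and the cycle Laplacian so that the relations $\pathmatrix^\Transpose\incidencematrix = \identitymatrix$, $\cycleincidencematrix^\Transpose\incidencematrix = \zeromatrix$, and $\cycleincidencematrix^\Transpose\edgematrix\,(\edgematrix^{-1}\incidencematrix) = \cycleincidencematrix^\Transpose\incidencematrix = \zeromatrix$ collapse the product to a triangular block form whose determinant is a product of a power of $\firstsymanzik_\Graph$ and $\prod a_e^{-1}$; the hypothesis $R^\Transpose\incidencematrix = \identitymatrix$ is exactly what makes the relevant diagonal block the identity. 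Case (3) is symmetric, with the role of $\incidencematrix$ played by $\edgematrix\cycleincidencematrix$ and the hypothesis $S^\Transpose\edgematrix\cycleincidencematrix = \identitymatrix$ supplying the identity block; the extra factor $\prod_e a_e^{-1}$ relative to (2) comes from the asymmetry $\edgematrix^{-1}\incidencematrix$ versus $\cycleincidencematrix$ (one factor carries $\edgematrix^{-1}$, the other does not). In both cases the final ``$\pm 1$'' statements follow by dividing the just-proved identity by the square root computed in part (1). The main obstacle I anticipate is \emph{bookkeeping of signs} in the generalized Laplace expansion of $\det\concatm{\edgematrix\cycleincidencematrix}{\incidencematrix}$: one must check that the sign $(-1)^{\sigma(U)}$ from the column expansion combines with the signs hidden in $\det\cycleincidencematrix[U]$ and $\det\incidencematrix[\overline U]$ so that the spanning-tree sum reconstitutes $\firstsymanzik_\Graph$ with a \emph{single} global sign independent of $T$ — this is precisely the content flagged in \cref{thm:main_thm} as depending only on labellings, directions, and the cycle basis, and I would either prove it by a direct permutation-sign argument or defer it to the discussion of $\det\concatm{\cycleincidencematrix}{\pathmatrix}$ in \cref{sec:concatenated_matrices}.
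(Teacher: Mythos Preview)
You have the right overall shape (form the product $M_1^\Transpose \cdot (?) \cdot M_2$ of two concatenated matrices and read off a block structure), but you never find the crucial ``suitable matrix'' to insert in the middle, and this leaves all three parts incomplete. The paper's trick is simply to insert $\edgematrix$:
\[
\concatm{\cycleincidencematrix}{\edgematrix^{-1}\incidencematrix}^\Transpose \,\edgematrix\, \concatm{\cycleincidencematrix}{\edgematrix^{-1}\incidencematrix}
=
\begin{pmatrix}
\cycleincidencematrix^\Transpose\edgematrix\cycleincidencematrix & \cycleincidencematrix^\Transpose\incidencematrix \\
\incidencematrix^\Transpose\cycleincidencematrix & \incidencematrix^\Transpose\edgematrix^{-1}\incidencematrix
\end{pmatrix}
=
\begin{pmatrix}
\cyclelaplacian_{\cycleincidencematrix} & \zeromatrix \\
\zeromatrix & \laplacian
\end{pmatrix}.
\]
The point is that the middle $\edgematrix$ cancels one factor of $\edgematrix^{-1}$ in each off-diagonal block, turning them into $\cycleincidencematrix^\Transpose\incidencematrix$ and its transpose, which vanish by \cref{lem:pathmatrix_incidencematrix_orthogonal}. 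Taking determinants gives $(\det N)^2\prod_e a_e = \det\cyclelaplacian_{\cycleincidencematrix}\cdot\det\laplacian = \firstsymanzik_\Graph^2\prod_e a_e^{-1}$, proving (1) with no sign bookkeeping whatsoever. Parts (2) and (3) follow identically: $\concatm{\cycleincidencematrix}{R}^\Transpose\edgematrix\concatm{\cycleincidencematrix}{\edgematrix^{-1}\incidencematrix}$ and $\concatm{S}{\edgematrix^{-1}\incidencematrix}^\Transpose\edgematrix\concatm{\cycleincidencematrix}{\edgematrix^{-1}\incidencematrix}$ are block \emph{triangular}, with diagonal blocks $(\cyclelaplacian_{\cycleincidencematrix},\identitymatrix)$ and $(\identitymatrix,\laplacian)$ respectively, because the hypotheses $R^\Transpose\incidencematrix=\identitymatrix$ and $S^\Transpose\edgematrix\cycleincidencematrix=\identitymatrix$ are exactly what the middle $\edgematrix$ produces in the lower-right and upper-left corners.

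Your alternative route for (1) via Laplace expansion of $\det\concatm{\edgematrix\cycleincidencematrix}{\incidencematrix}$ runs into the sign problem you yourself flag, and this is a genuine obstruction: proving that all spanning-tree contributions carry the same sign is precisely the content of \cref{lem:signs}, which in the paper is proved \emph{after} the present lemma (indeed using its \cref{lem:determinant_spanning_trees}). So as written your argument for (1) is circular, and for (2)--(3) the vague ``suitable matrix built from a path matrix and the cycle Laplacian'' never materializes. The fix is much simpler than what you propose: the missing matrix is just $\edgematrix$.
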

\begin{proof}
	By the orthogonality of $\cycleincidencematrix$ and $\incidencematrix$ (\cref{lem:pathmatrix_incidencematrix_orthogonal}), the following matrix product is block diagonal:
	\begin{align*}
  \concatm{\cycleincidencematrix}{\edgematrix^{-1}\incidencematrix}^\Transpose \edgematrix \concatm{\cycleincidencematrix}{\edgematrix^{-1}\incidencematrix}
  &= \begin{pmatrix}
      \cycleincidencematrix^\Transpose \edgematrix  \cycleincidencematrix & \zeromatrix \\
      \zeromatrix & \incidencematrix^\Transpose 	\edgematrix^{-1}\incidencematrix
		\end{pmatrix}=\begin{pmatrix}
      \cyclelaplacian_{\cyclebasis} & \zeromatrix \\
      \zeromatrix & \laplacian
		\end{pmatrix}.
	\end{align*}
	Likewise
	\begin{align*}
		\concatm{\cycleincidencematrix}{R}^\Transpose \edgematrix \concatm{\cycleincidencematrix}{\edgematrix^{-1} \incidencematrix}&= \begin{pmatrix}
			\cycleincidencematrix^\Transpose \edgematrix \cycleincidencematrix & \cycleincidencematrix^\Transpose \incidencematrix\\
			R^\Transpose \edgematrix \cycleincidencematrix & R^\Transpose \edgematrix \edgematrix^{-1} \incidencematrix
		\end{pmatrix}=\begin{pmatrix}
      \cyclelaplacian_{\cyclebasis} & \zeromatrix \\
			R^\Transpose \edgematrix \cycleincidencematrix & \identitymatrix
		\end{pmatrix}\\
		\concatm{S}{\edgematrix^{-1} \incidencematrix}^\Transpose \edgematrix \concatm{\cycleincidencematrix}{\edgematrix^{-1} \incidencematrix}&= \begin{pmatrix}
			S^\Transpose \edgematrix \cycleincidencematrix & S^\Transpose \incidencematrix\\
			\incidencematrix^\Transpose \edgematrix^{-1} \edgematrix \cycleincidencematrix &  \incidencematrix^\Transpose \edgematrix^{-1} \edgematrix \edgematrix^{-1} \incidencematrix
		\end{pmatrix}=\begin{pmatrix}
			\identitymatrix & S^\Transpose \incidencematrix \\
			\zeromatrix & \laplacian
		\end{pmatrix}.
	\end{align*}
  On the left hand side, taking the determinant gives the product of the determinants of the three block matrices. Since the determinant is invariant under transposition, the claim follows. Identify the Symanzik polynomials according to $\det (\laplacian)  = \firstsymanzik_\Graph \prod a_e^{-1}$ (\cref{def:Symanzik_polynomial}) and $\det (\cyclelaplacian_{\cyclebasis}) = \firstsymanzik_\Graph$ (\cref{lem:dual-det}).
\end{proof}

The statements of \cref{lem:determinant_concatenated_1} have useful corollaries when they are evaluated at $\edgematrix=\identitymatrix$, that is, when $a_e=1$ for all $e$. In that case, by \cref{symanzik_polynomial_trees} $\firstsymanzik_\Graph$ is the number of spanning trees of $\Graph$, which, in particular, is a positive integer. We will need the following:
\begin{corollary}\label{lem:determinant_spanning_trees}
  For any choice of $\abs{E_G} \times (\abs{V_G}-1)$ matrix $R$ such that $R^{\Transpose} \incidencematrix = \identitymatrix_{\abs{V_G}-1}$,
	\begin{align*}
    \det \concatm{\cycleincidencematrix}{\incidencematrix}
    &= \det \concatm{\cycleincidencematrix}{R}  \times \left(  \textnormal{\# of spanning trees in $\Graph$}\right),
	\end{align*}
  where $\det\concatm{\cycleincidencematrix}{R} = \pm 1 $.
\end{corollary}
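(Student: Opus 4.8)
The plan is to deduce \cref{lem:determinant_spanning_trees} directly from \cref{lem:determinant_concatenated_1}(2) by specialising the Schwinger parameters. That statement,
\[
  \det\concatm{\cycleincidencematrix}{R}\cdot\det\concatm{\cycleincidencematrix}{\edgematrix^{-1}\incidencematrix} = \firstsymanzik_\Graph\prod_{e}a_e^{-1},
\]
is an identity of Laurent polynomials in the $a_e$, and none of $\incidencematrix$, $\cycleincidencematrix$, $R$ depends on the $a_e$; so I may evaluate it at $a_e=1$ for every edge, i.e.\ at $\edgematrix = \identitymatrix$. At that point $\edgematrix^{-1}\incidencematrix = \incidencematrix$ and $\prod_e a_e^{-1}=1$, so the identity reduces to $\det\concatm{\cycleincidencematrix}{R}\cdot\det\concatm{\cycleincidencematrix}{\incidencematrix} = \firstsymanzik_\Graph\big|_{a_e=1}$.

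The second step identifies the right-hand side with the number of spanning trees: by the weighted matrix-tree theorem in the form of the second identity in \cref{symanzik_polynomial_trees}, $\firstsymanzik_\Graph = \sum_{T}\prod_{e\notin T}a_e$, and setting every $a_e=1$ makes each monomial equal to $1$, so $\firstsymanzik_\Graph\big|_{a_e=1}$ is exactly the number of spanning trees of $\Graph$. This yields
\[
  \det\concatm{\cycleincidencematrix}{R}\cdot\det\concatm{\cycleincidencematrix}{\incidencematrix} = \bigl(\text{\# of spanning trees in }\Graph\bigr).
\]
Finally, I invoke the ``in particular'' clause of \cref{lem:determinant_concatenated_1}(2), namely $\det\concatm{\cycleincidencematrix}{R}=\pm1$; since this number is its own inverse, multiplying the last display through by $\det\concatm{\cycleincidencematrix}{R}$ gives exactly $\det\concatm{\cycleincidencematrix}{\incidencematrix} = \det\concatm{\cycleincidencematrix}{R}\cdot\bigl(\text{\# of spanning trees in }\Graph\bigr)$, which is the assertion.

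There is essentially no obstacle here; the only points that deserve a word of care are that one is entitled to specialise the polynomial identity of \cref{lem:determinant_concatenated_1}(2) at $a_e=1$ (immediate, since the matrices involved do not depend on the $a_e$), and that the final cancellation of $\det\concatm{\cycleincidencematrix}{R}$ is legitimate precisely because it is a unit $\pm1$. Should one prefer an argument not relying on \cref{lem:determinant_concatenated_1}, one could instead Laplace-expand $\det\concatm{\cycleincidencematrix}{\incidencematrix}$ along the $\abs{V_\Graph}-1$ columns coming from $\incidencematrix$ and match each nonvanishing term with a spanning tree via \cref{lem:matrix-tree}, but piggybacking on \cref{lem:determinant_concatenated_1} is shorter.
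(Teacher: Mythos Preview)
Your proof is correct and follows exactly the approach the paper intends: the corollary is stated immediately after the remark that the identities of \cref{lem:determinant_concatenated_1} can be specialised at $\edgematrix=\identitymatrix$, where $\firstsymanzik_\Graph$ becomes the number of spanning trees via \cref{symanzik_polynomial_trees}. Your write-up spells out precisely this specialisation of part~(2) and the final multiplication by the unit $\det\concatm{\cycleincidencematrix}{R}=\pm1$.
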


Note that such matrix $R$ always exists, since by \cref{lem:pathmatrix_incidencematrix_orthogonal} the path matrix $\pathmatrix$ (\cref{def:pathmatrix}) satisfies $\pathmatrix^\Transpose \incidencematrix = \identitymatrix_{\abs{V_\Graph}-1}$.
\Cref{lem:determinant_spanning_trees} is nice on its own, but most importantly, it allows us to fix an alternating sign in the main proof and relates the minors of $\incidencematrix$ and $\cycleincidencematrix$.

\begin{lemma}\label{lem:signs}
  Let $\Graph$ be a connected graph where all labellings are arbitrary but fixed. Let $\cycleincidencematrix$ be any choice of a cycle incidence matrix and let $\pathmatrix$ be any path matrix. Then, for all sets $T\subseteq E_\Graph$ of $\abs{V_\Graph}-1$ edges with $\overline{T} \defas E_{\Graph} \setminus T$
  \begin{align}\label{eq:inc-cycinc-relation}
  \det \left( \incidencematrix[T] \right) &= (-1)^{\frac{\loopnumber(\loopnumber+1)}{2}} \det\concatm{\cycleincidencematrix}{\pathmatrix} \cdot  (-1)^{\sum_{e\notin T} e} \det \left( \cycleincidencematrix[\overline T] \right) ,
	\end{align}
  where the sign $(-1)^{\frac{\loopnumber(\loopnumber+1)}{2}} \det\concatm{\cycleincidencematrix}{\pathmatrix}$ is independent of the choice of $T$. In particular, both sides of the equation are zero unless $T$ is a spanning tree in $\Graph$.
\end{lemma}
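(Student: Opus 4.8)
The plan is to realise the two determinants $\det\incidencematrix[T]$ and $\det\cycleincidencematrix[\overline T]$ as a pair of complementary minors — one of the square matrix $M\defas\concatm{\cycleincidencematrix}{\pathmatrix}$ and one of its inverse — and then to read off the sign relating them from Jacobi's classical identity on complementary minors of an inverse. First I would identify the last $\abs{V_\Graph}-1$ rows of $M^{-1}$. By \cref{lem:pathmatrix_incidencematrix_orthogonal} we have $\incidencematrix^\Transpose\cycleincidencematrix=\zeromatrix$ and $\incidencematrix^\Transpose\pathmatrix=\identitymatrix$, hence $\incidencematrix^\Transpose M=\concatm{\zeromatrix}{\identitymatrix}$; moreover $M$ is invertible with $\det M=\pm1$ by the second part of \cref{lem:determinant_concatenated_1} (applied with $R=\pathmatrix$, valid since $\pathmatrix^\Transpose\incidencematrix=\identitymatrix$). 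Multiplying $\incidencematrix^\Transpose M=\concatm{\zeromatrix}{\identitymatrix}$ on the right by $M^{-1}$ shows that the last $\abs{V_\Graph}-1$ rows of $M^{-1}$ are precisely $\incidencematrix^\Transpose$ (the first $\loopnumber$ rows form some matrix $X^\Transpose$ with $X^\Transpose\cycleincidencematrix=\identitymatrix$ and $X^\Transpose\pathmatrix=\zeromatrix$, which will not be needed).

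Next, for a fixed $T\subseteq E_\Graph$ with $\abs T=\abs{V_\Graph}-1$, writing $n\defas\abs{E_\Graph}$, I would observe that the submatrix of $M^{-1}$ on rows $\{\loopnumber+1,\dots,n\}$ and columns $T$ equals $\incidencematrix^\Transpose[-,T]=\incidencematrix[T]^\Transpose$, whose determinant is $\det\incidencematrix[T]$, while the complementary submatrix of $M$ on rows $\overline T$ and columns $\{1,\dots,\loopnumber\}$ equals $\cycleincidencematrix[\overline T]$. Jacobi's identity then gives
\[
\det\incidencematrix[T]\;=\;(-1)^{\left(\sum_{k=\loopnumber+1}^{n}k\right)+\sum_{e\in T}e}\,\frac{\det\cycleincidencematrix[\overline T]}{\det M}.
\]
It then remains to clean up the sign: using $\sum_{e\in T}e=\tfrac{n(n+1)}2-\sum_{e\notin T}e$ and $\sum_{k=\loopnumber+1}^n k=\tfrac{n(n+1)}2-\tfrac{\loopnumber(\loopnumber+1)}2$, the exponent is congruent mod $2$ to $\tfrac{\loopnumber(\loopnumber+1)}2+\sum_{e\notin T}e$ because $n(n+1)$ is even; and $1/\det M=\det M=\det\concatm{\cycleincidencematrix}{\pathmatrix}$ since $\det M=\pm1$. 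Substituting yields exactly \eqref{eq:inc-cycinc-relation}. The prefactor $(-1)^{\loopnumber(\loopnumber+1)/2}\det\concatm{\cycleincidencematrix}{\pathmatrix}$ visibly does not involve $T$, and the vanishing of both sides unless $T$ is a spanning tree is already \cref{lem:matrix-tree}.

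I expect the only genuine obstacle to be sign bookkeeping: matching the row- and column-index sets and their complements in Jacobi's identity to the blocks of $M$ and $M^{-1}$, tracking the harmless transposition $\incidencematrix[T]\leftrightarrow\incidencematrix[T]^\Transpose$, and performing the parity reduction of the exponent. If one prefers to avoid invoking Jacobi's identity, an equivalent route stays entirely within the lemmas already proved: Laplace-expand $\det\concatm{\cycleincidencematrix}{\edgematrix^{-1}\incidencematrix}$ along its last $\abs{V_\Graph}-1$ columns and $\det\concatm{\cycleincidencematrix}{\pathmatrix}$ along its first $\loopnumber$ columns, substitute the results into the identity $\det\concatm{\cycleincidencematrix}{\pathmatrix}\cdot\det\concatm{\cycleincidencematrix}{\edgematrix^{-1}\incidencematrix}=\firstsymanzik_\Graph\prod_e a_e^{-1}$ from \cref{lem:determinant_concatenated_1}, and compare the coefficient of the monomial $\prod_{e\in T}a_e^{-1}$ on both sides, using $\firstsymanzik_\Graph\prod_e a_e^{-1}=\sum_{T'}\prod_{e\in T'}a_e^{-1}$ from \eqref{symanzik_polynomial_trees}; dividing through by $\det\cycleincidencematrix[\overline T]=\pm1$ for a spanning tree then produces the same formula after the identical parity simplification.
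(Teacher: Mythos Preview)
Your proof is correct, and it takes a genuinely different route from the paper's.

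The paper exploits the fact that $\det\concatm{\cycleincidencematrix}{\pathmatrix}$ is independent of the choice of $\pathmatrix$: for each spanning tree $T$ it picks the \emph{particular} path matrix whose paths lie entirely in $T$, so that $\pathmatrix[\overline{T}]=\zeromatrix$. Then $\concatm{\cycleincidencematrix}{\pathmatrix}$ becomes block lower-triangular after a row permutation moving the rows $\overline{T}$ to the top, and the determinant factors as $\det(\cycleincidencematrix[\overline{T}])\cdot\det(\pathmatrix[T])$; the identity $\pathmatrix^\Transpose\incidencematrix=\identitymatrix$ forces $\det(\pathmatrix[T])=\det(\incidencematrix[T])$, and the row-permutation sign $(-1)^{\sum_i f_i-i}$ reduces to the stated exponent.

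You instead keep $\pathmatrix$ arbitrary and observe, via $\incidencematrix^\Transpose M=\concatm{\zeromatrix}{\identitymatrix}$, that $\incidencematrix^\Transpose$ forms the last $\abs{V_\Graph}-1$ rows of $M^{-1}$; then Jacobi's complementary minor identity relates $\det\incidencematrix[T]$ (a minor of $M^{-1}$) directly to $\det\cycleincidencematrix[\overline{T}]$ (the complementary minor of $M$). The sign bookkeeping is the same parity reduction in both arguments. Your approach is more conceptual---it explains the relation as an instance of a classical identity and never needs to change $\pathmatrix$ with $T$---at the cost of invoking Jacobi's theorem; the paper's argument is entirely elementary and self-contained. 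Your sketched alternative via Laplace expansion of $\det\concatm{\cycleincidencematrix}{\edgematrix^{-1}\incidencematrix}$ and coefficient comparison in \cref{lem:determinant_concatenated_1}(ii) is a third valid path, closer in spirit to the paper's but still distinct.
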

\begin{proof}
  First, for any subset $T \subseteq E_{\Graph}$ of $\abs{V_{\Graph}}-1$ edges that contains a cycle, we have that by   \cref{lem:matrix-tree}, $\det\left(\incidencematrix[T]\right) = \det\left(\cycleincidencematrix[\overline{T}]\right) = 0$.
  As $\det\concatm{\cycleincidencematrix}{\pathmatrix}$ can never be $0$ by \cref{lem:determinant_spanning_trees}, \longequation \cref{eq:inc-cycinc-relation} \shortequation holds trivially for such $T$ and thus both sides are non-zero if and only if $T$ is a spanning tree.

  Fix $T \subseteq E_{\Graph}$ to be some spanning tree of $\Graph$.
  As $\det\concatm{\cycleincidencematrix}{\pathmatrix}$ is invariant under the choice of $\pathmatrix$, let $\pathmatrix$ be the path matrix such that column $P_i$ corresponds to the unique directed path from $v_{\star}$ to $v_i$ contained in $T$.
  In particular, $\pathmatrix[\overline{T}] = 0$ and thus $\identitymatrix_{\abs{V_\Graph}-1} =\pathmatrix^{\Transpose}\incidencematrix = \left(\pathmatrix[T]\right)^{\Transpose}\incidencematrix[T]$, immediately obtaining that $\det\left(\pathmatrix[T]\right) = \det\left(\incidencematrix[T]\right)$.

  Thus we can rearrange the rows of $\concatm{\cycleincidencematrix}{\pathmatrix}$, which correspond to the edges $E_{\Graph}$, to compute
  \begin{align*}
    \det\concatm{\cycleincidencematrix}{\pathmatrix}
    = (-1)^{\sum_{i=1}^{\loopnumber} f_i - i} \det\begin{pmatrix}
      \cycleincidencematrix[\overline{T}] & \zeromatrix \\
      \cycleincidencematrix[T] & \pathmatrix[T]
    \end{pmatrix}
    &= (-1)^{\sum_{i = 1}^{\loopnumber} f_i - i} \det\left(\cycleincidencematrix[\overline{T}]\right) \cdot \det\left(\pathmatrix[T]\right) \\
    &= (-1)^{\sum_{e \not\in T} e} (-1)^{\frac{\loopnumber(\loopnumber+1)}{2}}  \det\left(\cycleincidencematrix[\overline{T}]\right) \cdot \det\left(\incidencematrix[T]\right),
  \end{align*}
  where $\overline{T} = \{f_1 < \cdots < f_{\loopnumber} \}$ are taken in increasing order and the sign in the first equality comes from rearranging the rows such that the edges in $\overline{T}$ come before the edges in $T$.
\end{proof}

Note that the sign $\det\concatm{\cycleincidencematrix}{\pathmatrix}$ is not unexpected and can be interpreted as exactly the sign needed to relate the two notions of orientation data on the graph $\Graph$ required to define the vertex and cycle incidence matrices.
Namely, given a fixed ordering of the edges $E_{\Graph}$, the signs tells us how to translate from an ordering of the vertices $V_{\Graph}$ and fixing directions of edges, which is needed to define $\incidencematrix$, to an ordering of a given cycle basis, which is needed to define $\cyclebasis$.
See~\cite[\S4.4]{brown_unstable_2024} for further discussion; there, the matrix denoted $A$ amounts to our matrix $\concatm{\cycleincidencematrix}{\pathmatrix}$, but with a different choice of special vertex ($v_{\star} = v_1$).

\medskip

Finally, we give the proof for the change of sign claimed in \cref{lem:alpha_2-valent}.
\begin{lemma}\label{lem:subdivision_sign}
  Let $G'$ be the graph obtained from $G$ by subdividing the edge $e$ into edges $e'$ and $e''$.
  We take the new edges $e', e''$ in the same direction as $e$, let $e'$ inherit the label of $e$ and let $e''$ have the label $e+1$, shifting all subsequent edge labels by one unit.
  Let the newly added vertex be labelled $v_1$, where all existing vertices are shifted by one unit.
  Then, the sign of the topological form changes by $(-1)^{e+1}$.
\end{lemma}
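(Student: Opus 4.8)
The plan is to read off the sign from \cref{thm:main_thm} and \cref{lem:2-valent}, reducing the statement to a determinant identity for the concatenated matrices of $\Graph$ and $\Graph'$. Subdividing an edge changes neither the loop number $\loopnumber$ nor the loop numbers of the 1PI components of $\Graph$, so by \cref{lem:odd-loop} and \cref{thm:main_thm} we have $\alpha_\Graph=\alpha_{\Graph'}=0$ whenever $\loopnumber$ is odd and the claim is then trivial; assume $\loopnumber$ is even. Let $\cycleincidencematrix'$ be the cycle incidence matrix of $\Graph'$ obtained from $\cycleincidencematrix$ by replacing every occurrence of $e$ by the path $e'e''$ --- this is the cycle basis used in \cref{lem:2-valent} --- and let $\pathmatrix'$ be any path matrix of $\Graph'$. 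Combining $\phi_{\Graph'}=s_e^*(\phi_\Graph)$ from \cref{lem:2-valent} with \cref{thm:main_thm} applied to both $\Graph$ and $\Graph'$ (using $\loopnumber(\Graph')=\loopnumber$, and that $s_e^*$ commutes with scaling by a constant) gives
\begin{align*}
  \alpha_{\Graph'}=\frac{\det\concatm{\cycleincidencematrix'}{\pathmatrix'}}{\det\concatm{\cycleincidencematrix}{\pathmatrix}}\;s_e^*(\alpha_\Graph).
\end{align*}
Hence it suffices to prove $\det\concatm{\cycleincidencematrix'}{\pathmatrix'}=(-1)^{e+\loopnumber+1}\det\concatm{\cycleincidencematrix}{\pathmatrix}$, which for even $\loopnumber$ is exactly the asserted factor $(-1)^{e+1}$.

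To prove this I would introduce the \enquote{subdivision map} $L\colon\R^{E_\Graph}\to\R^{E_{\Graph'}}$ that copies the $e$-coordinate into the $e'$- and $e''$-coordinates and fixes every other coordinate, so that $\cycleincidencematrix'=L\cycleincidencematrix$. By \cref{lem:determinant_spanning_trees}, $\det\concatm{\cycleincidencematrix'}{\pathmatrix'}$ is independent of the choice of $\pathmatrix'$, and more generally is unchanged if $\pathmatrix'$ is replaced by any matrix $R'$ with $(R')^{\Transpose}\incidencematrix'=\identitymatrix$. I would use the convenient such $R'$ whose column for each old vertex $v_j$ is the $L$-image of the $v_j$-column of a fixed path matrix $\pathmatrix$ of $\Graph$, and whose column for the new vertex $v_1$ is $L\pathmatrix^{(u)}+\mathbf{e}_{e'}$, where $u\defas\mathrm{tail}(e)$, the vector $\mathbf{e}_{e'}$ is the $e'$-th coordinate vector of $\R^{E_{\Graph'}}$, and $\pathmatrix^{(u)}\defas 0$ in case $u=v_\star$. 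Each column of $R'$ is the edge-vector of a walk in $\Graph'$ from $v_\star$ to the indicated vertex, so the computation in the proof of \cref{lem:pathmatrix_incidencematrix_orthogonal} gives $(R')^{\Transpose}\incidencematrix'=\identitymatrix$.

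The remaining computation is one column operation followed by one cofactor expansion. In $\concatm{\cycleincidencematrix'}{R'}$, subtract the $v_j=u$ column from the $v_1$ column: this replaces the $v_1$ column by $\mathbf{e}_{e'}$ and does not change the determinant (and if $u=v_\star$ no operation is needed). Expanding the determinant along this column --- it is the $(\loopnumber+1)$-st column, and its only nonzero entry lies in the $e'$-row, which is the $e$-th row under the labelling of \cref{lem:subdivision_sign} --- produces the sign $(-1)^{e+(\loopnumber+1)}$ times the minor in which the $e'$-row and the $v_1$-column have been deleted. Since $L$ merely repeats the $e$-row of $\cycleincidencematrix$ (respectively of $\pathmatrix$) as rows $e'$ and $e''$, deleting the $e'$-row from $L\cycleincidencematrix$ (respectively $L\pathmatrix$) returns $\cycleincidencematrix$ (respectively $\pathmatrix$), so this minor equals $\det\concatm{\cycleincidencematrix}{\pathmatrix}$ and the identity follows.

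The step I expect to be most delicate is the sign bookkeeping: locating the $e'$-row and the $v_1$-column precisely inside the relabelled matrix $\concatm{\cycleincidencematrix'}{R'}$ --- after the edge relabelling $e'\mapsto e$, $e''\mapsto e+1$ with later edges shifted up, and the vertex relabelling --- so that the cofactor sign comes out as $(-1)^{e+\loopnumber+1}$ and not an off-by-one variant, and checking that the special case $u=v_\star$ (no column operation needed) causes no trouble.
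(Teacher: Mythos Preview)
Your proposal is correct and follows essentially the same route as the paper's proof: both reduce the claim via \cref{thm:main_thm} and \cref{lem:2-valent} to the determinant identity $\det\concatm{\cycleincidencematrix'}{\pathmatrix'}=(-1)^{e+\loopnumber+1}\det\concatm{\cycleincidencematrix}{\pathmatrix}$, both exploit the freedom in choosing the path matrix for $\Graph'$ (your $R'$ is exactly the paper's $\pathmatrix'$), and both extract the minor $\concatm{\cycleincidencematrix}{\pathmatrix}$ by a cofactor expansion along the $v_1$-column. Your column operation (subtracting the $u$-column to leave $\mathbf{e}_{e'}$) is a clean rephrasing of the paper's observation that rows $e'$ and $e''$ of $\concatm{\cycleincidencematrix'}{\pathmatrix'}$ agree outside the $v_1$-column, so all other cofactors vanish; the sign bookkeeping you flag as delicate comes out the same way in both arguments.

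Two small remarks. First, your columns of $R'$ are in fact genuine paths, not merely walks: if the chosen path to $u=\mathrm{tail}(e)$ in $\Graph$ happens to use $e$, it must do so backwards as its last step, and then $L\pathmatrix^{(u)}+\mathbf{e}_{e'}$ is the path $v_\star\to w\to v_1$ via $-e''$. So your appeal to \cref{lem:pathmatrix_incidencematrix_orthogonal} is unproblematic. Second, your separate treatment of odd $\loopnumber$ is fine but not strictly needed: the determinant identity holds for all $\loopnumber$, and for odd $\loopnumber$ both sides of the form identity vanish anyway.
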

\begin{proof}
  All we need to show is that we have the relation
		$\det \concatm{\cycleincidencematrix'}{\pathmatrix'}= (-1)^{e+1} \det \concatm{\cycleincidencematrix}{\pathmatrix}$ between the matrices for $G'$ and $G$.
  Then by \cref{thm:main_thm} and \cref{lem:2-valent}, we immediately obtain
  \begin{align*}
    \alpha_{\Graph'}
    = \frac{\det{\concatm{\cyclebasis'}{\pathmatrix'}}}{2^{\loopnumber}} \phi_{G'}
    = (-1)^{e+1}\frac{\det{\concatm{\cyclebasis}{\pathmatrix}}}{2^{\loopnumber}} s_e^*(\phi_{G})
    = (-1)^{e+1} s_e^*(\alpha_{\Graph}).
  \end{align*}

  Suppose the edge $e = v \to w$ is directed from vertex $v$ to vertex $w$.
  The subdivided edges in $G'$ are thus directed $e' = v \to v_1$ and $e'' = v_1 \to w$.
  Recall that in the Pfaffian case (\cref{lem:2-valent}), the cycle basis defining $\cyclebasis'$ is induced from $\cyclebasis$ by replacing the edge $e$ with the path $e' e''$.
  As we can choose any path matrix $\pathmatrix'$ for $\Graph'$, similarly, we can take $\pathmatrix'$ to be induced from $\pathmatrix$ by replacing the edge $e$ with the path $e' e''$ in any path that contains $e$.
  We further need to specify a path from $v_{\star}$ to the newly created vertex $v_1$.
  Here we can take the path $v_{\star} \to v$ used in $\pathmatrix$ and add the edge $+e'$.

  Translating into matrices, $\concatm{\cyclebasis'}{\pathmatrix'}$ is obtained from $\concatm{\cyclebasis}{\pathmatrix}$ by adding a new row $e''$ in position $e+1$ and a new column $v_1$ in position $\loopnumber+1$.
  Row $e''$ coincides with row $e$ (which corresponds to edge $e'$) everywhere except in column $v_1$ of $\pathmatrix'$, where row $e$ has a $+1$ while row $e''$ has a $0$.
  Similarly, the columns $v_1$ and $v$ coincide everywhere except in row $e$.
  Thus, expanding the determinant of $\concatm{\cyclebasis'}{\pathmatrix'}$ along column $v_1$ leaves only one non-zero contribution, which is exactly $(-1)^{e+\loopnumber+1} \det\concatm{\cyclebasis}{\pathmatrix}$, as all other terms involve matrices with two coinciding rows.
  Therefore, as $\loopnumber$ is even, we find that
		$\det \concatm{\cycleincidencematrix'}{\pathmatrix'} = (-1)^{e+1} \det \concatm{\cycleincidencematrix}{\pathmatrix}$.
\end{proof}

\subsection{Dodgson polynomials}

Consider sets $A $ and $B$ of integers between 1 and $(\abs{E_\Graph}+\abs{V_\Graph}-1)$, not necessarily disjoint. These sets correspond to sets of rows and columns of $\explaplacian$ (\cref{def:Laplacian}), and using the notation of \cref{def:mat-minors}, we are interested in the following minors of $\explaplacian$ as defined in \cite{brown_periods_2010}.

\begin{definition}\label{def:Dodgson_polynomial}
	Let $\abs{A}=\abs{B}$. The \emph{Dodgson polynomial} of a graph $\Graph$ is the determinant of the submatrix of $\explaplacian$ (\cref{def:Laplacian}) where rows $A$ and columns $B$ have been removed,
	\begin{align*}
		\psi^{A,B}_\Graph &\defas \det \explaplacian(A,B).
	\end{align*}
\end{definition}

A closely related set of polynomials can be defined in terms of spanning forests \cite{brown_spanning_2011,golz_dodgson_2019}. Dodgson/spanning forest polynomials have been used e.g. in the study of period invariants \cite{brown_k3_2012,schnetz_geometries_2021} and for the manipulation of of parametric Feynman integrands \cite{bogner_feynman_2010,bellon_numerators_2024,golz_new_2017,schnetz_geometries_2021}.
These two types of graph polynomials are related via the following expansion for Dodgson polynomials, which has appeared up to sign in \cite{brown_periods_2010,brown_spanning_2011}, and can further be expressed as a signed sum of spanning forest polynomials.

\begin{lemma}\label{lem:dodgson_expansion}
  Let $A, B \subseteq E_{\Graph}$ be sets of edges with $\abs{A} = \abs{B}$.
  Let $\mathcal{U}$ be the set of subsets $U$ of edges in $E_\Graph \setminus (A \cup B)$ such that $\abs{U} = \loopnumber - \abs{A}$.
  Then the Dodgson polynomial is an alternating sum over edges not in spanning trees,
  \begin{align}\label{eq:dodg_exp}
    \dodgson^{A,B}_{\Graph} = \sum_{U \in \mathcal{U}} (-1)^{\sum_{e \in U} \abs{A_{<e}} - \abs{B_{<e}}} \det\big( \incidencematrix(U \cup A,\emptyset) \big) \det\big(\incidencematrix(U \cup B, \emptyset)\big) \prod_{e \in U} a_e,
  \end{align}
  where $A_{<e} = \{a \in A | a < e\}$ and similarly for $B_{<e}$.
  In particular, the only terms that contribute are those such that both $U \cup A$ and $U \cup B$ are spanning tree complements in $\Graph$.
  In the special case $A = B = \emptyset$, we recover \cref{symanzik_polynomial_trees}.
\end{lemma}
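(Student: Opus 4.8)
The plan is to expand $\det \explaplacian(A,B)$ directly using the Cauchy--Binet formula, exploiting the block structure of the expanded Laplacian $\explaplacian = \left(\begin{smallmatrix} \edgematrix & \incidencematrix \\ -\incidencematrix^\Transpose & \zeromatrix \end{smallmatrix}\right)$. The key observation is that removing the edge-rows $A$ and edge-columns $B$ only affects the $\edgematrix$- and $\incidencematrix$-blocks, so $\explaplacian(A,B)$ again has a block form $\left(\begin{smallmatrix} \edgematrix(A,B) & \incidencematrix(A,-) \\ -\incidencematrix(B,-)^\Transpose & \zeromatrix \end{smallmatrix}\right)$, where $\edgematrix(A,B)$ is a non-square diagonal-type matrix when $A\neq B$. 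First I would write $\det\explaplacian(A,B)$ as a sum over ways of choosing which columns of the big matrix get their entries from the ``$\edgematrix$'' part versus the ``$\incidencematrix$'' part --- equivalently, expand along the last $\abs{V_\Graph}-1$ rows (the $-\incidencematrix^\Transpose$ block) using the generalized Laplace expansion. This selects a set of columns to pair with the incidence block; the remaining columns must be matched against the diagonal edge-block, which forces them to be a subset $U$ of $E_\Graph\setminus(A\cup B)$ of the right size, contributing $\prod_{e\in U} a_e$, and leaves two incidence-matrix minors $\det\incidencematrix(U\cup A,\emptyset)$ and $\det\incidencematrix(U\cup B,\emptyset)$.

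The main steps, in order, are: (i) set up the block decomposition of $\explaplacian(A,B)$ and identify $\abs{U} = \loopnumber - \abs{A}$ from dimension counting, using $\abs{E_\Graph} - \abs{A} = (\abs{V_\Graph}-1) + \loopnumber - \abs{A}$ so that after the incidence columns are used, exactly $\loopnumber-\abs{A}$ columns remain to be fed into the diagonal block; (ii) perform the Laplace/Cauchy--Binet expansion and collect the term for each admissible $U$; (iii) carefully track the sign. The sign bookkeeping is where the real work lies: one must compare the positions of the rows/columns of $U$, $A$, $B$ inside $\explaplacian(A,B)$ --- which depends on how many elements of $A$ (resp.\ $B$) precede each $e\in U$ --- with their positions inside $\explaplacian$ itself (equivalently inside $\incidencematrix(U\cup A,\emptyset)$). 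This is exactly what produces the exponent $\sum_{e\in U}\abs{A_{<e}}-\abs{B_{<e}}$. I would do this by starting from a standard unsigned statement (e.g.\ the matrix-tree-type expansion of $\det\explaplacian(A,B)$ in terms of products of incidence minors, cf.\ \cite{brown_periods_2010,brown_spanning_2011}) and then reconciling the two indexing conventions.

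The step I expect to be the main obstacle is precisely (iii), pinning down the sign $(-1)^{\sum_{e\in U}\abs{A_{<e}}-\abs{B_{<e}}}$ unambiguously, since the existing references state it only up to sign and the overall sign depends on conventions for how $\explaplacian(A,B)$ is indexed relative to $\explaplacian$; a clean way to handle this is to move all deleted rows/columns to the front of the matrix, count the resulting permutation sign once and for all, and then note that the block-triangular structure after this reordering makes the remaining determinant factor as claimed. Once the general formula is in hand, the final sentences --- that contributing terms require $U\cup A$ and $U\cup B$ to be spanning-tree complements (immediate from \cref{lem:matrix-tree}, since otherwise an incidence minor vanishes), and that $A=B=\emptyset$ recovers $\firstsymanzik_\Graph = \sum_T \prod_{e\notin T} a_e$ of \cref{symanzik_polynomial_trees} --- follow with no further effort.
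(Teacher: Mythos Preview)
Your proposal is correct and follows essentially the same route as the paper's proof: the paper also exploits the block structure of $\explaplacian(A,B)$ (phrased as a Leibniz expansion in which only permutations compatible with the block shape survive, rather than as a generalized Laplace expansion along the bottom $\abs{V_\Graph}-1$ rows, but these are the same computation), obtains the factorization $\det\big(\edgematrix(A,B)[U,U]\big)\cdot\det\incidencematrix(U\cup A,\emptyset)\cdot\det\incidencematrix(U\cup B,\emptyset)$, and identifies the sign as the permutation needed to bring each $a_e$ back to the diagonal of $\edgematrix(A,B)[U,U]$, which is exactly your count $\abs{A_{<e}}-\abs{B_{<e}}$. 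The concluding remarks about spanning-tree complements via \cref{lem:matrix-tree} and the specialization $A=B=\emptyset$ are handled identically.
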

\begin{proof} 
  Expanding the determinant $\dodgson^{A,B}_{\Graph}$ via the Leibniz formula, and based on the structure of $\explaplacian(A,B)$, the only non-zero terms come from permutations that factors the determinant as such:
  \begin{align*}
    &\dodgson^{A,B}_{\Graph}
    = \sum_{U \subseteq E_{\Graph} \setminus (A \cup B)} \det\left(\edgematrix(A,B)[U,U]\right)  \det\begin{pmatrix}\zeromatrix & \incidencematrix(U \cup A, \emptyset) \\ -\incidencematrix^{\Transpose}(\emptyset,U \cup B) & \zeromatrix \end{pmatrix} \\
    &=  \sum_{U \subseteq E_{\Graph} \setminus (A \cup B)} \det(\sigma_U) \left(\prod_{e \in U} a_e \right) (-1)^{\abs{V_{\Graph}}-1+(\abs{V_{\Graph}}-1)(\abs{E_{\Graph}} - \abs{U}-\abs{A})}\det\begin{pmatrix}\incidencematrix^{\Transpose}(\emptyset, U \cup B) & \zeromatrix \\ \zeromatrix & \incidencematrix(U \cup A,\emptyset) \end{pmatrix} \\
    &= \sum_{U \in \mathcal{U}} \det(\sigma_U) \left(\prod_{e \in U} a_e \right)  \det\left(\incidencematrix(U \cup A, \emptyset)\right) \det\left(\incidencematrix(U \cup B, \emptyset)\right).
  \end{align*}
  Here, since $\edgematrix(A,B)[U,U]$ may no longer be diagonal, its determinant induces a sign $\det(\sigma_U)$ for some permutation matrix $\sigma_U$.
  For the last equality, $\incidencematrix(U \cup A, \emptyset)$ is an $(\abs{E_{\Graph}} - \abs{U} - \abs{A}) \times (\abs{V_{\Graph}}-1)$ matrix, which by Euler's formula (\cref{def:loopnumber}) is square and has non-zero determinant when $\abs{U} = \loopnumber - \abs{A}$ (recall $\incidencematrix$ is of rank $\abs{V_{\Graph}}-1$), similarly for $\incidencematrix^{\Transpose}(\emptyset, U \cup B)$.
  In particular when $\abs{U} = \loopnumber - \abs{A}$,  \cref{lem:matrix-tree} tells us that the product $\det\left(\incidencematrix(U \cup A, \emptyset) \right) \det\left(\incidencematrix(U \cup B, \emptyset)\right)$ is non-zero, and furthermore equal to $\pm 1$, if and only if both $U \cup A$ and $U \cup B$ are spanning tree complements.

  Finally to determine $\det(\sigma_U)$, we just need to count the number of transpositions it takes to bring $a_e$ to the diagonal of $\edgematrix(A,B)[U,U]$ for $e \in U$.
  This count is exactly the difference between the number of rows removed from $\edgematrix$ that is above row $e$ and the number of columns removed from $\edgematrix$ that is before column $e$, which is given by $\abs{\{a \in A | a < e\}} - \abs{\{b \in B | b < e \}}$.
\end{proof}

Recall that the matrix $\explaplacian$  (\cref{def:Laplacian}) has edge- and vertex indices, where the top left block is indexed by edges.
The cycle Laplacian $\duallaplacian_{\cyclebasis}$ (\cref{def:cycle_laplacian}) is indexed by cycles, and the mapping between cycles and edges is done by $\cycleincidencematrix$. Analogously, the vertex Laplacian $\laplacian$ (\cref{def:Laplacian}) is indexed by vertices, and the incidence matrix $\incidencematrix$ maps between edges and vertices.

\begin{proposition}\label{lem:laplacian_inverse_dodgson}
	The entries of the inverse vertex Laplacian and of the inverse cycle Laplacian are given by Dodgson polynomials,
	\begin{align*}
    \left( \laplacian^{-1} \right) _{v_i,v_j}
      &= \frac{(-1)^{i+j}\dodgson_{\Graph}^{v_i,v_j}}{\firstsymanzik_{\Graph}},
    \qquad\qquad\left( \cycleincidencematrix \cyclelaplacian_{\cyclebasis}^{-1} \cycleincidencematrix^\Transpose \right) _{e_i, e_j}
      = \frac{(-1)^{i+j}\dodgson_{\Graph}^{e_i,e_j}}{\firstsymanzik_{\Graph}}, \\
    \left( \edgematrix^{-1}	\incidencematrix \laplacian^{-1}\incidencematrix^\Transpose  \right) _{e_i,e_j}
      &= \delta_{e_i,e_j} - \frac{(-1)^{i+j}   a_{e_j} \dodgson_{\Graph}^{e_i,e_j}}{\firstsymanzik_{\Graph}} .
	\end{align*}
  In the last equation, the diagonal entry $e_i=e_j$ can equivalently be written as $\frac{ \firstsymanzik_{\Graph / {e_i}}}{\firstsymanzik_{\Graph}} $.
\end{proposition}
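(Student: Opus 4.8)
The plan is to obtain all three formulas at once by inverting the expanded vertex Laplacian $\explaplacian$ (\cref{def:Laplacian}) in closed form and then reading off its entries in two ways: on the one hand as the matrix combinations that appear in the statement, and on the other hand, via Cramer's rule, as signed minors of $\explaplacian$, which by \cref{def:Dodgson_polynomial} are exactly Dodgson polynomials.

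First I would invert $\explaplacian$ blockwise. Since all $a_e\neq 0$ the diagonal block $\edgematrix$ is invertible, so the Schur complement is $S=\zeromatrix-(-\incidencematrix^\Transpose)\edgematrix^{-1}\incidencematrix=\incidencematrix^\Transpose\edgematrix^{-1}\incidencematrix=\laplacian$, and the standard block-inversion formula gives
\begin{align*}
	\explaplacian^{-1}=\begin{pmatrix}
		\edgematrix^{-1}-\edgematrix^{-1}\incidencematrix\laplacian^{-1}\incidencematrix^\Transpose\edgematrix^{-1} & -\edgematrix^{-1}\incidencematrix\laplacian^{-1}\\
		\laplacian^{-1}\incidencematrix^\Transpose\edgematrix^{-1} & \laplacian^{-1}
	\end{pmatrix}.
\end{align*}
By \cref{lem:laplacian_cyclelaplacian_inverse}, the upper-left block equals $\cycleincidencematrix\cyclelaplacian_{\cyclebasis}^{-1}\cycleincidencematrix^\Transpose$; in particular this already explains why the right-hand sides of all three stated formulas are independent of the choice of $\cycleincidencematrix$. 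Cramer's rule then says $(\explaplacian^{-1})_{p,q}=(-1)^{p+q}\det\explaplacian(\{q\},\{p\})/\det\explaplacian$, with $\det\explaplacian=\firstsymanzik_\Graph$ (\cref{def:Symanzik_polynomial}). Since the edge $e_i$ sits at matrix position $i$ and the vertex $v_i$ at position $\abs{E_\Graph}+i$, both the edge--edge entry $(e_i,e_j)$ and the vertex--vertex entry $(v_i,v_j)$ acquire the sign $(-1)^{i+j}$.

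The one step that needs care is matching the minor $\det\explaplacian(\{q\},\{p\})$ produced by Cramer's rule with the minor $\det\explaplacian(\{p\},\{q\})$ used in \cref{def:Dodgson_polynomial}, since $\explaplacian$ is \emph{not} symmetric. I would resolve this with the identity $\explaplacian^\Transpose=N\explaplacian N$, where $N\defas\diag(\identitymatrix_{\abs{E_\Graph}},-\identitymatrix_{\abs{V_\Graph}-1})$: if $A$ and $B$ are singletons that are both edge indices or both vertex indices, then restricting $N$ to the surviving rows and to the surviving columns produces two diagonal $\pm1$ matrices of equal determinant, so $\det\explaplacian(A,B)=\det\bigl(\explaplacian^\Transpose(B,A)\bigr)=\det\explaplacian(B,A)$. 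Hence $\det\explaplacian(\{e_j\},\{e_i\})=\dodgson_\Graph^{e_i,e_j}$ and $\det\explaplacian(\{v_j\},\{v_i\})=\dodgson_\Graph^{v_i,v_j}$, and reading off the $(e_i,e_j)$ and $(v_i,v_j)$ entries of $\explaplacian^{-1}$ from the displayed block form yields the first two identities of the proposition.

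For the third identity, the block-inverse formula rearranges to $\edgematrix^{-1}\incidencematrix\laplacian^{-1}\incidencematrix^\Transpose\edgematrix^{-1}=\edgematrix^{-1}-\cycleincidencematrix\cyclelaplacian_{\cyclebasis}^{-1}\cycleincidencematrix^\Transpose$; taking the $(e_i,e_j)$ entry and then right-multiplying by $\edgematrix$ (scaling column $e_j$ by $a_{e_j}$) turns $\delta_{e_i,e_j}a_{e_i}^{-1}-(\cycleincidencematrix\cyclelaplacian_{\cyclebasis}^{-1}\cycleincidencematrix^\Transpose)_{e_i,e_j}$ into $\delta_{e_i,e_j}-(-1)^{i+j}a_{e_j}\dodgson_\Graph^{e_i,e_j}/\firstsymanzik_\Graph$ by the formula just established. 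Finally, on the diagonal $e_i=e_j$ one observes that $\explaplacian(\{e_i\},\{e_i\})$ is the expanded vertex Laplacian of the graph $\Graph-e_i$ obtained by deleting $e_i$ (removing row and column $e_i$ deletes the edge together with its Schwinger parameter), so $\dodgson_\Graph^{e_i,e_i}=\firstsymanzik_{\Graph-e_i}$, and the deletion--contraction relation $\firstsymanzik_\Graph=\firstsymanzik_{\Graph/e_i}+a_{e_i}\firstsymanzik_{\Graph-e_i}$---immediate from \cref{symanzik_polynomial_trees} by splitting the spanning-tree sum according to whether $e_i$ is used---rewrites $1-a_{e_i}\dodgson_\Graph^{e_i,e_i}/\firstsymanzik_\Graph$ as $\firstsymanzik_{\Graph/e_i}/\firstsymanzik_\Graph$. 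The main obstacle throughout is precisely the sign/transpose bookkeeping of the third paragraph: one must verify that the minor handed over by Cramer's rule is exactly the one in \cref{def:Dodgson_polynomial}, for which the conjugation $\explaplacian^\Transpose=N\explaplacian N$ is the key; the remaining computations are routine linear algebra.
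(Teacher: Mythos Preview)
Your proof is correct and follows essentially the same route as the paper: block-invert $\explaplacian$ via the Schur complement, identify the top-left block with $\cycleincidencematrix\cyclelaplacian_{\cyclebasis}^{-1}\cycleincidencematrix^\Transpose$ using \cref{lem:laplacian_cyclelaplacian_inverse}, and read off entries against the cofactor expression from Cramer's rule. You are in fact more careful than the paper about the transpose issue (the paper simply asserts that single-index Dodgson polynomials are the cofactors without addressing that $\explaplacian$ is not symmetric, whereas your $\explaplacian^\Transpose=N\explaplacian N$ argument makes this explicit), and your deletion--contraction derivation of the diagonal alternative $\firstsymanzik_{\Graph/e_i}/\firstsymanzik_\Graph$ is shorter than the paper's detour through an expansion along row $e$.
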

\begin{proof}
	The Dodgson polynomials with single-element indices are just the cofactors of $\explaplacian$, therefore, they are entries of the inverse,
	\begin{align*}
    \left( \explaplacian^{-1} \right) _{i,j} = (-1)^{i+j} \frac{\dodgson_{\Graph}^{i,j}}{\firstsymanzik_{\Graph}}
	\end{align*}
	At the same time, thanks to the block form of $\explaplacian$, one can compute the inverse in block form:
	\begin{align*}
		\explaplacian = \begin{pmatrix}
			\edgematrix & \incidencematrix \\
			-\incidencematrix^{\Transpose} &0
		\end{pmatrix}\quad \Rightarrow \quad
		\explaplacian^{-1} &= \begin{pmatrix}
			\edgematrix^{-1} + \edgematrix^{-1}\incidencematrix \left( \incidencematrix^\Transpose \edgematrix^{-1} \incidencematrix \right) ^{-1}(-\incidencematrix^\Transpose)\edgematrix^{-1} & \quad -\edgematrix^{-1} \incidencematrix \left( \incidencematrix^\Transpose \edgematrix^{-1} \incidencematrix \right) ^{-1}\\
			-\left( \incidencematrix^\Transpose \edgematrix^{-1} \incidencematrix \right) ^{-1}(-\incidencematrix^\Transpose)\edgematrix^{-1} & \left( \incidencematrix^\Transpose \edgematrix^{-1} \incidencematrix \right) ^{-1}
		\end{pmatrix}\\
		&= \begin{pmatrix}
			\edgematrix^{-1} - \edgematrix^{-1}\incidencematrix \laplacian  ^{-1} \incidencematrix^\Transpose  \edgematrix^{-1} & \quad -\edgematrix^{-1} \incidencematrix \laplacian  ^{-1}\\
			\laplacian ^{-1} \incidencematrix^\Transpose \edgematrix^{-1} &\laplacian ^{-1}
		\end{pmatrix}\\
		&= \begin{pmatrix}
      \cycleincidencematrix \duallaplacian_{\cyclebasis}^{-1} \cycleincidencematrix^\Transpose & \quad -\edgematrix^{-1} \incidencematrix \laplacian  ^{-1}\\
			\laplacian ^{-1} \incidencematrix^\Transpose \edgematrix^{-1} &\laplacian ^{-1}
		\end{pmatrix}.
	\end{align*}
	In the last step, we have used \cref{lem:laplacian_cyclelaplacian_inverse}. Taking coefficients gives the first two equations. The third equation follows from taking coefficients of the top left block $\edgematrix^{-1} - \edgematrix^{-1}\incidencematrix \laplacian^{-1} \incidencematrix^\Transpose  \edgematrix^{-1}$
  in the intermediate form of $\explaplacian^{-1}$, and multiplying by $\edgematrix$.

	To prove the alternative formula for the diagonal entry, recall that setting $a_e=0$ in $\firstsymanzik_\Graph$ means to contract the edge $e$ in the graph. When $e=s \rightarrow t$ is any edge, a determinant expansion of $\explaplacian$ along row $e$ yields
	\begin{align*}
		\firstsymanzik_{\Graph/e}=\firstsymanzik_\Graph\rvert_{a_e=0} = \det \explaplacian \rvert_{a_e=0} = (-1)^{t+e+\abs{E}} a_e \dodgson^{e,t}_\Graph - (-1)^{s+e+\abs{E}} a_e \dodgson^{e,s}_\Graph.
	\end{align*}
	Consequently, for the case  $f = e$, we have that
	\begin{align}\label{ILI_proof_eq1}
		\left( \incidencematrix\laplacian^{-1}\incidencematrix^{\Transpose} \right) _{e,e}
    &=  \frac{(-1)^{t+e+\abs{E}} a_e \dodgson_{\Graph}^{e,t} - (-1)^{s+e+\abs{E}} a_e \dodgson_{\Graph}^{e,s}}{\firstsymanzik_\Graph}
		= \frac{a_e \firstsymanzik_{\Graph/e}}{\firstsymanzik_\Graph}.
	\end{align}
	The diagonal Dodgson edge polynomial amounts to deletion of an edge, $\dodgson^{e,e}_{\Graph}=\firstsymanzik_{\Graph \setminus e}$, and the Symanzik polynomial satisfies the contraction-deletion identity
	$\firstsymanzik_\Graph =a_e \firstsymanzik_{\Graph \setminus e}  + \firstsymanzik_{\Graph / e}$.
	Consequently, the last term of \cref{ILI_proof_eq1} can be rewritten as
	\begin{align*}
		\frac{a_e \firstsymanzik_{\Graph /e}}{ \firstsymanzik_\Graph}= a_e- \frac{a_e^2 \firstsymanzik_{\Graph \setminus e} } { \firstsymanzik_\Graph}= a_e- \frac{a_e^2 \dodgson^{e,e}_\Graph } { \firstsymanzik_\Graph}.
	\end{align*}
	Noticing that $(-1)^{e+e}=(-1)^{2e}=1$, this has the same form as the off-diagonal form, up to a Kronecker delta. Finally, the factor $a_e$ in both expressions gets cancelled when we multiply by $\edgematrix^{-1}$ from the left.
\end{proof}

\Cref{lem:laplacian_inverse_dodgson} and its proof are also of interest as we obtain several (new) linear relations between edge and vertex Dodgson polynomials as a consequence.
By taking other coefficients of $\explaplacian^{-1}$ analogously to the proof of \cref{lem:laplacian_inverse_dodgson}, for example, results in an alternative proof of \cite[Lemmas~11, 12]{balduf_combinatorial_2024}.
The last equation in~\cref{lem:laplacian_inverse_dodgson} implies the following new identities:
\begin{corollary} ~\\[-.5cm]
	\begin{enumerate}
		\item For any edge $e = s \to t$ where $s$ and $t$ are distinct vertices,
		\[ a_e \dodgson_{G/e} = \dodgson_{\Graph}^{s,s} + \dodgson_{\Graph}^{t,t} - 2(-1)^{s+t}\dodgson_{\Graph}^{s,t}. \]
		\item   By the orthogonality of $\cyclebasis$ and $\incidencematrix$, for any cycle $C$ identified with its edge incidence vector $(c_j)$ and for any edge $e_i$,
		\begin{align*}
			c_i \dodgson_{\Graph} - \sum_{e_j \in E_{\Graph}} (-1)^{e_i+e_j} c_j a_{e_j} \dodgson_{\Graph}^{e_i, e_j}=0.
		\end{align*}
	\end{enumerate}

\end{corollary}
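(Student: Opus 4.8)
The plan is to obtain both identities as single matrix-entry extractions from \cref{lem:laplacian_inverse_dodgson}, using only the orthogonality $\incidencematrix^{\Transpose}\cycleincidencematrix = \zeromatrix$ of \cref{lem:pathmatrix_incidencematrix_orthogonal} together with the explicit Dodgson formulas for the entries of $\laplacian^{-1}$ and of $\edgematrix^{-1}\incidencematrix\laplacian^{-1}\incidencematrix^{\Transpose}$ proved there. I would treat the two parts independently.

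For part (1), assume first that the edge $e = s\to t$ has both endpoints in $\overline{V}_\Graph$ (if one endpoint is the distinguished vertex $v_\star$, which is absent from the reduced Laplacian, the argument is the same under the convention that a vertex-Dodgson polynomial carrying a $v_\star$ index vanishes). The row of $\incidencematrix$ indexed by $e$ has a single $+1$ in column $t$ and a single $-1$ in column $s$, so the $(e,e)$-entry of $\incidencematrix\laplacian^{-1}\incidencematrix^{\Transpose}$ equals $(\laplacian^{-1})_{t,t} - (\laplacian^{-1})_{t,s} - (\laplacian^{-1})_{s,t} + (\laplacian^{-1})_{s,s}$, which by the symmetry of $\laplacian^{-1}$ is $(\laplacian^{-1})_{s,s} + (\laplacian^{-1})_{t,t} - 2(\laplacian^{-1})_{s,t}$. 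Substituting the first formula of \cref{lem:laplacian_inverse_dodgson} and using $(-1)^{s+s} = (-1)^{t+t} = 1$ turns this into $\big(\dodgson_{\Graph}^{s,s} + \dodgson_{\Graph}^{t,t} - 2(-1)^{s+t}\dodgson_{\Graph}^{s,t}\big)/\firstsymanzik_\Graph$. On the other hand, the last equation of \cref{lem:laplacian_inverse_dodgson} together with its concluding remark give that the diagonal entry of $\edgematrix^{-1}\incidencematrix\laplacian^{-1}\incidencematrix^{\Transpose}$ at $e$ is $\firstsymanzik_{\Graph/e}/\firstsymanzik_\Graph$, whence the $(e,e)$-entry of $\incidencematrix\laplacian^{-1}\incidencematrix^{\Transpose}$ is $a_e\firstsymanzik_{\Graph/e}/\firstsymanzik_\Graph$. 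Equating the two expressions, clearing the common denominator $\firstsymanzik_\Graph$, and recalling $\dodgson_{\Graph/e} = \firstsymanzik_{\Graph/e}$ produces the claimed identity.

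For part (2), the orthogonality $\incidencematrix^{\Transpose}\cycleincidencematrix = \zeromatrix$ gives $\edgematrix^{-1}\incidencematrix\laplacian^{-1}\incidencematrix^{\Transpose}\cycleincidencematrix = \zeromatrix$. Writing $(c_j)$ for the edge-incidence vector of a cycle $C$, realized as a column of $\cycleincidencematrix$, the $e_i$-component of $\edgematrix^{-1}\incidencematrix\laplacian^{-1}\incidencematrix^{\Transpose}C$ must vanish. I would then insert the last equation of \cref{lem:laplacian_inverse_dodgson}, $(\edgematrix^{-1}\incidencematrix\laplacian^{-1}\incidencematrix^{\Transpose})_{e_i,e_j} = \delta_{e_i,e_j} - (-1)^{i+j}a_{e_j}\dodgson_{\Graph}^{e_i,e_j}/\firstsymanzik_\Graph$, which rewrites the vanishing as $c_i - \frac{1}{\firstsymanzik_\Graph}\sum_{e_j\in E_\Graph}(-1)^{e_i+e_j}c_j a_{e_j}\dodgson_{\Graph}^{e_i,e_j} = 0$; multiplying through by $\firstsymanzik_\Graph = \dodgson_\Graph$ gives the stated relation.

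There is no genuine obstacle here — both arguments are routine entry manipulations — so the most delicate points are the bookkeeping: keeping vertex and edge indices straight, carrying the $(-1)^{i+j}$ signs through the Dodgson entries correctly, observing that the factor $2$ in part (1) is produced exactly by the symmetry $\laplacian^{-1} = (\laplacian^{-1})^{\Transpose}$, and treating the boundary case of an edge meeting $v_\star$.
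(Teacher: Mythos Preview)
Your proposal is correct and follows exactly the route the paper intends: the corollary is stated immediately after \cref{lem:laplacian_inverse_dodgson} with the remark that ``the last equation in \cref{lem:laplacian_inverse_dodgson} implies the following new identities,'' and part~(2) even builds the orthogonality hint into its statement. Your extraction of the $(e,e)$-entry of $\incidencematrix\laplacian^{-1}\incidencematrix^{\Transpose}$ in two ways for part~(1), and your multiplication of the last equation by a cycle vector for part~(2), are precisely the coefficient readings the paper has in mind; the only caveat is that your ad~hoc convention ``a vertex-Dodgson polynomial carrying a $v_\star$ index vanishes'' is not one the paper adopts---the vertex index $v_\star$ simply does not occur in $\explaplacian$, so the identity as stated is implicitly for $s,t\in\overline V_\Graph$.
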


Finally, for the special case where $\cycleincidencematrix$ is defined in terms of a fundamental cycle basis, we can directly express the entries of $\cyclelaplacian_{\cyclebasis}^{-1}$ in terms of edge Dodgson polynomials.

\begin{corollary}\label{lem:dual-inverse}
  Let $\fundamentalcbasis$ be the cycle incidence matrix corresponding to a fundamental cycle basis $\{C_1, \ldots, C_{\loopnumber}\}$ defined via a spanning tree $T$ (see \cref{def:fundamentalcyclebasis}), where each cycle $C_i$ corresponds to an edge $f_i \notin T$.
  Then the $(i,j)$-entry of $\cyclelaplacian_{\cyclebasis}^{-1}$ is given by edge Dodgson polynomials (which, by \cref{def:Dodgson_polynomial}, are independent of the choice of cycle basis),
  \begin{align*}
    \left( \duallaplacian_{\fundamentalcbasis}^{-1} \right) _{C_i,C_j} =  (-1)^{f_i+f_j} \frac{\dodgson_{\Graph}^{f_i, f_j}}{\firstsymanzik_{\Graph}},
  \end{align*}
\end{corollary}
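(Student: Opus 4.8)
The plan is to deduce this directly from the second identity in \cref{lem:laplacian_inverse_dodgson}, by exploiting the rigid structure of a fundamental cycle basis restricted to the non-tree edges. Since that proposition applies to an arbitrary cycle incidence matrix, I would specialise it to $\cycleincidencematrix = \fundamentalcbasis$, giving $\left( \fundamentalcbasis \, \duallaplacian_{\fundamentalcbasis}^{-1} \, \fundamentalcbasis^\Transpose \right)_{e_i, e_j} = (-1)^{i+j}\dodgson_{\Graph}^{e_i, e_j}/\firstsymanzik_{\Graph}$ for all edges $e_i, e_j \in E_{\Graph}$, where $i,j$ denote the positions of these edges in the fixed edge labelling.

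The key combinatorial point is that the rows of $\fundamentalcbasis$ indexed by the non-tree edges $\overline{T} = \{f_1 < \cdots < f_{\loopnumber}\}$ form the identity matrix: by \cref{def:fundamentalcyclebasis}, the fundamental cycle $C_i$ is the unique cycle contained in $T \cup \{f_i\}$, so it passes through $f_i$ --- oriented, by the natural convention, so that $C_i$ traverses $f_i$ forwards, giving $(\fundamentalcbasis)_{f_i, C_i} = +1$ --- and through no other edge of $\overline{T}$. Hence $\fundamentalcbasis[\overline{T}] = \identitymatrix_{\loopnumber}$ in the notation of \cref{submatrix_rows}. Restricting the matrix identity above to the rows and columns indexed by $\overline{T}$ then gives $\left( \fundamentalcbasis \, \duallaplacian_{\fundamentalcbasis}^{-1} \, \fundamentalcbasis^\Transpose \right)[\overline{T}, \overline{T}] = \fundamentalcbasis[\overline{T}] \, \duallaplacian_{\fundamentalcbasis}^{-1} \, \fundamentalcbasis[\overline{T}]^\Transpose = \duallaplacian_{\fundamentalcbasis}^{-1}$. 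Reading off the $(C_i, C_j)$-entry, and using that an edge $e$ sits in position $e$ of the labelling used in \cref{lem:laplacian_inverse_dodgson}, one obtains $\left(\duallaplacian_{\fundamentalcbasis}^{-1}\right)_{C_i, C_j} = (-1)^{f_i + f_j}\dodgson_{\Graph}^{f_i, f_j}/\firstsymanzik_{\Graph}$, as claimed.

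There is essentially no real obstacle here; the one point that needs care is the orientation convention for the fundamental cycles. If some $C_i$ were instead taken to traverse $f_i$ backwards, then $\fundamentalcbasis[\overline{T}]$ would be a signed diagonal matrix $D$ and the restriction would produce $D \duallaplacian_{\fundamentalcbasis}^{-1} D$; the diagonal signs cancel in pairs, so the final formula would be unchanged, but fixing the forward orientation from the start keeps the argument cleanest.
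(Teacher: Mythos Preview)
Your proof is correct and follows essentially the same approach as the paper: both specialise the identity $\left( \cycleincidencematrix \cyclelaplacian_{\cyclebasis}^{-1} \cycleincidencematrix^\Transpose \right)_{e_i, e_j} = (-1)^{i+j}\dodgson_{\Graph}^{e_i,e_j}/\firstsymanzik_{\Graph}$ from \cref{lem:laplacian_inverse_dodgson} to the fundamental cycle basis and use that row $f_i$ of $\fundamentalcbasis$ is the $i$-th standard basis vector to read off $\left(\cyclelaplacian_{\fundamentalcbasis}^{-1}\right)_{C_i,C_j}$ directly. Your phrasing via the block identity $\fundamentalcbasis[\overline{T}] = \identitymatrix_{\loopnumber}$ is a clean way to package the same observation the paper states row by row.
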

\begin{proof}
  $f_i$ being the defining edge of the cycle $C_i$  implies that $C_i$ is the only cycle that contains $f_i$, and the directions of $C_i$ and $f_i$ coincide. Therefore, the entry $\fundamentalcbasis_{f_i, C_i}=+1$ is the only non-zero entry in the row $f_i$ of $\fundamentalcbasis$. Phrased differently, if $f_i$ is one of the defining edges, then the vector $\fundamentalcbasis^\Transpose[-, f_i]$ is a unit basis vector for $C_i$. Now  	consider the equation
	\begin{align*}
    \frac{(-1)^{i+j}\dodgson_{\Graph}^{e_i,e_j}}{\firstsymanzik_{\Graph}}=\left( \cycleincidencematrix \cyclelaplacian_{\cyclebasis}^{-1} \cycleincidencematrix^\Transpose \right) _{e_i, e_j}
	\end{align*}
	from \cref{lem:laplacian_inverse_dodgson}. It holds for arbitrary edges. However, if $\cycleincidencematrix$ is defined in terms of a fundamental cycle basis, and $f_i, f_j$ are defining edges of fundamental cycles, then
	\begin{align*}
    \left( \fundamentalcbasis \cyclelaplacian_{\fundamentalcbasis}^{-1} \fundamentalcbasis^\Transpose \right) _{f_i, f_j} &= \left(\cyclelaplacian_{\fundamentalcbasis}^{-1}\right)_{C_i,C_j}.
	\end{align*}
\end{proof}

\subsection{Dodgson representation of the Pfaffian form}

Recall from \cref{def:pff-form} that the Pfaffian form $\phi_{G}$ for the graph $\Graph$ is defined as
  \begin{align}\label{eq:pff-recall}
    \phi_{\Graph} = \frac{1}{(-2\pi)^{\loopnumber/2}} \frac{\Pf\left(\d\cyclelaplacian_{\cyclebasis} \cdot \cyclelaplacian_{\cyclebasis}^{-1} \cdot \d\cyclelaplacian_{\cyclebasis}\right)}{\sqrt{\dodgson_{\Graph}}},
  \end{align}
  where $\cyclelaplacian_{\cyclebasis}$ is the cycle Laplacian (\cref{def:cycle_laplacian}) corresponding to an ordered cycle basis $\cyclebasis$ (and its cycle incidence matrix).
  $\phi_\Graph$ is only non-zero when $\Graph$ has even loop number $\loopnumber$.

To express the Pfaffian form in terms of Dodgson polynomials, we will need the following minor summation formula of Pfaffians, which is an analogue of the Cauchy-Binet formula for determinants.
\begin{lemma}[\cite{ishikawa_minor_1995}~Theorem 1]\label{lem:minor-summ-pf}
  Let $A$ be an $m \times n$--matrix and $B$ be an $n \times n$ skew-symmetric matrix such that the entries of $B$ commute with the entries of $A$.
  Suppose $m \leq n$ and $m$ is even.
  Then
  \begin{align*}
    \Pf\big(A B A^{\Transpose} \big) = \sum_{U \in \binom{[n]}{m}} \det\big( A[-,U] \big) \Pf\big( B[U,U] \big),
  \end{align*}
  where the sum is over all subsets of size $m$ of $[n] = \{1, \cdots, n\}$.
\end{lemma}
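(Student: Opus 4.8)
The plan is to prove this minor summation formula via the exterior-algebra (Grassmann) description of the Pfaffian, reducing the identity to a bookkeeping of monomials. Introduce formal anticommuting variables $\eta_1,\dots,\eta_m$ (so $\eta_i\eta_j=-\eta_j\eta_i$ and $\eta_i^2=0$), taken to commute with the entries of $A$ and $B$, and write $[\eta_1\cdots\eta_m]$ for extraction of the coefficient of the top monomial $\eta_1\eta_2\cdots\eta_m$. The starting observation, immediate from~\eqref{eq:pfaff-defn}, is that for any skew-symmetric $m\times m$ matrix $M$,
\[
  \big[\eta_1\cdots\eta_m\big]\,\exp\!\Big(\tfrac12\,\eta^{\Transpose}M\eta\Big)=\Pf(M),
\]
since only the term with $m/2$ factors can reach $\eta$-degree $m$, a surviving monomial forces the $m$ indices to be pairwise distinct, and the sum over their orderings reproduces the defining sum in~\eqref{eq:pfaff-defn}. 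Applied with $M=ABA^{\Transpose}$, this expresses $\Pf(ABA^{\Transpose})$ as such a coefficient.

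The next step is the substitution $\xi\defas A^{\Transpose}\eta$, producing $n$ odd elements $\xi_1,\dots,\xi_n$ that mutually anticommute and satisfy $\xi_j^2=0$ (they are not linearly independent, which does not matter). Because the entries of $B$ commute with those of $A$ and with the $\eta_i$, we may rewrite $\eta^{\Transpose}ABA^{\Transpose}\eta=\xi^{\Transpose}B\xi$, hence
\[
  \Pf(ABA^{\Transpose})=\big[\eta_1\cdots\eta_m\big]\,\exp\!\Big(\tfrac12\textstyle\sum_{j,k}B_{jk}\,\xi_j\xi_k\Big).
\]
Any product of the $\xi_j$ with a repeated index vanishes, and a nonzero product of $\xi_j$'s lives in $\eta$-degree equal to its length; so only the term with $m/2$ factors of $B$ contributes, and only index tuples $(j_1,k_1,\dots,j_{m/2},k_{m/2})$ with pairwise distinct entries. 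Collecting all such tuples with a given underlying set $U=\{u_1<\dots<u_m\}\subseteq[n]$, the sum over their orderings matched against the Pfaffian definition~\eqref{eq:pfaff-defn} yields a factor $\Pf(B[U,U])$ times $\xi_{u_1}\cdots\xi_{u_m}$; expanding $\xi_{u_1}\cdots\xi_{u_m}$ back in the $\eta_i$ and reading off the coefficient of $\eta_1\cdots\eta_m$ gives exactly $\det(A[-,U])$. Summing over $U\in\binom{[n]}{m}$ then gives the claimed identity.

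The routine parts are the two reorderings into increasing order (of the $B$-factors inside the Pfaffian, and of the $\xi$'s inside the determinant): the signs they generate are precisely those forced by antisymmetry of the Pfaffian and of the determinant under column permutations, so they cancel in the correct way. The one hypothesis-sensitive point, worth stating explicitly, is commutativity: both the rewriting $\eta^{\Transpose}ABA^{\Transpose}\eta=\xi^{\Transpose}B\xi$ and the termination of the exponential series rely on the entries of $B$ commuting with those of $A$ and with the $\eta_i$; if instead some entries anticommute among themselves (differential forms, as in the intended application of this lemma), one must carry the corresponding Koszul signs, but under the stated hypotheses this does not arise. One could alternatively derive the formula from the block Pfaffian identity $\Pf\big(\begin{smallmatrix}0 & A\\ -A^{\Transpose} & R\end{smallmatrix}\big)=\Pf(R)\,\Pf(AR^{-1}A^{\Transpose})$ together with the Pfaffian analogue of Jacobi's adjugate identity, passing from invertible $R$ to arbitrary skew-symmetric $B$ by a density argument; but the Grassmann route above is more direct and is the one I would write up in full.
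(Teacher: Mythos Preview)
The paper does not give its own proof of this lemma: it is stated with a citation to Ishikawa--Wakayama and used as a black box in the proof of \cref{lem:phi_dodgson}. So there is no argument in the paper to compare against.

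Your Grassmann-variable proof is correct. The key steps---the Berezin/exterior-algebra representation $\Pf(M)=[\eta_1\cdots\eta_m]\exp(\tfrac12\eta^{\Transpose}M\eta)$, the substitution $\xi=A^{\Transpose}\eta$, the observation that $\xi_j^2=0$ and that only length-$m$ products of distinct $\xi$'s contribute, the regrouping by the support set $U$ to produce $\Pf(B[U,U])$, and the identification of $[\eta_1\cdots\eta_m]\,\xi_{u_1}\cdots\xi_{u_m}$ with $\det(A[-,U])$---are all valid and cleanly assembled. Your remark on the commutativity hypothesis is also well placed: in the paper's actual application the entries of $B$ are differential forms, but since they are $2$-forms they commute with everything, so the hypothesis is satisfied there too. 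The alternative block-Pfaffian route you sketch is also standard, and the original Ishikawa--Wakayama argument proceeds more combinatorially via Laplace-type expansions; your exterior-algebra approach is arguably the most transparent of the three.
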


\begin{theorem}\label{lem:phi_dodgson}
  For a connected graph $\Graph$ with even loop number $\loopnumber$ and any ordered cycle basis $\mathcal{C}$, $\phi_{\Graph}$ is a sum over spanning trees,
  \begin{align*}
    \scalemath{.95}{\phi_{\Graph}
    = \frac{1}{(-\pi)^{\frac{\loopnumber}{2}} 2^{\loopnumber} \left(\frac{\loopnumber}{2}\right)! \cdot \dodgson_{\Graph}^{\frac{\loopnumber+1}{2}}}\sum_{\substack{T \\ \textnormal{spanning tree}}} (-1)^{\sum_{e \notin T} e} \det\left(\cycleincidencematrix[\overline{T}]\right)  \left(\sum_{\sigma \in \mathfrak{S}_{\overline{T}}} \dodgson_{\Graph}^{\sigma(f_1),\sigma(f_2)} \cdots \dodgson_{\Graph}^{\sigma(f_{\loopnumber-1}), \sigma(f_\loopnumber)} \right) \bigwedge_{e \notin T} \d a_e,}
  \end{align*}
  where for each tree $T$, the edges $\{f_1,\ldots,f_{\loopnumber}\} \defas \overline{T} = E_{\Graph} \setminus T$ and the $\d a_e$ are taken in the increasing order corresponding to the edge orderings.
\end{theorem}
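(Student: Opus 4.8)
The plan is to substitute the explicit form $\cyclelaplacian_{\cyclebasis} = \cycleincidencematrix^{\Transpose}\edgematrix\cycleincidencematrix$ (\cref{def:cycle_laplacian}) into \cref{eq:pff-recall} and then collapse the Pfaffian to a sum over spanning trees by means of the minor-summation formula \cref{lem:minor-summ-pf}. Since $\cycleincidencematrix$ has constant integer entries we have $\d\cyclelaplacian_{\cyclebasis} = \cycleincidencematrix^{\Transpose}(\d\edgematrix)\cycleincidencematrix$ with $\d\edgematrix = \diag(\d a_1,\dots,\d a_{\abs{E_{\Graph}}})$, and hence
\begin{align*}
  \d\cyclelaplacian_{\cyclebasis}\cdot\cyclelaplacian_{\cyclebasis}^{-1}\cdot\d\cyclelaplacian_{\cyclebasis} = \cycleincidencematrix^{\Transpose}\,\big((\d\edgematrix)\,M\,(\d\edgematrix)\big)\,\cycleincidencematrix, \qquad\text{where}\qquad M \defas \cycleincidencematrix\,\cyclelaplacian_{\cyclebasis}^{-1}\,\cycleincidencematrix^{\Transpose}.
\end{align*}
First I would observe that $B \defas (\d\edgematrix)M(\d\edgematrix)$, whose $(e,f)$-entry is $M_{e,f}\,\d a_e\wedge\d a_f$, is skew-symmetric as a matrix over the commutative ring of even-degree differential forms: this uses that $M$ is symmetric (because $\cyclelaplacian_{\cyclebasis}$ is) together with $\d a_e\wedge\d a_f = -\d a_f\wedge\d a_e$. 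So the relevant Pfaffians make sense, and \cref{lem:minor-summ-pf} applies with $A = \cycleincidencematrix^{\Transpose}$ (an $\loopnumber\times\abs{E_{\Graph}}$ integer matrix, $\loopnumber$ even) and $B$ as above, yielding
\begin{align*}
  \Pf\big(\d\cyclelaplacian_{\cyclebasis}\cdot\cyclelaplacian_{\cyclebasis}^{-1}\cdot\d\cyclelaplacian_{\cyclebasis}\big) = \sum_{\substack{U\subseteq E_{\Graph}\\ \abs{U}=\loopnumber}}\det\big(\cycleincidencematrix[U]\big)\cdot\Pf\big(B[U,U]\big),
\end{align*}
where $\det(\cycleincidencematrix^{\Transpose}[-,U]) = \det(\cycleincidencematrix[U])$. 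By \cref{lem:matrix-tree} the only surviving terms come from $U = \overline{T}$ with $T$ a spanning tree.

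Next I would evaluate $\Pf(B[\overline{T},\overline{T}])$ for a fixed spanning tree, writing $\overline{T} = \{f_1 < \dots < f_{\loopnumber}\}$, so that the $(a,b)$-entry of $B[\overline{T},\overline{T}]$ is $M_{f_a,f_b}\,\d a_{f_a}\wedge\d a_{f_b}$. Expanding the defining formula \cref{eq:pff-defn} and using that $2$-forms commute, the product over a permutation $\tau\in\mathfrak{S}_{\loopnumber}$ factors as $\big(\prod_{k}M_{f_{\tau(2k-1)},f_{\tau(2k)}}\big)$ times $\d a_{f_{\tau(1)}}\wedge\dots\wedge\d a_{f_{\tau(\loopnumber)}} = \sgn(\tau)\bigwedge_{e\notin T}\d a_e$; the extra $\sgn(\tau)$ cancels the signature in \cref{eq:pff-defn}, leaving
\begin{align*}
  \Pf\big(B[\overline{T},\overline{T}]\big) = \frac{1}{2^{\loopnumber/2}\,(\loopnumber/2)!}\Big(\sum_{\tau\in\mathfrak{S}_{\loopnumber}}\ \prod_{k=1}^{\loopnumber/2}M_{f_{\tau(2k-1)},f_{\tau(2k)}}\Big)\ \bigwedge_{e\notin T}\d a_e.
\end{align*}
Substituting $M_{e_i,e_j} = (-1)^{i+j}\dodgson_{\Graph}^{e_i,e_j}/\dodgson_{\Graph}$ from \cref{lem:laplacian_inverse_dodgson}, the sign $\prod_k(-1)^{f_{\tau(2k-1)}+f_{\tau(2k)}} = (-1)^{\sum_{e\notin T}e}$ is independent of $\tau$, and the remaining $\tau$-sum of $\prod_k\dodgson_{\Graph}^{f_{\tau(2k-1)},f_{\tau(2k)}}$ is exactly $\sum_{\sigma\in\mathfrak{S}_{\overline{T}}}\dodgson_{\Graph}^{\sigma(f_1),\sigma(f_2)}\cdots\dodgson_{\Graph}^{\sigma(f_{\loopnumber-1}),\sigma(f_{\loopnumber})}$.

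Finally I would reassemble the pieces: summing over spanning trees and dividing by $(-2\pi)^{\loopnumber/2}\sqrt{\dodgson_{\Graph}}$ as in \cref{eq:pff-recall}, the constants combine through $(-2\pi)^{\loopnumber/2}\cdot 2^{\loopnumber/2} = (-\pi)^{\loopnumber/2}\cdot 2^{\loopnumber}$ and $\sqrt{\dodgson_{\Graph}}\cdot\dodgson_{\Graph}^{\loopnumber/2} = \dodgson_{\Graph}^{(\loopnumber+1)/2}$, which reproduces precisely the claimed expression. I expect the main obstacle to be the bookkeeping in the Pfaffian step: one must justify that the minor-summation identity is legitimately used over the commutative ring of even-degree forms and, above all, track how the wedge-ordering of the $\d a_e$ interacts with the signature in \cref{eq:pff-defn}, since it is exactly this cancellation that turns $\Pf(B[\overline{T},\overline{T}])$ into the unsigned hafnian-type sum over $\mathfrak{S}_{\overline{T}}$ that appears in the statement. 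Everything else is a direct substitution of the graph-polynomial identities established earlier in this section.
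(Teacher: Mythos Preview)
Your proposal is correct and follows essentially the same route as the paper: the same factorization $\d\cyclelaplacian_{\cyclebasis}\,\cyclelaplacian_{\cyclebasis}^{-1}\,\d\cyclelaplacian_{\cyclebasis} = \cycleincidencematrix^{\Transpose}\big((\d\edgematrix)\,\cycleincidencematrix\cyclelaplacian_{\cyclebasis}^{-1}\cycleincidencematrix^{\Transpose}\,(\d\edgematrix)\big)\cycleincidencematrix$, the same application of the minor-summation formula together with \cref{lem:matrix-tree} to reduce to spanning trees, the same use of \cref{lem:laplacian_inverse_dodgson} for the entries, and the same observation that the signature in \cref{eq:pfaff-defn} is absorbed by reordering the wedge $\d a_{f_{\tau(1)}}\wedge\cdots\wedge\d a_{f_{\tau(\loopnumber)}}$. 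Your remark that $B$ is skew-symmetric over the commutative ring of even-degree forms is a point the paper leaves implicit but is indeed what makes the application of \cref{lem:minor-summ-pf} legitimate.
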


\begin{proof}
  Notice that we can rewrite
  \begin{align*}
    \d\cyclelaplacian_{\cyclebasis} \cdot \cyclelaplacian_{\cyclebasis}^{-1} \cdot \d\cyclelaplacian_{\cyclebasis}
    = \cyclebasis^{\Transpose} \left( \d\edgematrix \cdot \cyclebasis \cyclelaplacian_{\cyclebasis}^{-1} \cyclebasis^{\Transpose} \cdot \d\edgematrix \right) \cyclebasis.
  \end{align*}
  Letting $X \defas \d\edgematrix \cdot \cyclebasis \cyclelaplacian_{\cyclebasis}^{-1} \cyclebasis^{\Transpose} \cdot \d\edgematrix$,
  the minor summation formula (\cref{lem:minor-summ-pf}) and the matrix tree theorem (\cref{lem:matrix-tree}) immediately expands the Pfaffian of this matrix as a sum over spanning trees of $\Graph$,
  \begin{align}\label{eq:pff-minor-expn}
    \Pf\left(\d\duallaplacian_{\cyclebasis} \cdot \duallaplacian_{\cyclebasis}^{-1} \cdot \d\duallaplacian_{\cyclebasis} \right)
      &= \sum_{U \in \binom{E_{\Graph}}{\loopnumber}} \det\left( \cyclebasis^{\Transpose}[-,U] \right) \Pf\left( X[U,U] \right) \nonumber \\
      &= \sum_{\substack{T \\ \text{spanning tree}}} \det\left( \cyclebasis[\overline{T}] \right) \Pf\left( X[\overline{T},\overline{T}] \right),
  \end{align}
  where the first sum runs over all subsets $U \subseteq E_{\Graph}$ such that $\abs{U} = \loopnumber$ and $\overline{T} = E_{\Graph} \setminus T$.

  Using~\cref{lem:laplacian_inverse_dodgson}, we can identify the entries of the $\abs{E_{\Graph}} \times \abs{E_{\Graph}}$--matrix $X$ as
  \begin{align*}
    X_{i,j} = \left[ \d\edgematrix \cdot \cyclebasis \cyclelaplacian_{\cyclebasis}^{-1} \cyclebasis^{\Transpose} \cdot \d\edgematrix \right]_{i,j}
    = \frac{(-1)^{i+j} \dodgson^{e_i, e_j}_{\Graph}}{\dodgson_{\Graph}} \d a_{e_i} \wedge \d a_{e_j}.
  \end{align*}
  In particular, the square $\loopnumber \times \loopnumber$ submatrix $X[\overline{T},\overline{T}]$ consists of only the rows and columns corresponding to the edges in $\overline{T}$, ordered in increasing order.
  Denote the ordered edges of $\overline{T}$ by $\{f_1 < \cdots < f_{\loopnumber} \}$.
  Then expanding the Pfaffian of $X[\overline{T},\overline{T}]$ via \cref{eq:pfaff-defn} gives
  \begin{align}\label{eq:pff-tree-sub}
    &\Pf\left(X[\overline{T},\overline{T}] \right) \nonumber \\
    &= \frac{1}{2^{\frac{\loopnumber}{2}} \left(\frac{\loopnumber}{2}\right)!} \sum_{\sigma\in\mathfrak{S}_{\overline{T}}} \sgn \sigma \cdot  (-1)^{\sum_{i=1}^{\loopnumber} \sigma(f_i)} \cdot \frac{\dodgson_{\Graph}^{\sigma(f_1),\sigma(f_2)}}{\dodgson_{\Graph}} \cdots \frac{\dodgson_{\Graph}^{\sigma(f_{\loopnumber-1}),\sigma(f_{\loopnumber})}}{\dodgson_{\Graph}} \d a_{\sigma(f_1)} \wedge \cdots \wedge \d a_{\sigma(f_\loopnumber)} \nonumber \\
    &= \frac{1}{2^{\frac{\loopnumber}{2}} \left(\frac{\loopnumber}{2}\right)! \cdot \dodgson_{\Graph}^{\frac{\loopnumber}{2}}} (-1)^{\sum_{e \not\in T} e} \left(\sum_{\sigma\in\mathfrak{S}_{\overline{T}}} \dodgson_{\Graph}^{\sigma(f_1),\sigma(f_2)} \cdots \dodgson_{\Graph}^{\sigma(f_{\loopnumber-1}),\sigma(f_{\loopnumber})} \right) \bigwedge_{e \not\in T} \d a_{e},
  \end{align}
  where the sum runs over all permutations of the edges $\{f_1, \cdots f_\loopnumber\}$ and the second equality follows from noticing that $\sgn\sigma$ is exactly the sign needed to rearrange $\d a_{\sigma(f_1)} \wedge \cdots \wedge \d a_{\sigma(f_\loopnumber)}$ into increasing order.

  As this holds for every spanning tree $T$, plugging \cref{eq:pff-tree-sub} and \cref{eq:pff-minor-expn}
  into \cref{eq:pff-recall} gives the result.
\end{proof}

Note that using~\cref{lem:dual-inverse} and pulling in some prefactors, $\phi_{\Graph}$ can also be written as
\begin{align*}
\phi_{\Graph} = \frac{1}{(-2\pi)^{\loopnumber / 2} \sqrt{\firstsymanzik_{\Graph}}} \sum_{\substack{T \\ \text{spanning tree}}} \det\left(\cycleincidencematrix[\overline{T}]\right) \haf\left(\duallaplacian_{\fundamentalcbasis}^{-1}\right) \bigwedge_{e \notin T} \d a_e ,
\end{align*}
where $\fundamentalcbasis$ is the fundamental cycle basis (\cref{def:fundamentalcyclebasis}) corresponding to the spanning tree $T$ and $\haf(M)$ denotes the \emph{hafnian} of an even $2n \times 2n$ dimensional symmetric matrix $M$ defined as
\begin{align*}
 \haf(M) \defas \frac{1}{2^n n!} \sum_{\sigma \in \mathfrak{S}_n} M_{\sigma(1), \sigma(2)} \cdots M_{\sigma(2n-1), \sigma(2n)}.
\end{align*}
Thus, the hafnian is for the Pfaffian (\cref{eq:pfaff-defn}) what the permanent is for the determinant.

\subsection{Proof of the main result}

With \cref{lem:phi_dodgson} and the constructions that lead to it, we have established all the ingredients needed for the proof of our main result, \cref{thm:main_thm}. Consequences of these results have been discussed already in \cref{sec:consequences}.

\begin{proof}
	By \cref{lem:dbarT}, the form $\alpha_\Graph$ is a sum of terms, each of which is indexed by one of the spanning trees $T\subseteq E_\Graph$, where the term corresponding to $T$ is a certain sum of Dodgson polynomials.
	On the other hand, by \cref{lem:phi_dodgson}, the form $\phi_\Graph$ is given by a very similar sum, up to two differences. Firstly, the overall scaling factor is different. Secondly, in place of $\det \left( \incidencematrix[T] \right) $, we have $(-1)^{\sum_{e\notin T} e }\det \left( \cycleincidencematrix[\overline T] \right)$. By \cref{lem:signs}, these two quantities coincide up to the overall sign $\det\concatm{\cycleincidencematrix}{\pathmatrix}$ which is independent of $T$.
\end{proof}

\printbibliography

\end{document}